\documentclass[11pt]{article}%
\usepackage{amssymb,amsmath,xcolor,graphicx,xspace,colortbl, rotating}
\usepackage{amsfonts}
\usepackage{amsthm}
\usepackage{caption}
\usepackage[mag=1000,portrait]{geometry}
\usepackage{indentfirst}
\usepackage[onehalfspacing]{setspace}
\usepackage{subcaption}
\usepackage{booktabs}
\usepackage{amsmath}
\usepackage{amssymb}
\usepackage{graphicx}%
\usepackage[justification=centering]{caption}
\providecommand{\U}[1]{\protect\rule{.1in}{.1in}}
\DeclareGraphicsExtensions{.pdf,.eps,.ps,.png,.jpg,.jpeg}
\providecommand{\U}[1]{\protect\rule{.1in}{.1in}}
\newtheorem{theorem}{Theorem}

\newtheorem{assumption}{Assumption}
\newtheorem{corollary}{Corollary}

\newtheorem{remark}{Remark}

\DeclareMathOperator*{\argmin}{arg\,min}

\begin{document}

\title{Efficient GMM and Weighting Matrix under Misspecification} 
\date{\today }

\author{Byunghoon Kang\footnote{Department of Economics,  Lancaster University Management School. Email: b.kang1@lancaster.ac.uk. } }
\maketitle	
	
\begin{abstract}
This paper develops efficient GMM estimation when the moment conditions are misspecified. We observe that the influence function of the standard GMM estimator under misspecification depends on both the original moment conditions and their Jacobian, motivating a new class of estimators based on augmented moment conditions with recentering. The standard GMM estimator is a special case within this class, and generally suboptimal. By optimally weighting the augmented system, we obtain a misspecification-efficient (ME) estimator with the smallest asymptotic variance for the same GMM pseudo-true value. In linear models, the asymptotic variance of ME estimator reduces to the textbook efficient-GMM variance formula $(G'W^{*}G)^{-1}$, where $W^{*}$ is the inverse of the variance of residualized moments after projection on the Jacobian $G$. We consider a feasible double-recentered bootstrap estimator, which can be considered as a misspecification-robust and efficient version of Hall and Horowitz (1996) recentered bootstrap GMM estimator, and also consider a split-sample ME estimator. Finally, we  establish uniform local asymptotic minimax  bounds over a class of weighting matrices. We illustrate the proposed methods in simulation and empirical examples.\\

\textit{Keywords: } Generalized method of moments, misspecification, pseudo-true value, semiparametric efficiency, augmented moment conditions, bootstrap, instrumental variables	

\textit{JEL classification: } C12, C13, C14, C26.
\end{abstract}
	
\section{Introduction}

The generalized method of moments (GMM) offers a robust framework for estimation and inference, and is widely used  in applied econometrics. When the model is over-identified, where the number of available moment conditions exceeds the number of parameters, the GMM estimator depends on the choice of weight matrix. Under the assumption of the correct model specification when there exists a unique true parameter that satisfies the population moment conditions exactly equal to zero, the efficient weighting matrix is the inverse of the variance-covariance matrix of the moments. This choice achieves the smallest asymptotic variance among standard GMM estimators (Hansen (1982), Chamberlain (1987)), and researchers often use this efficient weight in practice, for example, an inverse of the variance of the instruments in the instrumental variables (IV) setup, or a Heteroskedasticity and Autocorrelation Consistent (HAC) matrix estimator.

However, economic models are inherently structural approximations, and the overidentifying restriction test is often rejected due to various forms of misspecification; candidate instruments may be invalid, lag structures or functional forms are misspecified or treatment effect heterogeneity makes no parameter satisfies all moment equations exactly. The presence of misspecification has significant implications for the properties of estimators, potentially affecting asymptotic distribution, and consequently, the validity of standard inference procedures. Recognizing this challenge, a growing literature studies the inference for the pseudo-true value, defined as the population minimizer of the objective functions, including Hall and Inoue (2003), M\"{u}ller (2013), Hansen and Lee (2021), and Andrews and Kwon (2024) among others.\footnote{Defining the pseudo-true value as minimizers of population criterion has a long tradition in econometrics, including White (1982), Hall and Inoue (2003), M\"{u}ller (2013), Hansen and Lee (2021), etc. In the asset pricing literature, such as Kan and Robotti (2009) and Gospodinov, Kan, and Robotti (2014), the GMM pseudo-true value can be defined as a minimizer of the distance metric of Hansen and Jagannathan (1997).} Under misspecification, the asymptotic variance of the GMM estimator has different forms than the standard GMM variance estimator, and ignoring this can lead to inconsistent standard error estimation and asymptotically invalid confidence intervals for the pseudo-true values, which was also highlighted in Andrews, Chen and Techhio (2025).

More importantly, misspecification fundamentally changes the role of the weighting matrix, and has different implications than the correct specification setup. The choice of the weighting matrix no longer merely affects the statistical efficiency of the estimator, but it also affects what is being estimated; different weighting matrices corresponds to different target parameters.\footnote{We are well aware that pseudo-true values, in general, may be possibly different from quantities of economic interest (M\"{u}ller (2013), Andrews, Barnhard, and Carlson (2024)). The main focus of this paper is not on the identification and the causal interpretation of the widely used GMM/TSLS estimators, but on the valid and improved inference for the same GMM/TSLS estimand under moment misspecification. We refer the readers to other influential papers (and references therein) such as Koles\'{a}r (2013), Mogstad et al. (2021), Blandhol et al. (2025),  and Sloczy\'{n}ski (2024) for advances in this direction in the linear IV model under treatment effect heterogeneity. See also Andrews, Barahona, et al. (2025) for the potentially misspecified general structural model.} This dual role of the weighting matrix under misspecification raises a number of questions that the standard GMM literature does not directly address. Is the ``standard" optimal weight, defined under correct specification, still efficient under misspecification? Does any other class of estimators exist with a smaller asymptotic variance than the standard GMM under misspecification for the same target? How should we think about asymptotic efficiency across different weighting matrices when each weighting matrix targets a different estimand? This paper addresses these challenging questions and provides theoretical and some empirical guidance for researchers using overidentified GMM models. 

First of all, by focusing on the fixed weighting matrix $W$ and the associated pseudo-true parameter $\theta_W$, we consider the class of M-estimator where the original moment functions and the Jacobian moments are ``stacked" to form a vector of augmented moment conditions with recentering. This is based on the novel observation that  the ``standard" GMM estimators under misspecification can be expressed as an influence function of the augmented moments. The proposed M-estimator is a general class of estimators that include the ``standard" misspecified-GMM as a special case. We then define the ``efficient estimator" within this class, which we label as a ``misspecification-efficient" (ME) estimator, that has the smallest asymptotic variance that is consistent for the \textit{same} GMM pseudo-true value. By applying the optimal weighting derived from the inverse of the variance matrix of the augmented moments, the proposed ME estimator captures the correlation between the moments and the Jacobian and extracts additional identifying information from the moment misspecification; while the standard misspecified-GMM is suboptimal with particular linear combination of the available moments dictated by the usual GMM first-order condition.

The ME estimator is not directly feasible because its construction requires the population recentering of the original and Jacobian moments. However, our asymptotic result of the oracle-ME estimator is constructive in the sense that the variance of ME estimator can be easily estimated (without knowing these unknown quantities) and reported as a measure of efficiency frontier; showing researchers what could be gained by imposing an additional information from the misspecification and identification strength.  Furthermore, our results provide new insights into the conventional (correct-specification) optimal weighting matrix and variance of the efficient GMM. Under the linear model, the asymptotic variance of the ME estimator simplifies to $ V_{ME}(W) = (G^{\prime} \Sigma_{11, 2}^{-1}G)^{-1}$, where $G$ is Jacobian, $\Sigma_{11, 2} = \Sigma_{11} - \Sigma_{12} \Sigma_{22}^{-1} \Sigma_{21} $, and $\Sigma = (\begin{smallmatrix} \Sigma_{11} & \Sigma_{12} \\ \Sigma_{21}&\Sigma_{22}\end{smallmatrix})$ is the variance of the augmented moments. The weighting matrix $W^{*} = \Sigma_{11, 2}^{-1}$ is the inverse of the variance of residualized moment after projection on the Jacobian, while the conventional (correct-specification) efficient weighting matrix $W_0 = \Sigma_{11}^{-1}$ is the inverse of the variance of the original moments.  

This variance expression in linear models has two notable features. First, $V_{ME}(W)$ is invariant to $W$; under misspecification with linear moments, different $W$ targets different $\theta_W$, but does not affect how precisely it can be estimated. Second, it coincides with the textbook ``efficient GMM" formula $(G'W^{*}G)^{-1}$ with $ W^{*} = \Sigma_{11, 2}^{-1}$, and our result provide a different angle to the well-known downward bias of the efficient-GMM standard errors. In practice, researchers have found that the standard error of the two-step efficient GMM is often severely downward biased, motivating the finite-sample correction of Windmeijer (2005). We show that, with the choice of $W^* = \Sigma_{11,2}^{-1}$ in the linear GMM, the ``conventional" efficient GMM formula will be valid for the oracle ME estimator, but it is too ``small" for the standard GMM estimator under misspecification, even in large samples. This further echoes the use of the misspecification-robust standard errors (SE), instead of the conventional formula (Andrews, Chen and Techhio (2025)).\footnote{Hwang, Kang, and Lee (2022) further show that, in linear models, the misspecification-robust SE also works as a finite-sample corrections even under correct specification.} We also recommend to report the efficiency frontier $V_{ME}(W) = (G^{\prime} \Sigma_{11, 2}^{-1}G)^{-1}$  alongside with the misspecification-robust standard errors as a transparent efficiency measure, without need to commit to a specific $W$ and specific target parameter $\theta_W$; perhaps it was reported anyway with an incorrect labelling of ``conventional-correct specification" SE of the two-step efficient GMM when $ \Sigma_{11, 2}\approx\Sigma_{11}$. In general nonlinear model, however, above arguments fail to hold, and $V_{ME}(W)$ varies with $W$ and no longer have a textbook formula.

We propose three feasible procedures for the oracle-ME estimator. The first is a bootstrap ME-GMM estimator based on the GMM criterion function that recenters both the original moments and the Jacobian at their sample analogs evaluated at a preliminary GMM estimate, with the efficient choice  $W^{*} = \Sigma_{11, 2}^{-1}$. We show that the asymptotic distribution of the bootstrap ME-GMM estimator is equal to those of the oracle ME estimator, which allows us to approximate the sampling variation of the oracle-ME estimator and the efficiency bounds considered in this paper. The proposed bootstrap estimator can be considered as a misspecification-robust and misspecification-efficient version of Hall and Horowitz (1996) recentered bootstrap GMM estimator, which is only valid under correct specification. We also consider a closely related  ``double recentered" (DR)  GMM bootstrap that jointly perturbs the moment conditions and the Jacobian similarly to the FOC of the misspecified GMM, and show that the bootstrap distribution mimics the limiting distribution of the standard GMM estimators regardless of model specification. Under correct specification the additional Jacobian recentering correction is higher-order and the DR bootstrap nests the original Hall and Horowitz (1996) procedure as a special case; under misspecification it correctly approximates the Jacobian variation term that the Hall and Horowitz (1996) bootstrap omits. Finally, we also develop a repeated sample-splitting ME estimator, in the spirit of Angrist and Krueger (1995), that uses a held-out half of the sample to estimate the recentering quantities. In the linear model with $W = \Sigma_{11,2}^{-1}$, the moment recentering term drops out by the first-order condition, so only the Jacobian recentering term needs to be estimated.

The recentering has been considered important when applying the bootstrap in the over-identified GMM model to achieve higher-order improvements under correct specification (Hall and Horowitz (1996) and Brown and Newey (2002)). In the linear IV example, the proposed bootstrap estimators can be seen as a linear combinations of the HH recentered GMM/TSLS estimator and the Jacobian estimator, where the recentering of the Jacobian estimator ensure that the bootstrap expected Jacobian matrix evaluated the estimator is zero.  Lee (2014) argues that recentering in the standard nonparametric GMM bootstrap can be detrimental and is not even needed if we use the analytic misspecification-robust variance estimator. In our paper, however, we clarify this argument that additional considerations of recentered Jacobian can also achieve asymptotic validity under misspecification. This ``double recentering" has been also considered in weak-identification literature to accurately mimic the behaviour of the original Jacobian under weak identification (e.g., Dovonon and Goncalves (2017), Lee and Liao (2018)), yet under correct specification.

In this paper, we view the choice of $W$ as a different candidate models/designs, and we do not evaluate the performances of different weighting matrix $W$ and do not aim to choose the optimal $W$ (and thus $\theta_W$) under certain statistical criteria.  The weighting matrix $W$ and the candidate class of $\mathcal{W}$  can be possibly chosen by the researchers a priori, but in general, it may be difficult for researchers view one candidate pseudo-true value better than another, even if it is based on statistically well-defined criteria. For example, we cannot tell that a more precisely estimated LATE (local average treatment effect) for one subpopulation is better than a noisier LATE for  different subpopulation; they may answer different causal questions. 

In this paper,  we develop uniform semiparametric efficiency bounds over $W \in \mathcal{W}$ with the augmented system, which exhausts all additional information from the direction of misspecification and Jacobian, taking into account the correlation between them. We want to highlight that the notion of semiparametric efficiency would be defined differently under (global) misspecification. The pseudo-true value $\theta_W$ changes with $W$ under misspecification, but the semiparametric efficiency framework typically assumes a true-parameter and the efficiency bound is computed at the well-defined single target parameter. We first focus on fixed $W$ and consider the semiparametric efficiency bounds for $\theta_W$, and consider uniform bounds over $W\in \mathcal{W}$. 

Our uniform local asymptotic minimax bounds will be not only useful for a researcher who chooses $W$ based on target a specific economic parameter and consider only $\theta_W$, but also useful for a researcher who is completely agnostic to the choice of $W \in \mathcal{W}$. Many applied researchers may just pick one $W$ (e.g., TSLS), however, applied researcher who runs GMM, already considers choosing $W$ over a class of different weighting matrices $\mathcal{W}$ routinely; one-step $W = I$ or $(Z'Z)^{-1}$, two-step, and iterated until convergence. Under misspecification each iteration changes the pseudo-true value. So when results differ across one-step, two-step, and iterated GMM, it can be a direct evidence of sensitivity to $W$.\footnote{Another example is Kan and Robotti (2009) and Gospodinov, Kan, and Robotti (2014) who explicitly study inference under misspecification using Hansen-Jagannathan (1997)'s distance measure, and $W$ denotes  different asset portfolios in that context.}

We consider Monte Carlo simulations and three empirical applications to illustrate the proposed methods. The simulation results confirm that the oracle ME is meaningfully tighter, and the DR percentile CI delivers valid coverage, with similar or slightly shorter lengths than the misspecification-robust CI. Repeated sample splitting estimator also performs well, with slightly over coverage and larger lengths as expected. In both of Card (1995) and Angrist and Krueger (1991) returns to schooling examples, the ME efficiency frontier is roughly 10\%-20\% smaller than the conventional and misspecification-robust standard errors when using efficient weighting matrix $W^*= \Sigma_{11,2}^{-1}$ or $W_0 = \Sigma_{11}^{-1}$.  From extensive simulations, and all of the three applications, we found that the ME efficiency bound is most informative when (i) $W$ is efficient weight (either $W_0$ or $W^*$) and far from $W=I$ or other naive choices; (ii) identification strength is moderate, so that Jacobian sampling variation is non-negligible. The identity matrix ($W=I$) is a poor default in our simulations and empirical examples, as the estimates are very different from using efficient weightings and the gap between conventional/robust SE with the oracle ME bound is large (20-40\% in Angrist and Krueger (1991) example). Under $W^*$ with moderate identification strength, the gap is typically below 10\%, and the misspecification-robust SE is close to the efficiency frontier.  Overall, we find that the ME efficiency bounds can be useful, and the proposed DR bootstrap and sample-split procedures provide reasonable alternatives  to the standard GMM with analytic misspecification-robust standard errors.

Our paper complements a growing literature on sensitivity analysis 
and robust inference under misspecification, such as Andrew, Gentzkow and Shapiro (2017), Armstrong and Koles\'{a}r (2021), Bonhomme and Weidner (2022), and Christensen and Connault (2023),  among many others. More recent work focuses on the relationship between the pseudo-true value and the true value under misspecification, e.g., Andrews, Barnhard, and  Carlson (2024), Andrews, Barahona et al. (2025), and also see Andrews, Chen, and Techhio (2025) for a summary of recent developments in the literature with  some constructive recommendations for practitioners. While these papers focus on bias, sensitivity, or optimal inference, this paper analyse efficient estimation in the GMM setup under misspecification, and we provide a feasible  efficiency measure that researchers can report alongside robust-standard errors.

There are some limitations to our paper. First,  we do not consider two-step and the iterated GMM estimators, and we focus only on the one-step GMM estimator. The textbook motivation of going beyond the one-step GMM estimator is to achieve efficiency, and existing papers (e.g., Hall and Inoue (2003), Hwang, Kang and Lee (2022), and Hansen and Lee (2021)) uses the ``standard" optimal weighting-matrix $W_0$, which we show in this paper suboptimal under misspecification. It is not clear why we use the ``standard" optimal weighting matrix in each iteration under misspecification. It would be interesting to see the behavior of the two-step/iterated GMM using other weighting matrix schemes, for example, $W^{*} = \Sigma_{11,2}^{-1}(\theta)$ considered in our paper. 

Second, our results are constructed under a standard full rank assumption of the Jacobian, and we do not consider the weak-identification settings, where the behavior of the GMM estimators may be nonstandard (e.g., Stock and Wright (2000), Andrews and Cheng (2012), Dovonon and Renault (2013)). While, in this paper, using a Jacobian of the moment condition comes from the influence function of the GMM under misspecification,  there are some papers utilize a Jacobian estimation in the weak identification or the lack of first-order local identification in GMM literature. Kleibergen (2005) uses Jacobian estimator that is asymptotically uncorrelated with the original sample moments to develop test statistic that is valid without assuming identification. Lee and Liao (2018) uses zero Jacobian moment conditions, which holds when the rank condition fails to hold, as extra moment conditions in addition to the original moment conditions. Extension of our approach that allow for weak identification would be important, but more challenging. We leave these for future research.

The remainder of the paper is organized as follows. Section \ref{sec:setup} introduces misspecified GMM setup. Section \ref{sec:AM} develops the augmented moment framework and the ME estimator, and derives its asymptotic distribution. Section \ref{sec:bootstrap} analyses the feasible bootstrap estimators - ME-GMM and the double recentered (DR) bootstrap procedure. Section \ref{sec:splitsample} considers split-sample ME estimator as an alternative. Section \ref{sec:semi_efficiency} provides the uniform semiparametric efficiency bounds over classes of weighting matrices. Section \ref{sec:simulation} reports the Monte Carlo evidence and Section \ref{sec:empirical} shows illustrative empirical examples for the proposed methods.  Section \ref{sec:conclusion} concludes. Appendix A includes all proofs of the results in the main paper. 
Appendix B considers worst-case variance bounds for sensitivity analysis over misspecification and identification strength in the spirit of Conley, Hansen, and Rossi (2012).  Appendix C reports additional simulations. Appendix D contains the dynamic-panel application.

\subsection{Notation}\label{sec:notation}
 \noindent $||A ||$ denotes the spectral norm. $A \geq B$ denotes that the $A-B$ is positive semi-definite for real symmetric matrix A and B. Let $o_p(\cdot)$ and $O_p(\cdot)$ denote the usual stochastic order symbols, and $\overset{p}{\longrightarrow}$ and $ \overset{d}{\longrightarrow}$ denote convergence in probability, convergence in distribution, respectively. Superscript $\phantom{}^{*}$ denotes a probability or moment computed under the bootstrap distribution conditional on the original data set. Let $\xrightarrow{p^{*}}$ and $\xrightarrow{d^{*}}$ denote the convergence in probability, in probability, and the convergence in distribution in probability, respectively. In addition, we write $\xi_{n}=o_{p^{*}}(1)$ if $\xi_{n}\xrightarrow{p^{*}}0$ and $\xi_{n}=O_{p^{*}}(1)$ if $\xi_{n}$ is bounded in probability, in probability.

We introduce further notations on a neighborhood of the probability distribution and the loss function $l(\cdot)$. Let $\Pi$ be the set of all probability measures $F$ defined on the Borel sets in $\mathbb{R}^m$ with the support of a probability measure $\mathcal{X}$. Given an $F \in \Pi$, the basic neighborhoods of $F$ are sets of the form:
\[
\left\{Q \in \Pi : \left|\int f_{j}dQ - \int f_{j}dF\right| < \epsilon_{j}, j=1,\dots,q\right\}
\]
where $\epsilon_{j} > 0$, $q$ is some positive integer, and $f_j: \mathcal{X} \rightarrow \mathbb{R}$ are measurable functions such that $\int |f_{j}| dF < \infty$. An arbitrary neighborhood of $F$ is formed by taking unions of sets of this form. We next define a class of loss function $l \in \mathcal{L}$, if $l(\cdot) \in \mathcal{L}$ satisfies the following conditions, if for all $u, v \in \mathbb{R}$: (i) $l(u)=l(|u|)$, (ii) $|u| \le |v|$ implies $l(u) \le l(v)$, (iii) $\int_{-\infty}^{\infty}l(u)\exp\left(-\frac{1}{2}\lambda u^{2}\right)du < \infty$ for $\lambda > 0$, (iv) $l(0)=0$. Many standard loss function satisfy these conditions, for example, quadratic $l(u) = u^2$, absolute function $l(u) = |u|$, and polynomial loss $l(u) = |u|^p$ for $ p > 0$.

%moments are stacked to form, then the estimator from equation () can be viewed as a GMM estimator with the augmented moments. 

%The class of GMM estimators is sufficiently general to include two-step estimators where moment functions from the first step and the second step can be ``stacked" to form a vector of moment conditions.

%A recent paper by Andrews, Barnhard, and Carlson (2024) considers a constructive recommendation when the pseudo-true values differ from quantities of economic interest using the Bayesian framework.

%In the heterogeneous treatment effects literature, Angrist and Imbens (1995) showed that the two-stage least squares (TSLS) estimand is the average causal response (ACR), which is a weighted average of the local average treatment effects (LATEs). As shown by Lee (2018), the moment condition model underlying TSLS is misspecified if the instruments identify the instrument-specific LATEs. Thus, the ACR is a well-accepted GMM pseudo-true value.

\section{The Setup}\label{sec:setup}

Suppose that we have data  on observations $\{ X_i\}_{i=1}^{n}$ randomly drawn from the unknown probability distribution $F_0$, with a data vector $X$. Let $g(X, \theta)$ be an $m\times 1$ vector of moment functions and $\theta$ be a  $p\times 1$ parameter vector. The model is based on the moment conditions, and a popular estimator of $\theta$ in this setup is the generalized method of moments (GMM) estimator of Hansen (1982), 
\begin{align}\label{gmm_definition}
\widehat{\theta}_{GMM} (W)  &= \argmin_{\theta} g_n(\theta)^{\prime} W g_n(\theta)
\end{align}
where $g_n(\theta) = n^{-1}\sum_{i=1}^{n} g (X_i, \theta)$ and  $W$ is the weighting matrix. Under the regularity conditions, the GMM estimator converges to the pseudo-true value $ \theta_{W} $, which is defined as a minimizer of the population objective function
\begin{align}\label{pseudo_definition}
 \theta_{W} = \argmin_{\theta} g(\theta)^{\prime} W g(\theta) 
 \end{align}
where $g(\theta) = E[g_n(\theta)]$. Assuming global identification, $ \theta_{W} $ satisfies the following just-identified equation from the FOC of \eqref{pseudo_definition};
\begin{align}\label{pseudo_definition2}
G(\theta_{W})'W g(\theta_{W}) = 0
\end{align}
where $G(\theta) = E [\frac{\partial g(X_i, \theta)}{\partial\theta^{\prime}}]$ is a population Jacobian. When the model is correctly specified, i.e., there exists unique $\theta_0$ satisfies the original moment equations
\begin{align}\label{correctspe_definition}
g(\theta_0) = E[g_n (\theta_0)] = 0, 
\end{align}
the pseudo-true value $\theta_W= \theta_0$ for all $W$, and it does not depend on $W$.  Even if the moment condition is (globally) misspecified, i.e.,  $E[g_n (\theta)] \neq0,~\forall\theta \in \Theta$, the FOC condition \eqref{pseudo_definition2} still holds. However, the GMM pseudo-true values generally depend on the choice of $W$ under misspecification when the model is overidentified ($m>p$), as it satisfies the linear combinations of non-zero  population moments $g(\theta)$ as in \eqref{pseudo_definition2}, and that the varying $W$ changes the parameter being estimated.  

In this paper, our object of interest is on providing valid inference procedures for the pseudo-true value $\theta_W$.  It is difficult to give general economic interpretation to the pseudo-true parameter values, and instead, researchers typically analyze and interpret them case-by-case. For example, in the linear IV model, moment misspecification can arise due to the treatment effect heterogeneity, and it is common to focus attention on the TSLS estimand, which corresponds to $\theta_W$ with $W = E[Z_i Z_i^{\prime}]^{-1}$ and the instrumental variables $Z_i$. In this setup, $\theta_W$ can be interpreted as a weighted LATE under appropriate assumptions, see Angrist and Imbens (1995).\\

The novelty of this paper is to explicitly find the influence function of the ``standard" GMM estimators under misspecification as a set of augmented moment conditions which include (i) an original moment restriction and (ii) a Jacobian of the moment function, (iii) with recentering. This result, while implicit in the form of the asymptotic variance of the misspecified-GMM in the literature,  further suggests a new class of estimators based on the augmented moments. We show in the following sections that the proposed class of estimators include the ``standard" GMM as a special case, and we can consider more efficient estimator within the class that has the smallest asymptotic variance that are consistent for the \textit{same} pseudo-true value $ \theta_{W}$.

To see this, we first observe that the GMM estimator has the following expression under misspecification with  the standard regularity conditions (see the proof of Theorem \ref{thm:dist_AM} for details), 
\begin{equation}
\sqrt{n} (\widehat{\theta}_{GMM} (W) - \theta_{W} ) =  - (A (W, \theta_{W} ) \Gamma (\theta_{W} ))^{-1} A (W, \theta_{W} )\frac{1}{\sqrt{n}} \sum_{i=1}^{n} \Big(\psi(X_i, \theta_{W} ) - E[\psi(X_i, \theta_{W} )]\Big) + o_p(1), \label{eqn_influence}
\end{equation}
where 
\begin{eqnarray}
\underbrace{\psi(X_i, \theta)}_{m(p+1) \times 1}&=& \begin{bmatrix}
g(X_i, \theta)  \\
vec[G(X_i, \theta)^{\prime}]\nonumber
\end{bmatrix}, \quad G(X_{i}, \theta)=\frac{\partial g(X_i, \theta)}{\partial\theta^{\prime}}, \quad F(X_i, \theta) =\frac{\partial vec [G(X_i, \theta)^{\prime}]}{\partial\theta^{\prime}}, \\
 \underbrace{\Gamma (\theta)}_{m(p+1) \times p}  &=&  E [\frac{\partial \psi(X_i, \theta) }{\partial \theta^{\prime}} ] = \begin{bmatrix}
G(\theta)\\
F(\theta)
\end{bmatrix}  \label{eqn:AGamma}, \qquad %=  E\begin{bmatrix} G(X_i, \theta)\\ F(X_i, \theta)\end{bmatrix}
\underbrace{A (W, \theta)}_{p \times m(p+1)} = [\underbrace{G(\theta)^{\prime} W}_{p \times m} \quad \underbrace{g(\theta)^{\prime} W \otimes I_p}_{p \times mp} ].
\end{eqnarray}

Based on \eqref{eqn_influence}, we can derive an asymptotic distribution of the GMM estimators under misspecification,
\begin{align}
\sqrt{n}(\widehat{\theta}_{GMM} (W)-\theta_{W} ) \overset{d}{\longrightarrow}  N(0, V_{GMM} (W, \theta_{W})) \label{asymptotic_dist}
\end{align}
where 
\[
V_{GMM}(W, \theta) = (A (W , \theta)\Gamma (\theta ))^{-1} A (W , \theta) \Sigma (\theta ) A (W , \theta)^{\prime} ( ( A (W , \theta)\Gamma (\theta))^{\prime})^{-1}, 
\]
and $\Sigma (\theta) =E [ \psi_{i}(X_i, \theta) \psi_{i}(X_i, \theta)^{\prime}]  -E [ \psi_{i}(X_i, \theta)] E [ \psi_{i}(X_i, \theta)]^{\prime} $. The asymptotic distribution results of the misspecified-GMM in \eqref{asymptotic_dist} are well-known in the literature; see Imbens (1997), Hall and Inoue (2003),  Hansen and Lee (2021) for derivations of same result under misspecification for one-step/two-step GMM, and the iterated GMM estimator. 

There exist a few papers in the literature, which consider over-identified GMM models as a just-identified GMM estimator with an augmented parameter vectors. One can generally interpret the overidentified GMM models as a just-identified GMM (Newey and McFadden (1994)), and Imbens (1997) finds the connection with the misspecification-robust variance from the just-identified GMM estimator with an augmented parameter vector. However, they did not consider the explicit asymptotic variance form of the misspecified GMM estimator and the choice of weighting matrix, which are the main focus of this paper.

\section{Efficient Estimator under Misspecification}\label{sec:AM}

By inspection of the influence function in \eqref{eqn_influence}, we observe that the misspecified-GMM estimator is related to the M-estimator where the original moment functions and the Jacobian moments are ``stacked" to form a vector of augmented moment conditions.  We consider the following recentered moment function with augmented moments, $\widetilde{\psi}(X_i, \theta):=\widetilde{\psi}(X_i, \theta; W)  = \psi(X_i, \theta) - E[\psi(X_i, \theta_{W})]$, where $\psi(X_i, \theta)$ is defined in \eqref{eqn:AGamma}, and  $\theta_W$ is  the GMM pseudo-true value in \eqref{pseudo_definition} for fixed $W$.  By construction, the population augmented moments $E[\widetilde{\psi}(X_i,  \theta)] $ are zero when $\theta$ is evaluated at $\theta_{W}$. 

We can then transform this to the just-identified moment  $\Lambda E[\widetilde{\psi}(X_i, \theta_{W})] = 0$ using the transform matrix $\Lambda  = [\Lambda_1 \quad \Lambda_2]$, where $\Lambda_1 \neq 0, \Lambda_2 \neq 0$ are $p \times m$, and $p\times mp$ matrix, respectively. Then, we consider the following general class of \textit{M-estimator} (or Z-estimator), $\widetilde{\theta} (\Lambda)$, to denote an M-estimator based on the sample analogs of augmented moment conditions
\begin{equation}
\Lambda \widetilde{\psi}_n(\widetilde{\theta} (\Lambda)) = 0 \label{eqn_mestimation_sample}
\end{equation}
where $\widetilde{\psi}_n (\theta) = \frac{1}{n} \sum_{i=1}^{n} \widetilde{\psi}(X_i, \theta)$.  Alternative choices of $\Lambda$ are associated with alternative estimators that are consistent for $\theta_W$. This estimator is not feasible without estimation of $E[\psi(X_i, \theta_{W})]$, but we postpone the discussion to focus first on the efficient choice of $\Lambda$ and the asymptotic distribution of the oracle estimator.

\subsection{Asymptotic Distribution of the  M-Estimator with Augmented Moments}\label{subsec:AM_distribution}

 The following assumptions are used to show the asymptotic distribution of $\widetilde{\theta} (\Lambda)$.

\begin{assumption}\label{assump:AM}
\begin{enumerate}
\item  The observations $ \{ X_{i} \}_{i=1}^{n}$ are independent and identically distributed (i.i.d.).
\item  For $W >0$, $\theta_{W}  =  \argmin_{\theta} g(\theta)^{\prime} W g(\theta)$ is unique, and $\theta_{W}$ is in the interior of the compact parameter space $\Theta \subset \mathcal{R}^{p}$, where $g(\theta) \neq 0$ for  all $\theta \in \Theta$. 

\item $g(X, \theta)$ and $G(X, \theta)$ are continuous at each $\theta \in \Theta$ with probability one, and $E[\sup_{\theta} || g(X_i, \theta)||] < \infty, E[\sup_{\theta} || G(X_i, \theta)||] < \infty, E[\sup_{\theta} || F(X_i, \theta)||] < \infty$.

\item $g(X, \theta)$ is twice continuously differentiable in $\theta$, and $E[||g(X, \theta_W) ||^{2}] <\infty$, $E[||G(X, \theta_W) ||^{2}] <\infty$. %with probability approaching one?
\item  $\Lambda E[\widetilde{\psi}(X_i,  \theta)] \neq 0 $ for $\theta \neq \theta_W$. $\Lambda \Gamma (\theta_{W})$ exists and is nonsingular.

\end{enumerate}
\end{assumption}

Assumption \ref{assump:AM}.3 ensures uniform convergence of the $\widetilde{\psi}_n (\theta)$ and the population objective function $E[\widetilde{\psi}_n (\theta)]$ to show consistency. Assumption \ref{assump:AM}.4 is for the asymptotic normality of the normalized sum of $\widetilde{\psi}_n (\theta_{W})$.  The first part of Assumption \ref{assump:AM}.5 is an identification condition. Necessary and sufficient condition for (local) identification is that $\Lambda \Gamma(\theta_W) = \Lambda_1 G(\theta_W) + \Lambda_2 F(\theta_W)$ has a full column rank.  This is also related to the second-order identification condition because even if the Jacobian $G(\theta_W)$ is zero (rank deficient), the model can still be identified by nonzero $\Lambda_2 F(\theta_W)$. More primitive conditions on the identification assumption can be considered. For example, when the moment functions are linear in $\theta$, this reduces to the condition that $\Lambda \Gamma (\theta_W) = \Lambda_1 G$ has a full-column rank, where  $G = E[G(X_i)]$ does not depend on $\theta$. When, $\Lambda_1$ takes the form of $\Lambda_1 = G'W$ with $W>0$, it reduces to the standard GMM rank condition, which is implied by the global identification of $\theta_W$ in Assumption \ref{assump:AM}.2. \\

\begin{theorem}\label{thm:dist_AM}
Under Assumption \ref{assump:AM},
\begin{align}
\sqrt{n}(\widetilde{\theta} (\Lambda) - \theta_{W} ) \overset{d}{\longrightarrow}  N(0, V_{M}(\Lambda, \theta_{W})).
\end{align}
where 
\[
V_{M} (\Lambda, \theta_W) = (\Lambda \Gamma (\theta_W))^{-1} \Lambda \Sigma (\theta_{W} ) \Lambda ^{\prime} ( (\Lambda \Gamma (\theta_W))^{\prime})^{-1}
\]
and $\Sigma (\theta) =E [ \psi_{i}(X_i, \theta) \psi_{i}(X_i, \theta)^{\prime}]  -E [ \psi_{i}(X_i, \theta)] E [ \psi_{i}(X_i, \theta)]^{\prime} $.\\
\end{theorem}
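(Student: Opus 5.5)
The argument follows the standard two-step route for just-identified Z-estimators, as in Newey and McFadden (1994): first consistency, then a mean-value expansion of the estimating equation \eqref{eqn_mestimation_sample} around $\theta_W$. Throughout, $\Lambda$ and the recentering constant $\mu_W := E[\psi(X_i,\theta_W)]$ are fixed and nonrandom. The oracle estimator therefore solves an exactly identified system of $p$ equations, $\Lambda\widetilde{\psi}_n(\theta)=\Lambda\psi_n(\theta)-\Lambda\mu_W=0$.

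\textbf{Consistency.} By Assumption \ref{assump:AM}.1 and \ref{assump:AM}.3, $\psi(X,\theta)=(g(X,\theta)', vec[G(X,\theta)']')'$ is continuous in $\theta$ with probability one and has an integrable envelope. Since $\Theta$ is compact, the uniform law of large numbers gives $\sup_{\theta\in\Theta}\|\psi_n(\theta)-E[\psi(X_i,\theta)]\|\overset{p}{\longrightarrow}0$. The same then holds for $\Lambda\widetilde{\psi}_n(\theta)$, whose limit $\Lambda E[\widetilde{\psi}(X_i,\theta)]$ is continuous. I will work with the criterion $Q_n(\theta)=\|\Lambda\widetilde{\psi}_n(\theta)\|^2$. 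It converges uniformly to $Q(\theta)=\|\Lambda E[\widetilde{\psi}(X_i,\theta)]\|^2$. By construction $Q(\theta_W)=0$, and by the first part of Assumption \ref{assump:AM}.5 we have $Q(\theta)>0$ for $\theta\neq\theta_W$. The standard extremum-estimator consistency theorem then yields $\widetilde{\theta}(\Lambda)\overset{p}{\longrightarrow}\theta_W$, provided the solution of \eqref{eqn_mestimation_sample} exists with probability approaching one, or is taken as an approximate minimizer of $Q_n$.

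\textbf{Asymptotic normality.} Since $\theta_W$ is interior (Assumption \ref{assump:AM}.2), $\widetilde{\theta}(\Lambda)$ is interior with probability approaching one. Twice differentiability of $g$ (Assumption \ref{assump:AM}.4) makes $\psi$ once continuously differentiable, with derivative $\partial\psi/\partial\theta'=[G(X,\theta)',F(X,\theta)']'$. A row-wise mean-value expansion gives
\[
0=\Lambda\widetilde{\psi}_n(\theta_W)+\Lambda\frac{\partial\psi_n(\bar\theta)}{\partial\theta'}\big(\widetilde{\theta}(\Lambda)-\theta_W\big),
\]
where $\bar\theta$ lies between $\widetilde\theta(\Lambda)$ and $\theta_W$, possibly differently for each row. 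The Jacobian term is handled by the uniform law of large numbers for $\partial\psi_n/\partial\theta'$, which uses the envelopes on $G$ and $F$ in Assumption \ref{assump:AM}.3, together with consistency of $\bar\theta$ and continuity of $\Gamma(\theta)$. This gives $\Lambda\,\partial\psi_n(\bar\theta)/\partial\theta'\overset{p}{\longrightarrow}\Lambda\Gamma(\theta_W)$, which is nonsingular by Assumption \ref{assump:AM}.5 and hence invertible with probability approaching one. For the score term, recentering is the crucial point: $\sqrt n\,\widetilde{\psi}_n(\theta_W)=n^{-1/2}\sum_i(\psi(X_i,\theta_W)-E[\psi(X_i,\theta_W)])$ is an i.i.d. centered sum with finite second moments by Assumption \ref{assump:AM}.4. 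The Lindeberg--L\'evy CLT therefore gives convergence to $N(0,\Sigma(\theta_W))$. Solving the expansion and applying Slutsky's theorem yields
\[
\sqrt n(\widetilde\theta(\Lambda)-\theta_W)=-(\Lambda\Gamma(\theta_W))^{-1}\Lambda\sqrt n\,\widetilde{\psi}_n(\theta_W)+o_p(1),
\]
and hence the limit $N(0,V_M(\Lambda,\theta_W))$ with the stated sandwich form.

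\textbf{Main obstacle.} The delicate step is the uniform convergence of the derivative matrix near $\theta_W$. This needs $F(X,\theta)$ to be continuous in $\theta$, which follows from twice continuous differentiability of $g$, together with the envelope $E\sup_\theta\|F\|<\infty$. Existence of a root of \eqref{eqn_mestimation_sample} also needs care. I would handle it by defining $\widetilde\theta(\Lambda)$ as a minimizer of $Q_n$ and showing that the first-order condition holds up to $o_p(n^{-1/2})$. The key fact is that $Q_n(\widetilde\theta)\le Q_n(\theta_W)=O_p(n^{-1})$, which forces $\Lambda\widetilde{\psi}_n(\widetilde\theta)=O_p(n^{-1/2})$. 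Combined with the expansion, this first gives $\sqrt n$-consistency; then, because the system is exactly identified and $\Lambda\Gamma(\theta_W)$ is nonsingular, the minimizer satisfies $\Lambda\widetilde\psi_n(\widetilde\theta)=o_p(n^{-1/2})$, which is sufficient for the linearization above.

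As a byproduct, the expansion delivers \eqref{eqn_influence}. The standard GMM first-order condition $G_n(\hat\theta)'Wg_n(\hat\theta)=0$, linearized around $\theta_W$ using \eqref{pseudo_definition2}, is exactly the above expansion with $\Lambda=A(W,\theta_W)$. The sample $A(W,\hat\theta)$ can be replaced by its population version because $A(W,\theta_W)E[\psi(X_i,\theta_W)]=2G(\theta_W)'Wg(\theta_W)=0$.
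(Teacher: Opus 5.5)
Your proposal is correct and follows the paper's route: consistency, a mean-value expansion of $\Lambda\widetilde{\psi}_n$ around $\theta_W$ whose Jacobian term converges to the nonsingular $\Lambda\Gamma(\theta_W)$, then the CLT for the recentered score and Slutsky, handled more carefully than the paper's sketch with respect to row-wise mean values, the uniform law for $F$, and existence of a root. One loose spot lies outside the theorem itself: in your closing GMM remark, simply swapping the sample $A$ for $A(W,\theta_W)$ leaves an a priori $O_p(n^{-1/2})$ error, and what actually justifies the step is the bilinear identity $G_n(\widehat{\theta})'Wg_n(\widehat{\theta})=A(W,\theta_W)\widetilde{\psi}_n(\widehat{\theta})+O_p(n^{-1})$ for $\sqrt{n}$-consistent $\widehat{\theta}$ (using $G(\theta_W)'Wg(\theta_W)=0$), which is exactly what the paper's direct linearization of the GMM first-order condition delivers.
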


Theorem \ref{thm:dist_AM} shows that $\widetilde{\theta} (\Lambda)$ is consistent for the same GMM pseudo-true value $\theta_{W}$, and the augmented-M estimator is a sufficiently general class of estimator to include the ``standard" misspecified-GMM as a special case. 
When $\Lambda  = A(W, \theta_W) = [G(\theta_W)^{\prime} W \quad g(\theta_W)^{\prime} W \otimes I_p]$ as in  \eqref{eqn_influence}-\eqref{eqn:AGamma},  $\widetilde{\theta} (\Lambda)$ has the same influence function representation with the original misspecified-GMM estimator $\widehat{\theta}_{GMM} (W)$ and has the same asymptotic distribution as in \eqref{asymptotic_dist}.

\subsection{Efficient Choice of $\Lambda$ and the Efficient Estimator under Misspecification}\label{subsec:ME_estimator}

More importantly, Theorem \ref{thm:dist_AM} justifies the use of our new class of estimator, through the following corollaries,  as we can consider ``efficient" choice of $\Lambda$ to minimize the asymptotic variance  $V_{M} (\Lambda, \theta_{W})$. By using the standard theory of optimal estimating equations (Godambe (1960)) or semi-parametric efficiency of GMM, the optimal choice of $\Lambda $ takes the following form;\footnote{When $\Sigma(\theta_W)$ is singular, we can use the generalized inverse $\Sigma^{-}$ and all the arguments below will still be valid.  $\Sigma$ is singular when some components of $vec(G(X_i, \theta))$ has an overlap with the $g(X_i, \theta)$, which occurs for example, when $g(X_i, \theta) = ( X_i-\theta , (X_i-\theta)^2 - 1)^{\prime}$ as $G(X_i, \theta) = (-1, -2 (X_i-\theta))^{\prime}$. In this case, the Jacobian carries no information beyond the original moments, but $(\Gamma'\Sigma^{-}\Gamma)$ may still be nonsingular.}
\begin{eqnarray}
\Lambda^{*}_{W} &=& \big( \frac{\partial }{\partial \theta^{\prime} } E[\widetilde{\psi} (X_i, \theta)] \big)^{\prime} |_{\theta = \theta_W} Var [\widetilde{\psi} (X_i, \theta_W)]^{-1} \nonumber\\
&=& \Gamma(\theta_W)^{\prime} \Sigma (\theta_W)^{-1} \label{optimal_weighting}.
\end{eqnarray}

With this $\Lambda^{*}_{W} $, we use the optimal linear combinations of the augmented moments, while the specific choice $\Lambda  = A(W, \theta_W)$ based on the FOC of the GMM objective, is inefficient. When the estimator $\widetilde{\theta} (\Lambda)$ is constructed with $\Lambda^{*}_{W}$ (or any consistent estimator of $\Lambda^{*}_{W}$), we label it as the \textit{misspecification-efficient} (ME) estimator, $\widehat{\theta}_{ME} (W) = \widetilde{\theta} (\Lambda^{*}_{W})$.\\

\begin{corollary}\label{cor:dist_ME} Suppose Assumption \ref{assump:AM} holds for $\Lambda =\Lambda^{*}_{W}  = \Gamma(\theta_W)^{\prime} \Sigma (\theta_W)^{-1}$. Then, we have
\begin{equation}
\sqrt{n}(\widehat{\theta}_{ME} (W) - \theta_{W} ) \overset{d}{\longrightarrow}  N(0, V_{ME}(W)), 
\label{optimal_variance} 
\end{equation}
where $V_{ME}(W)  = (\Gamma(\theta_W)^{\prime} \Sigma (\theta_W)^{-1}\Gamma(\theta_W))^{-1}$, and  the following holds for any $\Lambda$ that satisfies Assumption \ref{assump:AM},
\[
V_{M} (\Lambda, \theta_W) \geq V_{ME}(W).
\]
Further, the asymptotic variance of the ME estimator has the following form;
\begin{equation}
V_{ME}(W)   = (G(\theta_W)^{\prime} \Sigma_{11}^{-1} G(\theta_W) +F_G(\theta_W)^{\prime}  \Sigma_{22, 1}^{-1} F_G(\theta_W))^{-1}, \label{optimal_variance_general}
\end{equation}
where $F_G(\theta_{W}) = F(\theta_{W}) - \Sigma_{21} \Sigma_{11}^{-1} G(\theta_{W})$, $\Sigma_{22,1} = (\Sigma_{22} - \Sigma_{21}\Sigma_{11}^{-1} \Sigma_{12})$, and $\Sigma (\theta_W)  = \begin{bmatrix} \Sigma_{11} & \Sigma_{12} \\ \Sigma_{21}&\Sigma_{22}\end{bmatrix}$.\\
\end{corollary}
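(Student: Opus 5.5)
The plan has three parts, in this order: the limit distribution, the efficiency inequality, and the block formula.

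For the limit distribution, invoke Theorem \ref{thm:dist_AM} with $\Lambda=\Lambda^*_W=\Gamma'\Sigma^{-1}$; here and below $\Gamma=\Gamma(\theta_W)$ and $\Sigma=\Sigma(\theta_W)$. Assumption \ref{assump:AM} is imposed for this $\Lambda$, so $\Lambda^*_W\Gamma=\Gamma'\Sigma^{-1}\Gamma$ is nonsingular. The sandwich $V_M(\Lambda^*_W,\theta_W)$ then equals
\[
(\Gamma'\Sigma^{-1}\Gamma)^{-1}\Gamma'\Sigma^{-1}\Sigma\Sigma^{-1}\Gamma(\Gamma'\Sigma^{-1}\Gamma)^{-1}=(\Gamma'\Sigma^{-1}\Gamma)^{-1}=V_{ME}(W).
\]
If $\Lambda^*_W$ is replaced by a consistent estimate, the $o_p(1)$ perturbation of $\Lambda$ multiplies $\sqrt n\,\widetilde\psi_n(\theta_W)=O_p(1)$. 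It therefore does not change the linearization in the proof of Theorem \ref{thm:dist_AM}.

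For the inequality $V_M(\Lambda,\theta_W)\ge V_{ME}(W)$, I use the standard Gauss–Markov/Godambe argument. Let $B=(\Lambda\Gamma)^{-1}\Lambda$ and $B^*=(\Gamma'\Sigma^{-1}\Gamma)^{-1}\Gamma'\Sigma^{-1}$. Both satisfy $B\Gamma=B^*\Gamma=I_p$. Write $V_M=B\Sigma B'$ and decompose $B=B^*+(B-B^*)$. The cross term is
\[
(B-B^*)\Sigma B^{*\prime}=(B-B^*)\Gamma(\Gamma'\Sigma^{-1}\Gamma)^{-1}=(I-I)(\cdot)=0 .
\]
Hence $V_M=V_{ME}+(B-B^*)\Sigma(B-B^*)'\ge V_{ME}$.

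For formula \eqref{optimal_variance_general}, I compute $\Gamma'\Sigma^{-1}\Gamma$ with $\Gamma=(G',F')'$. I write $\Sigma^{-1}$ in factored block-LDL form: with
\[
L=\begin{bmatrix} I&0\\ -\Sigma_{21}\Sigma_{11}^{-1}& I\end{bmatrix},
\]
one has $\Sigma^{-1}=L'\,\mathrm{diag}(\Sigma_{11}^{-1},\Sigma_{22,1}^{-1})\,L$, since $L\Sigma L'=\mathrm{diag}(\Sigma_{11},\Sigma_{22,1})$. Then $L\Gamma=(G',\,(F-\Sigma_{21}\Sigma_{11}^{-1}G)')'=(G',F_G')'$. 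This gives $\Gamma'\Sigma^{-1}\Gamma=G'\Sigma_{11}^{-1}G+F_G'\Sigma_{22,1}^{-1}F_G$, and inverting yields the stated form.

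The main obstacle is the invertibility of the blocks. The block formula requires $\Sigma_{11}$ and $\Sigma_{22,1}$ to be invertible, i.e.\ $\Sigma>0$. When $\Sigma$ is singular, as in the footnote, I would replace the inverses by generalized inverses and check that the orthogonality argument above still holds. That check uses the fact that $\Gamma$ lies in the column space of $\Sigma$ in the relevant directions, so that $\Sigma\Sigma^{-}\Gamma=\Gamma$. I would state this caveat rather than fully resolve it.
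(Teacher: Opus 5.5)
Your proof is correct. The distributional part coincides with the paper's: substitute $\Lambda^*_W$ into the sandwich of Theorem~\ref{thm:dist_AM}. For the inequality, the paper conjugates the projection $I-\Sigma^{-1/2}\Gamma(\Gamma'\Sigma^{-1}\Gamma)^{-1}\Gamma'\Sigma^{-1/2}\ge 0$ by $H=(\Lambda\Gamma)^{-1}\Lambda\Sigma^{1/2}$. Your split $B=B^*+(B-B^*)$ is the same Gauss--Markov argument, since in fact $H(I-P)H'=(B-B^*)\Sigma(B-B^*)'$. Your version needs no matrix square roots and writes the gap out explicitly, so the equality case ($\Lambda=C\,\Gamma'\Sigma^{-1}$ with $C$ nonsingular) is visible at once.

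The block formula is where the routes really differ. The paper writes $\Sigma^{-1}$ by the partitioned-inverse formula, with $\Sigma_{11,2}^{-1}$ in the $(1,1)$ block, and expands $\Gamma'\Sigma^{-1}\Gamma$ into four terms. It then needs the Woodbury identity $\Sigma_{11,2}^{-1}=\Sigma_{11}^{-1}+\Sigma_{11}^{-1}\Sigma_{12}\Sigma_{22,1}^{-1}\Sigma_{21}\Sigma_{11}^{-1}$ to complete the square. Your congruence $L\Sigma L'=\mathrm{diag}(\Sigma_{11},\Sigma_{22,1})$ completes the square in one step, because $L\Gamma=(G',F_G')'$.

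The paper's route has a side benefit: it produces the block form of $\Lambda^*_W$, which is reused in Corollary~\ref{cor:optimal_matrix}. Your method recovers that just as cleanly if you eliminate in the other order, with $\tilde L=\bigl[\begin{smallmatrix} I & -\Sigma_{12}\Sigma_{22}^{-1}\\ 0 & I\end{smallmatrix}\bigr]$. Then $\Sigma^{-1}=\tilde L'\mathrm{diag}(\Sigma_{11,2}^{-1},\Sigma_{22}^{-1})\tilde L$ and $\tilde L\Gamma=\bigl((G-\Sigma_{12}\Sigma_{22}^{-1}F)',F'\bigr)'$, which reduces to $(G',0')'$ in the linear case.

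Your singular-$\Sigma$ caveat is not needed, because the statement presupposes $\Sigma(\theta_W)^{-1}$. Note also that $\Sigma\Sigma^{-}\Gamma=\Gamma$ is not automatic. It fails whenever some coordinate of $\psi$ is nonrandom but varies with $\theta$, for example $g(X,\theta)=(X_1-\theta,\,X_2-\theta^2)'$ with $G(X,\theta)=(-1,-2\theta)'$. In that case the oracle recentering pins down $\theta_W$ exactly.
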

\vspace{0.1cm}

Corollary \ref{cor:dist_ME} shows that the misspecification-efficient  estimator has the smallest asymptotic variance in the class of M-estimator that we consider with  the augmented moment conditions $\widetilde{\psi}(X_i,  \theta) $. Since misspecified-GMM estimator is the augmented-M estimator  $\widetilde{\theta} (\Lambda)$ with $\Lambda  = A(W, \theta_W) = [G(\theta_W)^{\prime} W \quad g(\theta_W)^{\prime} W \otimes I_p]$, Corollary \ref{cor:dist_ME} implies that
\begin{equation}
V_{GMM}(W, \theta_W) = V_{M} (A(W, \theta_W), \theta_W) \geq  V_{ME}(W).\label{variance_comparison_1}
\end{equation}

%\footnote{The lower bound holds when $A(W, \theta_W) = \Lambda^{*}_{W} $. By the definition of $A(W, \theta)$, and  $\Lambda^{*}_{W}$, this holds when $\Gamma(\theta_W)^{\prime} \Sigma (\theta_W)^{-1} = [G(\theta_{W})^{\prime} W \quad g(\theta_W)^{\prime} W \otimes I_p ].$}

Corollary \ref{cor:dist_ME} also provides some insights on the ``conventional" asymptotic variance of the efficient GMM under correct-specification. The ME variance in \eqref{optimal_variance_general} decomposes as the correct-specification efficient GMM variance contribution ($G(\theta_W)^{\prime} \Sigma_{11}^{-1} G(\theta_W)$ evaluated at $\theta_W$ here instead of $\theta_0$) and the curvature term contribution from the second derivative of the moments $F$ after residualization $(F_G(\theta_W)^{\prime}  \Sigma_{22, 1}^{-1} F_G(\theta_W))$,  and we can deduce that
\begin{equation}
(G(\theta_W)^{\prime} \Sigma_{11}^{-1} G(\theta_W))^{-1} \geq  V_{ME}(W).\label{variance_comparison}
\end{equation}
Thus, using the extra moment conditions from Jacobian does not worsen the asymptotic efficiency of estimators even under correct specification.\footnote{Under correct specification, $E[g (X_i, \theta_0)]  = 0$, straightforward calculation shows that the asymptotic variance of the ``standard" GMM, $V(W, \theta_0)$ in \eqref{asymptotic_dist} reduces to the classical one $ V (W, \theta_0) =  (G(\theta_0)'WG(\theta_0))^{-1} G(\theta_0)'W \Sigma_{11}(\theta_0) W G(\theta_0) (G(\theta_0)'WG(\theta_0))^{-1},$
where  $\Sigma_{11} (\theta)= E[g(X_i, \theta) g(X_i, \theta)'] - E[g(X_i, \theta)] E[g(X_i, \theta)]^{\prime}$ is the $m \times m$ upper-left submatrix  of $\Sigma(\theta)$. Then, the ``standard" optimal-weight matrix is $W_0 =   \Sigma_{11}(\theta_0)^{-1} $ and the asymptotic variance of $\widehat{\theta}_{GMM}( \Sigma_{11} (\theta_0)^{-1})$ achieves smallest asymptotic variance in the class of (correctly specified) GMM estimator in the sense that $V (W, \theta_0)  \geq  V(  \Sigma_{11}(\theta_0)^{-1}, \theta_0) = (G(\theta_0)'  \Sigma_{11}(\theta_0)^{-1} G(\theta_0))^{-1}$ for any non-singular matrix $W>0$. } The ME estimator provides no efficiency gain when  $F_G(\theta_{W}) = F(\theta_{W}) - \Sigma_{21} \Sigma_{11}^{-1} G(\theta_{W}) = 0$. In the linear IV model, where $F(\theta_W) = 0$ and $G$ is constant, this condition holds when $\Sigma_{12} = 0$, i.e., the covariance between the moment function $g(X_i, \theta)$ and the Jacobian $G(X_i)$ is zero. However, $\Sigma_{12}$ (and thus $F_G(\theta_{W})$) is generally nonzero both under correct specification and misspecification, implying that the ME estimator provides strictly positive efficiency gains in most empirically relevant settings.\footnote{To see this, consider the simple linear IV model where $g(X_i, \theta) =Z_i (Y_i - X_i\theta) $, and  $\theta$ is a scalar. Define $ \varepsilon_i = Y_i - X_i \theta_W, \sigma_{\varepsilon X} = E[\varepsilon_i X_i |Z_i]$,  $\sigma_{\varepsilon Z}  =E[Z_i \varepsilon_i ]$, and $X_i = \pi^{\prime} Z_i + v_i, E[Z_i v_i] = 0$.  Then, $\Sigma_{12} = E[Z_i Z_i^{\prime} X_i \varepsilon_i  ] - \sigma_{\varepsilon Z}  E[X_i Z_i^{\prime}  ] = E[\sigma_{\varepsilon X}  Z_i Z_i^{\prime}] - \sigma_{\varepsilon Z}  E[X_i Z_i^{\prime} ]  $. Under correct specification ($\sigma_{\varepsilon Z} = 0)$, $\Sigma_{12} =E[\sigma_{\varepsilon X}  Z_i Z_i^{\prime}]$ is generally nonzero whenever there is endogeneity, $\sigma_{\varepsilon X}  \neq 0$. Under misspecification ($\sigma_{\varepsilon Z} \neq 0)$, $\Sigma_{12}$ is generally nonzero unless there exists specific structure of the higher moments of $Z_i$ so that $E[\sigma_{\varepsilon X} (Z_i) Z_i Z_i^{\prime}] =\sigma_{\varepsilon Z}\pi^{\prime}  E[Z_i Z_i^{\prime}]$.} 
 Intuitively, variations of the Jacobian affect the distribution of the GMM estimator under misspecification, and thus improved efficiency can be achieved by imposing efficient weighting of the augmented moments accounting for the correlations between $g(X_i, \theta)$ and $G(X_i, \theta)$. \\

Next Corollary  provides some further insights on the optimal choice of $\Lambda^{*}_{W}$ and the asymptotic variance of the ME estimator in the linear model. This will facilitate the discussions on  the efficient weighting matrix under misspecification, compared with  the``conventional" optimal weighting matrix  that was defined under correct specification.  When the model is linear, i.e., $G(X_i) = \frac{\partial g(X_i, \theta)}{\partial\theta^{\prime}}$ does not depend on $\theta$, and $F(X_i, \theta) = \frac{\partial vec [G(X_i, \theta)^{\prime}]}{\partial\theta^{\prime}}= 0$, thus we can further simplify the form of  $\Lambda^{*}_{W}$ and the asymptotic variance of ME estimator.\\

\begin{corollary}\label{cor:optimal_matrix} 

If $g(X, \theta)$ is linear in $\theta$, we have
\begin{equation}
\Lambda^{*}_{W} = [G^{\prime} \Sigma_{11,2}^{-1} \quad - G^{\prime} \Sigma_{11, 2}^{-1} \Sigma_{12} \Sigma_{22}^{-1}]   \label{optimal_Lambda}
\end{equation}
where $G = E[ G(X_i)], \Sigma_{11, 2} = (\Sigma_{11} - \Sigma_{12} \Sigma_{22}^{-1} \Sigma_{21})$, $\Sigma (\theta_W) =\begin{bmatrix} \Sigma_{11} & \Sigma_{12} \\ \Sigma_{21}&\Sigma_{22}\end{bmatrix} $. The asymptotic variance of the ME estimator is 
\begin{equation}
V_{ME} (W)= (G^{\prime} \Sigma_{11, 2}^{-1}G)^{-1},\label{optimal_variance_linear}\\ 
\end{equation}
and $\Sigma_{11,2}$ is invariant to $\theta_W$.
\end{corollary}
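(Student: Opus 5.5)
Under linearity, $g(X,\theta)=g_0(X)+G(X)\theta$ with $G(X)$ free of $\theta$, so $F(X_i,\theta)\equiv 0$ and $F(\theta)=0$. The plan is to specialize Corollary \ref{cor:dist_ME} and compute $\Lambda^{*}_{W}=\Gamma(\theta_W)'\Sigma(\theta_W)^{-1}$ and $V_{ME}(W)=(\Gamma'\Sigma^{-1}\Gamma)^{-1}$ directly, using the partitioned-inverse formula for $\Sigma$. Invariance of $\Sigma_{11,2}$ is a separate, direct covariance argument.

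\textbf{Step 1 (structure of $\Gamma$).} Since $F=0$, we have $\Gamma(\theta_W)=[G',\,0']'$, where $G=E[G(X_i)]$. Then $\Gamma'\Sigma^{-1}=G'[(\Sigma^{-1})_{11}\quad(\Sigma^{-1})_{12}]$, so only the first block-row of $\Sigma^{-1}$ matters.

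\textbf{Step 2 (partitioned inverse).} Assuming $\Sigma_{22}$ and $\Sigma_{11,2}$ are invertible (otherwise use generalized inverses, as in the footnote to \eqref{optimal_weighting}), the block-inverse formula written around the Schur complement of $\Sigma_{22}$ gives
\[
(\Sigma^{-1})_{11}=\Sigma_{11,2}^{-1},\qquad (\Sigma^{-1})_{12}=-\Sigma_{11,2}^{-1}\Sigma_{12}\Sigma_{22}^{-1}.
\]
Substituting into Step 1 yields \eqref{optimal_Lambda}. Then $\Gamma'\Sigma^{-1}\Gamma=G'(\Sigma^{-1})_{11}G=G'\Sigma_{11,2}^{-1}G$, and inverting gives \eqref{optimal_variance_linear}. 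As a cross-check, one can reconcile this with \eqref{optimal_variance_general} at $F=0$:
\[
G'\Sigma_{11}^{-1}G+G'\Sigma_{11}^{-1}\Sigma_{12}\Sigma_{22,1}^{-1}\Sigma_{21}\Sigma_{11}^{-1}G=G'\Sigma_{11,2}^{-1}G,
\]
which is the Woodbury identity.

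\textbf{Step 3 (invariance of $\Sigma_{11,2}$).} Here $\psi(X,\theta)=(g_0(X)+G(X)\theta,\ \mathrm{vec}[G(X)'])$. Write $g(X,\theta)=g_0(X)+(\theta'\otimes I_m)\mathrm{vec}(G(X))$. Since $\mathrm{vec}(G(X))$ is a fixed permutation $K$ of $\mathrm{vec}(G(X)')$, this gives $g(X,\theta)=g_0(X)+B(\theta)\,\psi_2(X)$ with $B(\theta)$ nonrandom. So the first block of $\psi$ is a $\theta$-independent random vector plus a deterministic linear map of the second block $\psi_2$, and $\psi_2$ does not depend on $\theta$. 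The residual of $g(X,\theta)$ after linear projection on $\psi_2$ (with centering) therefore equals the residual of $g_0(X)$ on $\psi_2$, because the $B(\theta)\psi_2$ part is projected out exactly. Since $\Sigma_{11,2}$ is the variance of that residual, it does not depend on $\theta$, and in particular not on $\theta_W$ (hence not on $W$). Algebraically, with $\Sigma_{11}(\theta)=V_{00}+B C_{02}'+\dots$, the $B$-terms cancel in $\Sigma_{11}-\Sigma_{12}\Sigma_{22}^{-1}\Sigma_{21}$.

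\textbf{Main obstacle.} The only delicate point is Step 3's bookkeeping with the vec/Kronecker permutation, together with the case of singular $\Sigma_{22}$. In the singular case I would argue via the projection interpretation, which holds with any generalized inverse, rather than through explicit inverses.
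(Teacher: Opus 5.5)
Your proposal is correct and follows essentially the paper's proof: you specialize Corollary \ref{cor:dist_ME} to $F=0$ and read off the first block-row of the partitioned inverse. You use the Schur-complement-of-$\Sigma_{22}$ form directly, which spares you the paper's identity $\Sigma_{11,2}^{-1}\Sigma_{12}\Sigma_{22}^{-1}=\Sigma_{11}^{-1}\Sigma_{12}\Sigma_{22,1}^{-1}$. For invariance, your decomposition $g(X,\theta)=g_0(X)+B(\theta)\psi_2(X)$ is the paper's $\psi_i(\theta)=L(\theta)\psi_i(\theta_0)$ with $\theta_0=0$, and your residual-projection argument is the interpretation the paper gives in the main text for its block-by-block cancellation of the $C$-terms in $L(\theta)\Sigma(\theta_0)L(\theta)'$.
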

\vspace{0.5cm}
\noindent

Under the linear model, Corollary \ref{cor:optimal_matrix} shows that the asymptotic variance of the ME estimator is $(G^{\prime} \Sigma_{11, 2}^{-1}G)^{-1}$, where $\Sigma_{11, 2} = \Sigma_{11} - \Sigma_{12} \Sigma_{22}^{-1} \Sigma_{21} $.  We refer $W^* = \Sigma_{11,2}^{-1}$ as the \textit{misspecification-efficient} weighting matrix, in analogy with the conventional (correct-specification) efficient weighting matrix $W_0 = \Sigma_{11}^{-1}$. While, $W_0$ is the inverse of the variance of the original moments $g(X_i,\theta)$, $W^* = \Sigma_{11,2}^{-1}(\theta)$ is the inverse of the variance of $g(X_i, \theta) - \Sigma_{12}\Sigma_{22}^{-1} vec(G(X_i)')$, which is the residualized moment after the projection onto the Jacobian. Even under correct specification, the ME bound $(G^{\prime} \Sigma_{11, 2}^{-1}G)^{-1}$ is strictly smaller than the GMM estimator with the conventional efficient weighting matrix $W_0$ when $\Sigma_{12} \neq 0$ as we discussed.

Corollary \ref{cor:optimal_matrix} provides new insight into the conventional (correct-specification) efficient GMM variance formula. Corollary \ref{cor:optimal_matrix} shows that the conventional efficient GMM variance formula $(G^{\prime} WG)^{-1}$ achieves the efficiency bounds of the oracle-ME estimator, when $W^* = \Sigma_{11,2}^{-1}(\theta_W)$. The caveat is that the GMM estimator itself with the choice $W^*$ does not achieve the oracle ME bound as shown in Corollary \ref{cor:dist_ME}, $V_{GMM}(W^*, \theta_W) \ge (G'\Sigma_{11,2}^{-1}G)^{-1} $ under misspecification. It follows that conventional formula for efficient GMM is too ``small", even in large samples, for the standard GMM estimator, as it will be valid for the ME estimator. Researchers have found that the standard errors of the efficient GMM are often severely downward biased.\footnote{Many researchers thus routinely used the finite-sample correction of Windmeijer (2005). Hwang et al. (2022) further shows, under the linear model, that misspecification-robust GMM variance will not only be valid under misspecification, but also work as a finite-sample correction even if the model is correctly specified.} Our results show that reporting the conventional efficient GMM variance formula in the linear GMM will still be useful as an efficiency frontier - showing researchers what could be gained by imposing moments from the Jacobian $G(X_i, \theta)$; but not to be used for an inference directly with the efficient GMM itself under misspecification. When there is no efficiency gain $\Sigma_{12}=0$,  $W^* = W_0  = \Sigma_{11}^{-1}$, using the misspecification-efficient weighting matrix reduces to the conventional case.

We can easily report the oracle ME bound, $V_{ME} (W)=(G^{\prime} \Sigma_{11, 2}^{-1}G)^{-1}$ as an efficiency frontier alongside conventional/misspecification-robust standard errors with no significant cost. We can consistently estimate the $V_{ME} (W)= (G^{\prime} \Sigma_{11, 2}^{-1}G)^{-1}$ regardless of the model specification based on the sample estimator 
\[
\widehat\Sigma_{11,2} = \frac{1}{n}\sum_i (r_i(\widehat{\theta}) - \bar r_n(\widehat{\theta}))(r_i(\widehat{\theta}) - \bar r_n(\widehat{\theta}))^{\prime}
\]
with $r_i(\theta) = g_i(\theta) - \widehat\Sigma_{12}\widehat\Sigma_{22}^{-1}vec(G_i(\theta)')$ and the preliminary GMM estimator $\widehat{\theta} = \widehat{\theta}_{GMM} (W)$.\footnote{Under misspecification, we must use this centered covariance estimator because the uncentered covariance estimator $ \frac{1}{n}\sum_i r_i(\widehat{\theta}) r_i(\widehat{\theta})^{\prime}$ is not consistent for $\Sigma_{11,2}$.}

Furthermore, Corollary \ref{cor:optimal_matrix} shows that under the linear model, $V_{ME} = (G'\Sigma_{11, 2}^{-1}G)^{-1}$ is same for all different pseudo-true values $\theta_W$, due to the invariance property of $\Sigma_{11, 2}$. This implies that the misspecification affects what is being estimated, but not how precisely it can be estimated. This is because the linear GMM model has a ``location-like" structure in the sense that $g(X_i,\theta) = g(X_i, 0) + G(X_i)\theta$, so shifting $\theta$ moves the moment vector along the Jacobian direction. The invariance property holds because $\Sigma_{11, 2}$ is the variance of the residual from the population regression of $g(X_i,\theta)$ on $vec(G(X_i)')$, so all the $\theta$ dependent parts are projected out. Practically, this implies that $V_{ME} $ in the linear model can be consistently estimated from any preliminary GMM without committing to a particular weight matrix or pseudo-true value.

In general nonlinear settings, above arguments fail to hold as the Jacobian $G(X_i, \theta)$, and  $\Sigma_{11,2}(\theta)$ changes with $\theta$. However, the distribution of the oracle-ME estimator $\widehat{\theta}_{ME} (W) $ and the oracle bounds $V_{ME}(W)$ can be approximated by the bootstrap methods (will be discussed in the next Section \ref{sec:bootstrap}). This bootstrap method  can be viewed as a misspecification-robust and efficient version of the Hall and Horowitz (1996) recentering method.

The joint covariance matrix of the moment and Jacobian vectors, $\Sigma (\theta) =E [ \psi_{i}(X_i, \theta) \psi_{i}(X_i, \theta)^{\prime}]  -E [ \psi_{i}(X_i, \theta)] E [ \psi_{i}(X_i, \theta)]^{\prime}$  plays a central role in this paper, and the similar construction of the orthogonalized moments has been used in the weak-identification literature, though from a different perspective than ours. Kleibergen (2005) constructs a recentered Jacobian that is asymptotically uncorrelated with the sample moments, by projecting the Jacobian on the moment vector, and it was used to construct weak-identification robust test-statistics. Kleibergen and Zhan (2025) extend this construction to accommodate misspecification, proposing a double robust Lagrange multiplier test for the pseudo-true value of the continuously updating estimator (CUE), with $\Sigma_{11}(\theta)^{-1}$, which is the efficient choice under correct specification. \\

\textbf{Example: Linear IV}. 
Consider a linear instrumental variable regression $Y_i = X_i^{\prime} \theta + \varepsilon_i $ where $Y_i$ is an outcome variable, $X_i$ is a potentially endogenous regressor, and $Z_i$ is a $m\times 1$ vector of instrumental variables. Let $Y, X, Z$ are  $n \times 1, n\times p$, and $n\times m$ data matrices with row $i$ equal to $Y_i^{\prime}, X_i^{\prime}, Z_i^{\prime}$, respectively. The moment function is $ g(\cdot, \theta) = Z_i (Y_i - X_i^{\prime} \theta)$, and we consider moment misspecification $E[Z_i (Y_i - X_i^{\prime} \theta)] \neq 0 $. With $m\times m$ matrix $W>0$, the standard GMM estimator is 
\[
\widehat{\theta}_{GMM}(W) = (X'Z WZ'X)^{-1} X'Z W Z'Y
\]
with the GMM pseudo-true value,
\[
\theta_W = (E[X'Z] W E[Z'X])^{-1}  (E[X'Z] W E[Z'Y]).
\]
When $W = E[Z'Z]^{-1}$,  $\theta_W$ is the TSLS estimand.  Based on \eqref{eqn_mestimation_sample} and using the linearity of the moment condition, and the optimal $\Lambda^{*}_{W}$ in Corollary \ref{cor:optimal_matrix}, ME  has a closed-form as follows
\begin{multline}
\widehat{\theta}_{ME} (W) = (\frac{1}{n}X'Z \Sigma_{11, 2}^{-1} Z'X)^{-1} \biggl( X'Z \Sigma_{11, 2}^{-1} (\frac{1}{n}Z'Y -  E[g(X_i, \theta_{W})])  \\- X'Z  \Sigma_{11,2}^{-1} \Sigma_{ 12} \Sigma_{ 22}^{-1} ( \frac{1}{n}vec(Z'X) - E[vec(Z'X)])  \biggr) \label{def:ME_linear_estimator_efficient} 
\end{multline}
where  $\Sigma_{ 11, 2} = (\Sigma_{11} - \Sigma_{12} \Sigma_{22}^{-1} \Sigma_{ 21}),  \Sigma (\theta_W)  = \big[\begin{smallmatrix} \Sigma_{11} & \Sigma_{12} \\ \Sigma_{ 21}&\Sigma_{22}\end{smallmatrix}\big]$, and $\Sigma (\theta)$, i.e., $E [ \big( \begin{smallmatrix} Z_i Y_i-Z_i X_i'\theta) \\ vec(X_iZ_i^{\prime}) \end{smallmatrix} \big) \big(\begin{smallmatrix} Z_i Y_i-Z_i X_i'\theta) \\ vec(X_iZ_i^{\prime}) \end{smallmatrix}\big)^{\prime}]  -E  \big(\begin{smallmatrix} Z_i Y_i-Z_i X_i'\theta) \\ vec(X_iZ_i^{\prime}) \end{smallmatrix}\big) E  \big(\begin{smallmatrix} Z_i Y_i-Z_i X_i'\theta) \\ vec(X_iZ_i^{\prime}) \end{smallmatrix}\big)^{\prime} $. Sample analogs of $ \Sigma_n (\widehat{\theta}_W)^{-1}$ can also  be used here. In the linear IV example, the ME estimator, \eqref{def:ME_linear_estimator_efficient} can be considered as efficient linear combinations of the recentered GMM/TSLS estimator and the Jacobian moment conditions.

%This first set of results has some important implications for empirical researchers, since one can consider more efficient estimator than the standard GMM estimator under misspecification. For instance, in the linear IV model under , one can use the two-stage least squares (TSLS) estimator with multiple IVs,  a special case of the efficient GMM estimator under homoskedasticity.  In this setup, the target pseudo-true parameters will be the TSLS estimand, defined as the probability limit of the TSLS estimator.  Angrist and  Imbens (1995) showed that the TSLS estimand can be interpreted as a weighted average of local average treatment effects (LATE) under appropriate assumptions. 

%In this setup, one can consider a ME estimator that is more efficient than the TSLS, while still focusing on the same TSLS estimands, and consider reporting this efficient variance as the efficiency frontier - showing researchers what could be gained with an additional information from the Jacobian, which is the first-stage projection coefficients. 

Assuming that the data-generating process is $Y_i = \theta_0 X_i + Z_{2i} +\varepsilon_i, X_i = Z_{1i} + 2Z_{2i}+ v_i$, $Z_i = (Z_{1i}, Z_{2i})^{\prime} \overset{i.i.d.}{\sim} N(0, I_2), (\varepsilon_i, v_i)~N(0, [1, 0.5; 0.5, 1])$ with $\theta_0 = 0$. When we consider the weight matrix $W =(\begin{smallmatrix} 1 &\rho\\ \rho &1 \end{smallmatrix})$ for $\rho \in (-1, 1)$, the GMM pseudo-true value is $\theta_W(\rho) = \frac{\rho + 2}{5 + 4\rho}$, which is a strictly decreasing function of $\rho \in (-1, 1)$, and there is one-to-one mapping from $\rho$ to $\theta_W(\rho)$.  By using the formula above, the asymptotic variance of the GMM estimator as a function of $\rho$ is:
\[ V_{GMM}(\rho) = \frac{2(56\rho^4 + 131\rho^3 + 210\rho^2 + 232\rho + 100)}{(5 + 4\rho)^4}. \]
The variance of the ME estimator can be calculated as $V_{ME} = \frac{969}{6116} \approx 0.1584$ (do not depend on $\rho$), and it can be verified that  $V_{GMM}(\rho) > V_{ME}$ for all $\rho$. At  $\rho = 0$ (TSLS), the variance simplifies to $V_{GMM}(0) = 0.32$, and the efficiency gain is over 50\% against TSLS. The efficiency gain ($1- \frac{V_{ME}}{V_{GMM}(\rho)}$) grows as $\rho$ decreases; for example, it  is 72\% and 88\% for $\rho = -0.5$ and $-0.9$, respectively.\footnote{Although we do not report the results here for brevity, we also considered a general nonlinear instrumental variables (NLIV) model, where the moment function is $g(X_i, \theta) = Z_i (Y_i - exp(X_i'\theta))$. This model has been considered in the count data models with endogenous regressor (e.g., Windmeijer and Santos Silva (1997)). Using numerical methods, we verified that $V_{GMM} > V_{ME}$, and $V_{ME}$ varies with $W$ in the nonlinear model. }

\begin{remark}[$s$-step and the iterated GMM estimators under misspecification] In this paper, we only consider the one-step GMM estimator in the usual sense with the fixed $W$.  It is well-known that the GMM pseudo-true value changes with each step of the iteration under misspecification, and the asymptotic distribution of the  conventional $s$-step GMM  depends on that of the previous-step estimator at which the efficient weight matrix is evaluated, making the asymptotic analysis complicated (see, Hall and Inoue (2003), Hwang, Kang and Lee (2022), and Hansen and Lee (2021) for an asymptotic distribution of two-step and the iterated GMM estimator under misspecification).  However, using the ``standard" efficient weighting-matrix $W_0$ is ambiguous under misspecification and it is not clear why we want to stick with this choice in each iteration, which we show in this paper is suboptimal. The choice of $W^* = \Sigma_{11,2}^{-1}(\theta)$ deserves more attention, at least in the linear GMM setup. The FOC condition for the efficient GMM estimator based on $W^* = \Sigma_{11,2}^{-1}(\theta_W)$ with the one-step GMM  and the initial weighting matrix $W$ is $G'\Sigma_{11,2}^{-1}(\theta_W) E[g(X_i, \theta_{W^*})] = 0$. In general, the FOC condition $G'\Sigma_{11,2}^{-1}(\theta_0) E[g(X_i, \theta_0)] =0 $ holds for the iterated GMM  pseudo-true values $\theta_0$ with the $W = \Sigma_{11,2}^{-1}(\theta)$.
\end{remark}

\section{Misspecification Efficient Bootstrap GMM Estimators}
\label{sec:bootstrap}

We first consider a related class of estimators, misspecification-efficient (ME) GMM based on the augmented moment conditions. We then propose the bootstrap GMM estimator, which can be considered as a misspecification-robust and efficient version of Hall and Horowitz (1996) recentered bootstrap GMM estimator.

The ME-GMM estimator, $\widehat{\theta}_{ME-GMM} (\Delta)$, with the fixed weighting matrix $(mp+m) \times (mp+m)$ matrix $\Delta >0$ solves 

\begin{equation}\label{def:ME-GMM}
\widehat{\theta}_{ME-GMM} (\Delta) = \argmin_{\theta}  \Biggl(n^{-1}\sum_{i=1}^{n} \widetilde{\psi}(X_i, \theta) \Biggr)^{\prime}  \Delta \Biggl(n^{-1}\sum_{i=1}^{n} \widetilde{\psi}(X_i, \theta) \Biggr)
\end{equation}
where,
\begin{equation*}
\widetilde{\psi}(X_i, \theta)  = \begin{pmatrix}
g(X_i, \theta) - E[g(X_i, \theta_W)]\\
 vec(G(X_i, \theta)^{\prime}) - E[vec (G(X_i, \theta_{W})^{\prime})]
\end{pmatrix}.
\end{equation*}

%which can be implemented more easily using the standard statistical software packages (e.g., \texttt{gmm} commands in STATA, and \texttt{gmm} packages in R). 

 We can easily show that the ME-GMM with the efficient weighting matrix $\Delta = E[\widetilde{\psi}(X_i, \theta_W)\\ \widetilde{\psi}(X_i, \theta_W)^{\prime}]^{-1} = \Sigma(\theta_W)^{-1}$ is asymptotically equivalent to the ME estimator, $\widehat{\theta}_{ME} (W)$. The consistency and the asymptotic normality of $\widehat{\theta}_{ME-GMM} (W)$ immediately follow by the same Assumptions (Assumption \ref{assump:AM}.2-\ref{assump:AM}.4). 

We first note that Lee and Liao (2018) consider the same GMM problems under correct specification $(E[g(X_i, \theta_W)] = 0)$ and singular Jacobian $(E[G(X_i, \theta_{W})]  =0) $. However, the ME estimator, $\widehat{\theta}_{ME} (W)$, and the ME-GMM estimator in \eqref{def:ME-GMM} are not fully feasible without estimation of  the population moments $E[g(X_i, \theta_{W})]$ and $E[vec (G(X_i, \theta_{W})^{\prime})]$,  which was used for recentering.  If we use the sample counterparts of the re-centered moments, e.g., replacing the sample analogs of $E[g(X_i, \theta_W)]$ and $E[vec(Z'X)]$, then the estimator degenerates to the  original estimator $\widehat{\theta}_W$, so it does not correctly account for  variations in the limiting distributions. We consider the sample-split estimators in Section \ref{sec:splitsample} as potential alternatives.

Here, we consider the bootstrap GMM estimator which mimics \eqref{def:ME-GMM} with the efficient weighting matrix $\Delta = \Sigma(\theta_W)^{-1}$.  Let  $\{X_i^*\}_{i=1}^n$ as the bootstrap samples, independently sampled from the original sample $\{X_i\}_{i=1}^n$ with replacement.  With re-centering, the bootstrap ME-GMM estimator solves
\begin{equation}\label{def:ME-GMM_bootstrap}
\widehat{\theta}_{ME}^{*}(W) = \argmin_{\theta}  \Biggl(n^{-1}\sum_{i=1}^{n} \widetilde{\psi}^{*}(X_i^{*}, \theta) \Biggr)^{\prime}  \Delta^{*} \Biggl(n^{-1}\sum_{i=1}^{n} \widetilde{\psi}^{*}(X_i^{*}, \theta) \Biggr)
\end{equation}
where,
\begin{equation*}
\widetilde{\psi}^{*}(X_i^{*}, \theta)  = \begin{pmatrix}
g(X_i^{*}, \theta) - g_n(\widehat{\theta}_W)\\
 vec(G(X_i^{*}, \theta)^{\prime}) - vec (G_n(\widehat{\theta}_W)^{\prime})
\end{pmatrix}, 
\end{equation*},
\[
g_n(\theta) = n^{-1}\sum_{i=1}^{n} g (X_i, \theta), \quad G_n(\theta) = n^{-1}\sum_{i=1}^{n} G (X_i, \theta), 
\]
\begin{equation}
\Delta^{*} = \Biggl(E^{*} [ \psi(X_i^{*}, \widehat{\theta}_W) \psi(X_i^{*}, \widehat{\theta}_W)^{\prime}]  -E^{*} [ \psi(X_i^{*}, \widehat{\theta}_W)] E^{*} [ \psi(X_i^{*}, \widehat{\theta}_W)]^{\prime}\Biggr)^{-1}.
%\\=\Biggl(\frac{1}{n} \sum_{i=1}^{n} \psi_{i}(X_i,\widehat{\theta}_W) \psi_{i}(X_i, \widehat{\theta}_W)^{\prime}  - (\frac{1}{n} \sum_{i=1}^{n} \psi_{i}(X_i, \widehat{\theta}_W))(\frac{1}{n} \sum_{i=1}^{n} \psi_{i}(X_i, \widehat{\theta}_W))^{\prime}   \Biggr)^{-1}
\end{equation}
The recentering by $g_n(\widehat{\theta}_W), vec (G_n(\widehat{\theta}_W))$ ensures $E^*[g(X_i^*,\widehat{\theta}_W)  - g_n(\widehat{\theta}_W)] = 0$,  and the bootstrap moments  $\widetilde{\psi}^{*}(X_i^{*}, \theta)$ has mean zero at $\widehat{\theta}_W$ conditional on the data. $\Delta^{*}$ is symmetric positive definite weighting matrix that is a bootstrap sample version of the efficient weighting matrix $\Delta = \Sigma(\theta_W)^{-1}$, where $\Sigma (\theta) =E [ \psi_{i}(X_i, \theta) \psi_{i}(X_i, \theta)^{\prime}]  -E [ \psi_{i}(X_i, \theta)] E [ \psi_{i}(X_i, \theta)]^{\prime}$.

The proposed bootstrap estimator utilizes the re-centering of the Jacobian estimator to ensure that the bootstrap expected Jacobian matrix evaluated at the estimator is zero, which is essential to correctly mimic the  asymptotic distribution under the misspecification. This ``double re-centering" idea has been also considered in weak-identification literature to accurately reflect the behavior of the original Jacobian (e.g., Dovonon and Goncalves (2017)), yet under correct specification.

The standard nonparametric bootstrap without recentering provides first-order valid inference under correct specification (Hahn (1996)), however, the standard bootstrap does not achieve an asymptotic refinement in overidentified models. Again, this is because the bootstrap sample mean of the moment function evaluated at the estimator, $E^{*}[g(X_i^*, \theta_0)] = n^{-1} Z^{\prime} \widehat{e}$, is not necessarily equal to zero under correct specification. To achieve refinement, Hall and Horowitz (1996) proposed a recentered bootstrap  GMM estimator;
\begin{equation}
\widehat{\theta}_{HH}^{*} = \argmin_{\theta}  \Biggl(n^{-1}\sum_{i=1}^{n} g(X_i^{*}, \theta) - g_n(\widehat{\theta}_W) \Biggr)^{\prime}  W^{*} \Biggl(n^{-1}\sum_{i=1}^{n} g(X_i^{*}, \theta) - g_n(\widehat{\theta}_W) \Biggr)
\end{equation}
where $W_0^*$ is the bootstrap version of efficient weighting matrix under correct specification, $W_0 = \Sigma_{11} (\theta)^{-1} $, and $\widehat{\theta}_W = \widehat{\theta}_{GMM}(W)$ is the GMM estimator. $\widehat{\theta}_{HH}^{*} $ can be considered as a special case of the proposed bootstrap ME-GMM estimator $\widehat{\theta}_{ME}^{*}$  by ignoring the Jacobian augmented parts and only recentering the original sample moments, i.e.,  $\Delta^{*}_{11} =W_0^*, \Delta_{12}^{*} =  0$, and  $\Delta_{22}^{*} = 0$. 

The conventional GMM and the HH bootstrap estimator would have different asymptotic distributions under misspecification because Hall-Horowitz recentered bootstrap always creates a correctly-specified bootstrap world, regardless of whether the actual model is misspecified or not.  Hall-Horowitz bootstrap eliminates the additional source of Jacobian sampling variability that arises under misspecification, while the proposed ME bootstrap can consider these variations, and furthermore it considers more efficient combinations of the sampling uncertainty for original and Jacobian moments. \\

\noindent \textbf{Example: Linear IV (continued)}. 
Based on the FOC from \eqref{def:ME-GMM} and using the linearity of the moment condition, ME-GMM has a closed-form
%\begin{eqnarray}
%X'Z \Delta_{11} \biggl( (Z'Y - Z'X \widehat{\theta}) -  E[g(X_i, \theta_{W})] \biggr) + X'Z \Delta_{12} \biggl( vec(Z'X) - E[vec(Z'X)] \biggr) = 0 
%\end{eqnarray}
\begin{multline}
\widehat{\theta}_{ME-GMM}(\Delta) =  (\frac{1}{n}X'Z \Delta_{11} Z'X)^{-1}  \biggl(X'Z \Delta_{11} (\frac{1}{n}Z'Y -  E[g(X_i, \theta_{W})])  \\ 
+ X'Z \Delta_{12} ( \frac{1}{n}vec(Z'X) - E[vec(Z'X)])  \biggr) \label{def:ME_linear_estimator} 
\end{multline}
where $\Delta =  \begin{bmatrix} \Delta_{11} & \Delta_{12} \\  \Delta_{21} & \Delta_{22} \end{bmatrix}$.  When $\Delta_{11} = (Z'Z)^{-1} $ and $\Delta_{12} = 0$, ME-GMM estimator is related to a recentered TSLS estimator, 
\begin{equation}
\widehat{\theta}_{ME-GMM}(\Delta) = (\frac{1}{n}X'Z (Z'Z)^{-1}  Z'X)^{-1} X'Z (Z'Z)^{-1}  (\frac{1}{n}Z'Y -  E[g(X_i, \theta_{W})]).\label{def:HH_linear_estimator} 
\end{equation}
The choice of $\Delta_{11} = (Z'Z)^{-1}, \Delta_{12} = 0$ for ME-GMM  is not efficient under misspecification, although its choice was motivated by the efficiency of GMM under correct specification and homoskedasticity.

With the bootstrap samples $ \{ Y_i^{*}, X_i^{*}, Z_i^{*} \}_{i=1}^{n} $, independently sampled from the original sample $\{ Y_i, X_i, Z_i\}_{i=1}^{n}  $, Hall and Horowitz (1996) version of the recentered bootstrap TSLS estimator is
\[
\widehat{\theta}_{HH}^{*} = (X^{* \prime}Z^{*} (Z^{*  \prime}Z^{*})^{-1}  Z^{*  \prime}X^{*})^{-1} X^{*  \prime}Z^{*  } (Z^{*  \prime}Z^{*})^{-1}   (Z^{*  \prime}Y^{* } -  Z^{\prime} \widehat{e})
\]
where $\widehat{e} = Y - X^{\prime} \widehat{\theta}_{TSLS}$. This is essentially related to a version of recentered bootstrap estimator of \eqref{def:HH_linear_estimator}. 
Our proposed bootstrap GMM estimator in this setup is
\begin{multline}
\widehat{\theta}_{ME}^{*}(W) = (X^{* \prime}Z^{*} \widehat{\Sigma}_{11, 2}^{-1} Z^{* \prime}X^{*})^{-1} \biggl( X^{* \prime}Z^{*} \widehat{\Sigma}_{11, 2}^{-1} (Z^{* \prime} Y^{*} -  Z^{\prime} \widehat{e})  \\- X^{* \prime}Z^{*}  \widehat{\Sigma}_{11,2}^{-1} \widehat{\Sigma}_{12} \widehat{\Sigma}_{22}^{-1} ( vec(Z^{* \prime}X^{*}) - vec(Z'X))  \biggr) \label{def:ME_linear_estimator_efficient_bootstrap} 
\end{multline}
where $\widehat{e} = Y - X^{\prime} \widehat{\theta}_{W}$, $\widehat{\Sigma}_{11, 2} = (\widehat{\Sigma}_{11} - \widehat{\Sigma}_{12} \widehat{\Sigma}_{22}^{-1} \widehat{\Sigma}_{21}),  \widehat{\Sigma} (\widehat{\theta}_{W})  = \begin{bmatrix} \widehat{\Sigma}_{11} & \widehat{\Sigma}_{12} \\ \widehat{\Sigma}_{21}&\widehat{\Sigma}_{22}\end{bmatrix}$, $\widehat{\Sigma} (\theta) = \frac{1}{n} \sum_{i=1}^{n} \psi(X_i, \theta) \psi(X_i, \theta)^{\prime} - \sum_{i=1}^{n} \psi(X_i, \theta) \sum_{i=1}^{n} \psi(X_i, \theta)^{\prime}$, and 
$\psi(X_i, \theta) = ((Z_i Y_i-Z_i X_i'\theta)', vec(X_iZ_i^{\prime})')'$. In the linear IV setup, $\widehat{\theta}_{ME}^{*}$ is not just a linear combination of the $\widehat{\theta}_{HH}^{*}$ and the recentered Jacobian moments, but an efficient linear combination with  $\widehat{\Sigma}_{11, 2}$.\\

We show below that the asymptotic distribution of the bootstrap estimator $\widehat{\theta}_{ME}^{*}(W)$ is identical to that of the original ME estimator $\widehat{\theta}_{ME}(W) $. We consider the following assumptions for the bootstrap validity.  Let bootstrap sample mean of moments and Jacobian as $
g_n^*(\theta) = n^{-1}\sum_{i=1}^{n} g (X_i^*, \theta), G_n^*(\theta) = n^{-1}\sum_{i=1}^{n} G (X_i^*, \theta)$.

\begin{assumption}\label{assump:boot}
\begin{enumerate}
\item $\widehat{\theta}_{ME}^{*} (W) \overset{p^{*}}{\rightarrow}  \theta_W.$
\item $\sup_{\theta \in \Theta} \| g_n^*(\theta) - g_n(\theta)\| \overset{p^{*}}{\rightarrow} 0$, and  $\sup_{\theta \in \Theta} \| G_n^*(\theta) - G_n(\theta)\| \overset{p^{*}}{\rightarrow} 0$.
\item  $\sqrt{n}\big(g_n^*(\widehat{\theta}_W) - g_n(\widehat{\theta}_W),\; \mathrm{vec}(G_n^*(\widehat{\theta}_W) - G_n(\widehat{\theta}_W)\big)' \overset{d^{*}}{\rightarrow} N(0,\Sigma(\theta_W))$, where $ \widehat{\theta}_W = \widehat{\theta}_{GMM}(W)$, and $\Sigma(\theta_W)$ is the same covariance matrix as in Theorem \ref{thm:dist_AM}.
\end{enumerate}
\end{assumption}

These conditions are standard assumptions under i.i.d.  sampling, e.g., Gin\'{e} and Zinn (1990), and Hahn (1996).  Assumption \ref{assump:boot}.1 on the bootstrap consistency is a high-level assumptions, which can be verified from more primitive conditions using the standard bootstrap consistency of the extremum estimators, combined with the uniform convergence conditions in Assumption \ref{assump:boot}.2. Assumption \ref{assump:boot}.3 is implied by Assumptions \ref{assump:AM}.1-\ref{assump:AM}.4 under i.i.d. setup. \\

\begin{theorem}\label{thm:bootME} Suppose that Assumption \ref{assump:AM}.1-\ref{assump:AM}.4 hold. Further, we assume that $\theta = \theta_W$ is a unique solution to $E[\widetilde{\psi}_i(X_i, \theta)] = 0 $, and $\Delta$  is positive definite. Then, we have
\begin{align*}
\sqrt{n}(\widehat{\theta}_{ME-GMM} (\Delta)- \theta_{W} ) \overset{d}{\longrightarrow}  N(0,(\Gamma(\theta_W)^{\prime} \Delta \Gamma (\theta_W))^{-1} \Gamma(\theta_W)^{\prime} \Delta \Sigma (\theta_{W} )  \Delta \Gamma(\theta_W)( \Gamma(\theta_W)^{\prime} \Delta \Gamma (\theta_W))^{-1}).
\end{align*}
When $\Delta = \Sigma(\theta_W)^{-1}$, the asymptotic variance reduces to $(\Gamma(\theta_W)^{\prime} \Sigma (\theta_W)^{-1}\Gamma(\theta_W))^{-1}$, provided it is non-singular. 

In addition, suppose that Assumption \ref{assump:boot} holds, then
\begin{align*}
\sqrt{n}(\widehat{\theta}_{ME}^*(W) - \widehat{\theta}_{GMM} (W) ) \overset{d^*}{\longrightarrow}  N(0,(\Gamma(\theta_W)^{\prime} \Sigma (\theta_W)^{-1}\Gamma(\theta_W))^{-1}).\\
\end{align*}
\end{theorem}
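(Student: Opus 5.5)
The proof splits into two parts: an ordinary GMM analysis of the oracle ME-GMM estimator, and a conditional (bootstrap) analogue of the same argument.

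\textbf{Part 1: the oracle ME-GMM estimator.} Treat $\widehat{\theta}_{ME-GMM}(\Delta)$ as a standard, correctly specified GMM estimator built from the moment function $\widetilde{\psi}(X_i,\theta)$. By construction $E[\widetilde{\psi}(X_i,\theta_W)]=0$, and by assumption $\theta_W$ is the unique zero.

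Consistency comes from the uniform law of large numbers for $\widetilde{\psi}_n(\theta)$. This follows from Assumption \ref{assump:AM}.1--\ref{assump:AM}.3: the dominance conditions on $g$ and $G$ give uniform convergence, and compactness of $\Theta$ plus uniqueness of the population minimizer of $E[\widetilde{\psi}]'\Delta E[\widetilde{\psi}]$ then give $\widehat{\theta}_{ME-GMM}\overset{p}{\to}\theta_W$ by the usual extremum-estimator argument. Note that $\Delta>0$ makes the zero of $E[\widetilde\psi]$ the unique minimizer.

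For asymptotic normality, use the first-order condition $\widetilde{\Gamma}_n(\widehat\theta)'\Delta\widetilde{\psi}_n(\widehat\theta)=0$, where $\widetilde{\Gamma}_n(\theta)=\partial\widetilde\psi_n(\theta)/\partial\theta'=(G_n(\theta)',F_n(\theta)')'$. A mean-value expansion of $\widetilde{\psi}_n$ around $\theta_W$ gives
\[
\sqrt n(\widehat\theta-\theta_W)=-(\widetilde\Gamma_n(\widehat\theta)'\Delta\widetilde\Gamma_n(\bar\theta))^{-1}\widetilde\Gamma_n(\widehat\theta)'\Delta\sqrt n\widetilde\psi_n(\theta_W).
\]
Twice differentiability of $g$ and the envelope condition on $F$ give $\widetilde\Gamma_n\to\Gamma(\theta_W)$ uniformly in a neighborhood of $\theta_W$. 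Since $\sqrt n\widetilde\psi_n(\theta_W)=n^{-1/2}\sum(\psi_i-E\psi_i)\overset{d}{\to}N(0,\Sigma(\theta_W))$ by the CLT under Assumption \ref{assump:AM}.4, the sandwich formula follows by Slutsky's theorem. Here I need $\Gamma'\Delta\Gamma$ nonsingular, i.e.\ $\Gamma$ of full column rank, which is implied by Assumption \ref{assump:AM}.5 since $\Lambda\Gamma$ nonsingular forces $\Gamma$ to have full column rank. With $\Delta=\Sigma^{-1}$ the sandwich collapses algebraically to $(\Gamma'\Sigma^{-1}\Gamma)^{-1}$.

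\textbf{Part 2: the bootstrap estimator.} I will repeat the argument conditionally on the data. The bootstrap moment $\widetilde\psi^*_n(\theta)=\psi^*_n(\theta)-\psi_n(\widehat\theta_W)$ has conditional mean zero at $\widehat\theta_W$, which now plays the role of the population pseudo-true value. Consistency $\widehat\theta^*_{ME}\to\theta_W$ is Assumption \ref{assump:boot}.1, and $\widehat\theta_W\overset{p}{\to}\theta_W$ by standard misspecified-GMM consistency.

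Next, $\Delta^*$ is the sample covariance of $\psi(X_i,\widehat\theta_W)$, so $\Delta^*\overset{p}{\to}\Sigma(\theta_W)^{-1}$ by the ULLN together with consistency of $\widehat\theta_W$. Expanding the bootstrap first-order condition around $\widehat\theta_W$ gives
\[
\sqrt n(\widehat\theta^*-\widehat\theta_W)=-(\widetilde\Gamma^*_n{}'\Delta^*\widetilde\Gamma^*_n)^{-1}\widetilde\Gamma^*_n{}'\Delta^*\sqrt n\bigl(\psi^*_n(\widehat\theta_W)-\psi_n(\widehat\theta_W)\bigr)+o_{p^*}(1).
\]
The bootstrap Jacobians $\widetilde\Gamma^*_n=(G^*_n{}',F^*_n{}')'$, evaluated at intermediate points, converge in $p^*$ to $\Gamma(\theta_W)$. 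This uses Assumption \ref{assump:boot}.2 for $G^*_n$, together with the analogous uniform bootstrap LLN for $F^*_n$, which follows from the envelope $E\sup\|F\|<\infty$ (Gin\'e--Zinn) and equicontinuity.

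Assumption \ref{assump:boot}.3 supplies the $N(0,\Sigma(\theta_W))$ limit of the centered bootstrap term. Plugging this in yields the stated limit $(\Gamma'\Sigma^{-1}\Gamma)^{-1}$, with centering at $\widehat\theta_{GMM}(W)=\widehat\theta_W$.

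\textbf{Main obstacle.} The delicate step is the bootstrap Jacobian convergence at random intermediate points between $\widehat\theta^*$ and $\widehat\theta_W$. Handling it requires combining bootstrap uniform convergence of $F^*_n$ (not listed explicitly in Assumption \ref{assump:boot}) with $\widehat\theta^*-\widehat\theta_W=o_{p^*}(1)$. I would first try to derive it from the envelope condition in Assumption \ref{assump:AM}.3 via a bootstrap Glivenko--Cantelli result, and would otherwise simply add it as a condition.
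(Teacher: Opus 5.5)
Your proposal is correct and follows the paper's proof. The paper gets the first claim by viewing ME-GMM as the augmented M-estimator of Theorem~\ref{thm:dist_AM} with $\Lambda=\Gamma(\theta_W)'\Delta$, which is exactly the mean-value expansion you write out; it gets the bootstrap claim from the same expansion around $\widehat{\theta}_W$, together with $\Delta^*\overset{p}{\to}\Sigma(\theta_W)^{-1}$, convergence of the bootstrap Jacobian, and the bootstrap CLT in Assumption~\ref{assump:boot}.3. The paper also uses, without stating it, the bootstrap uniform LLN for $F_n^*$ that you flag, which can indeed be obtained from the envelope in Assumption~\ref{assump:AM}.3. The same holds for the full-column-rank condition on $\Gamma(\theta_W)$: you take it from Assumption~\ref{assump:AM}.5, but the theorem lists only Assumption~\ref{assump:AM}.1--\ref{assump:AM}.4, so it should be stated as an explicit extra hypothesis.
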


Theorem \ref{thm:bootME} shows that the asymptotic distribution of the bootstrap estimator $\sqrt{n}(\widehat{\theta}_{ME}^{*}(W)  - \widehat{\theta}_{GMM}(W))$ is equal to those of $\sqrt{n}(\widehat{\theta}_{ME} (W)- \theta_{W} )$. Theorem \ref{thm:bootME} allows us to approximate the sampling variation of the oracle-ME estimator $\widehat{\theta}_{ME} (W)$, and the theorem justifies the use of the bootstrap to obtain the minimax bounds that will be established in Section \ref{sec:semi_efficiency} (Theorem \ref{thm:semiparametric_efficiency}).  Although it is necessary to have stronger conditions such as uniformly square integrable condition to guarantee convergence in moments from the convergence in distribution results, we can still use the trimmed bootstrap estimator of variance as the consistent estimator of the asymptotic variance $(\Gamma(\theta_W)^{\prime} \Sigma (\theta_W)^{-1}\Gamma(\theta_W))^{-1}$.

\subsection{Double-Recentered Hall and Horowitz (1996) Bootstrap Estimator}
\label{sec:MRHH}
We note, however, that Theorem \ref{thm:bootME} does not guarantee the validity of the percentile bootstrap CI based on $\widehat{\theta}_{ME}^{*}(W) $ because it is centered at $\widehat{\theta}_{GMM}(W)$, which has a different asymptotic distribution. Using the double-recentering idea from the previous section  to approximate the sampling variation of the oracle ME estimator, we propose a double-recentered (DR) that jointly perturbs the moment conditions and the Jacobian. The DR bootstrap correctly approximates the asymptotic distribution of the GMM estimator regardless of specification status, nesting the Hall and Horowitz (1996) bootstrap as a special case under correct specification. 

Lee (2014) argues that recentering in the standard nonparametric GMM bootstrap can be detrimental and is not even needed if we use the analytic misspecification-robust variance estimator. In our paper, however, we clarify this argument that additional considerations of re-centered Jacobian can achieve asymptotic validity under misspecification. This is also the case for  the standard GMM bootstrap estimator without recentering, while Hall and Horowitz (1996) do not achieve asymptotic validity because it  \textit{only} recenters the original moment functions.\footnote{Since, the main focus of the paper is on the efficiency results for $\theta_W$ and minimax bounds, we only considered the GMM estimator with fixed $W$. Thus, we do not need to consider variations from the weight matrix, which was typically considered for the construction of the Hessian and misspecification-robust variance constructions in the literature. Although it is beyond the scope of this paper, the idea of the ``double-recentering" in this paper can be easily  extended to the  ``triple-recentering" with the estimated $W_n$ (one-step) or the $W_n(\theta)$ (two-step) case by augmenting the weighting matrix into the $\psi(\cdot)$. }

We also note that DR bootstrap estimator is different from the bootstrap M-estimator based on the just-identified moment functions with an augmented parameter as in Imbens (1997). One can show that the bootstrap M-estimator based on the just-identified augmented moment functions as in Imbens (1997) corresponds to the standard GMM bootstrap estimator without recentering. 

We first observe that the GMM-estimator satisfies the estimating equations 
\eqref{eqn_mestimation_sample} for the augmented moment conditions with the particular choice of  $\Lambda  = A(W, \theta_W) = [G(\theta_W)^{\prime} W \quad g(\theta_W)^{\prime} W \otimes I_p]$. We then define the misspecification-robust recentered bootstrap estimator  $\widehat{\theta}_{DR}^{*}$ as the solution to the bootstrap analog of the augmented estimating equation:
\begin{equation}\label{eq:drhh}
A_n(W, \widehat{\theta}_{W})\widetilde{\psi}_n^* (\widehat{\theta}_{DR}^{*})  = 0,
\end{equation}
where 
\begin{equation*}
\widetilde{\psi}_n^* (\theta) = \frac{1}{n} \sum_{i=1}^{n} \widetilde{\psi}^{*}(X_i^{*}, \theta)  = \begin{pmatrix}
g_n^*( \theta) - g_n(\widehat{\theta}_W)\\
 vec(G_n^{*}(\theta)^{\prime}) - vec (G_n(\widehat{\theta}_W)^{\prime})
\end{pmatrix}, 
\end{equation*}
and 
\[
A_n(W, \widehat{\theta}_{W})= \big[G_n(\widehat{\theta}_{W})'W_n\;,\;\; g_n(\widehat{\theta}_{W})'W \otimes I_p\big] 
\]
is the sample analog of the matrix $A(W, \theta_W)= [G(\theta_W)'W,\; g(\theta_W)'W \otimes I_p]$. While the HH bootstrap only recenters the original moments $g_n^*$ at $g_n(\widehat{\theta}_{W})$, the DR bootstrap recenters the full augmented vector $\psi_n$ at its sample counterpart $\psi_n^*$, including Jacobian estimation variability, appropriately weighted by  
$\Lambda_n = A_n(W, \widehat{\theta}_{W})$ to mimic the distribution of  $\widehat{\theta}_{W}$.

For the implementation, we first, compute the original sample GMM estimator $\widehat{\theta}_{W}$, and calculate $A_n(W, \widehat{\theta}_W)$, which requires $g_n(\widehat{\theta}_{W})$ and $G_n(\widehat{\theta}_{W})$.  Then, for each bootstrap replication $b = 1, \ldots, B$: 
\begin{enumerate}
\item draw $\{X_i^*\}_{i=1}^n$ with replacement; 
\item with $A_n(W, \widehat{\theta}_W)$, solve (\ref{eq:drhh}) for $\widehat{\theta}_{DR}^{*}$ via Newton--Raphson initialized at $\widehat{\theta}_{W}$; for the linear model, $\widehat{\theta}_{DR}^{*}$ has a closed-form solution (see below linear IV example);
\item Confidence intervals are constructed as standard percentile or percentile-$t$ intervals based on the $t^* = \frac{\widehat{\theta}_{DR}^{*} -\widehat{\theta}_{W}}{s.e.(\widehat{\theta}_{DR}^{*} )}$ from the $B$ bootstrap draws.
\end{enumerate}

\noindent \textbf{Example: Linear IV (continued)}. 
Since the sample and population Jacobian does not depend on $\theta$,  $\widehat{\theta}_{DR}^{*}$  admits a closed-form solution by direct algebra from \eqref{eq:drhh}:
\begin{eqnarray}
\widehat{\theta}_{DR}^{*} &=&  \big(X'Z (Z'Z)^{-1} Z^{*\prime}X^*\big)^{-1} \Big(X'Z (Z'Z)^{-1} (Z^{*\prime}Y^* - Z'\hat{e}) \nonumber \\
&&\qquad + \quad (\hat{e}' Z (Z'Z)^{-1} \otimes I_p) \mathrm{vec}(Z^{*\prime}X^* - Z'X) \Big).\label{eq:mrhh_2sls} 
 \nonumber
\end{eqnarray}
So, DR bootstrap estimator $\widehat{\theta}_{DR}^{*}$ is approximately equal to $\widehat{\theta}_{HH}^{*} +  \tilde{\Lambda}_2 \cdot \mathrm{vec}(Z^{*\prime}X^* - Z'X)$ with some weights $ \tilde{\Lambda}_2 $. The last term is a correction term for the Jacobian variability. Under correct specification, $g_n(\widehat{\theta}_{DR}) = n^{-1}Z'\hat{e} = O_p(n^{-1/2})$, so the last terms are finite-sample corrections.\\

\begin{corollary}\label{cor:bootDRHH} Suppose that Assumption \ref{assump:AM}.1-\ref{assump:AM}.5, and Assumption \ref{assump:boot} hold. Then, 
\begin{align*}
\sqrt{n}(\widehat{\theta}_{DR}^* - \widehat{\theta}_{GMM} (W) ) \overset{d^*}{\longrightarrow}  N(0,V_{GMM}(W, \theta_W)), 
\end{align*}
where $V_{GMM}(W, \theta) $ is defined in \eqref{asymptotic_dist}.
\end{corollary}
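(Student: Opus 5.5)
The plan is to treat $\widehat{\theta}_{DR}^*$ as a bootstrap Z-estimator with a data-dependent but conditionally fixed transform matrix $\Lambda_n := A_n(W,\widehat{\theta}_W)$. We will mirror the proof of Theorem \ref{thm:dist_AM} conditionally on the data. The key observation is that the bootstrap augmented moments $\widetilde{\psi}_n^*(\theta)$ are exactly recentered at $\widehat{\theta}_W$: $E^*[\widetilde{\psi}_n^*(\widehat{\theta}_W)]=0$. Hence $\widehat{\theta}_W$ plays the role of the pseudo-true value in the bootstrap world, and $\Lambda_n$ plays the role of $A(W,\theta_W)$.

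The proof proceeds in three steps.

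First, establish bootstrap consistency $\widehat{\theta}_{DR}^*-\widehat{\theta}_W \xrightarrow{p^*} 0$. By Assumption \ref{assump:boot}.2, $\Lambda_n\widetilde{\psi}^*_n(\theta)$ converges uniformly in $\theta$ to $A(W,\theta_W)E[\widetilde{\psi}(X_i,\theta)]$. This uses $\Lambda_n\xrightarrow{p}A(W,\theta_W)$, which follows from $\widehat{\theta}_W\xrightarrow{p}\theta_W$, the ULLN implied by Assumption \ref{assump:AM}.3, and $W_n\xrightarrow{p}W$. By Assumption \ref{assump:AM}.5 with $\Lambda=A(W,\theta_W)$, this limit vanishes only at $\theta_W$. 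The standard Z-estimator consistency argument, together with the local invertibility of $A\Gamma$, then gives the result; alternatively one may invoke an analogue of Assumption \ref{assump:boot}.1.

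Second, take a mean-value expansion of the bootstrap estimating equation around $\widehat{\theta}_W$:
\[
0=\Lambda_n\widetilde{\psi}_n^*(\widehat{\theta}_W)+\Lambda_n\Gamma_n^*(\bar\theta)(\widehat{\theta}_{DR}^*-\widehat{\theta}_W),
\]
where $\Gamma_n^*(\theta)=\bigl(G_n^*(\theta)',\,F_n^*(\theta)'\bigr)'$ and $\bar\theta$ lies between $\widehat{\theta}_{DR}^*$ and $\widehat{\theta}_W$ (row by row). A bootstrap ULLN for $F$, analogous to Assumption \ref{assump:boot}.2 and justified by the envelope condition $E\sup\|F\|<\infty$, combined with the first step, gives $\Gamma_n^*(\bar\theta)\xrightarrow{p^*}\Gamma(\theta_W)$. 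Since $A(W,\theta_W)\Gamma(\theta_W)$ is nonsingular, we obtain
\[
\sqrt n(\widehat{\theta}_{DR}^*-\widehat{\theta}_W)=-(A\Gamma)^{-1}A\sqrt n\,\widetilde{\psi}_n^*(\widehat{\theta}_W)+o_{p^*}(1).
\]

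Third, observe that $\sqrt n\,\widetilde{\psi}_n^*(\widehat{\theta}_W)$ is exactly the vector appearing in Assumption \ref{assump:boot}.3, which converges in distribution (in probability) to $N(0,\Sigma(\theta_W))$. Slutsky's lemma in its conditional form then yields the limit $N\bigl(0,(A\Gamma)^{-1}A\Sigma A'((A\Gamma)')^{-1}\bigr)=N(0,V_{GMM}(W,\theta_W))$.

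The main obstacle is the first step together with the control of $\Gamma_n^*(\bar\theta)$. The randomness of $\Lambda_n$ must be handled jointly with the bootstrap draw, and Assumption \ref{assump:boot} does not literally cover the second-derivative process $F_n^*$ or the consistency of $\widehat{\theta}_{DR}^*$. I would address this by extending Assumption \ref{assump:boot}.2 to $F$, which is routine under i.i.d. sampling and the envelope conditions (Gin\'e and Zinn (1990)). A second point to check is that $\Lambda_n\to A(W,\theta_W)$ only in $P$-probability. This suffices, because conditional on the data $\Lambda_n$ is fixed, and the statement "convergence in distribution in probability" tolerates data events of vanishing probability.
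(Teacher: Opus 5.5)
Your proposal is correct and follows the paper's route. Like the paper, you take a mean-value expansion of $A_n(W,\widehat{\theta}_W)\widetilde{\psi}_n^*(\theta)=0$ around $\widehat{\theta}_W$, replace $A_n(W,\widehat{\theta}_W)$ and $\Gamma_n^*(\bar\theta)$ by $A(W,\theta_W)$ and $\Gamma(\theta_W)$, and conclude with Assumption \ref{assump:boot}.3 and a conditional Slutsky argument. The paper instead just says the argument is similar to the proof of Theorem \ref{thm:bootME}, whose Assumption \ref{assump:boot}.1 concerns only $\widehat{\theta}_{ME}^*$ and whose Assumption \ref{assump:boot}.2 covers only $g_n^*$ and $G_n^*$. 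Your separate consistency argument for $\widehat{\theta}_{DR}^*$ and your extension of the bootstrap ULLN to $F_n^*$ therefore fill in steps the paper leaves implicit.
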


We recommend to consider $\widehat{\theta}_{DR}^* $ and the percentile  confidence interval for the valid inference for $\theta_W$. In addition, separately report the  $\widehat{V}_{ME}(W) $ and $\sup_{W \in \mathcal{W}} \widehat{V}_{ME}(W)$ based on the general double bootstrap ME estimator $\widehat{\theta}_{ME}^*$ or using analytical ME estimates with the standard GMM point estimate $\widehat{\theta}_{GMM}(W)$ as evidence of the attainable efficiency frontier by showing researchers what could be gained if a feasible ME point estimate were available with known misspecification and identification. This provides honest bounds that acknowledge the researcher doesn't know which $W$(and hence $\theta_W$) is ``right."

\section{Split-Sample ME Estimator}
\label{sec:splitsample}

The bootstrap provides a consistent estimate of the misspecification-efficient variance, but the oracle ME point estimate that actually achieves that variance is computable only if  ``degree of misspecification" $(\gamma_1(W) = E[g(X_i, \theta_W)])$ and the ``identification strength" $(\gamma_2(W) = E[G(X_i, \theta_W)])$ parameters are known, that was used for recentering in the ME estimator. While these components can be consistently estimated from the full sample, the population moments and Jacobian cannot be replaced by its sample analog, which suffers from a degeneracy in our setup by collapsing ME estimator back to the standard GMM estimator that invalidates the ME variance reduction.\footnote{In Appendix B, we consider the valid inference methods for $\theta_W$ without assuming $\gamma (W) = (E[g(X_i, \theta_W)]', vec(E[G(X_i,\theta_W)^{\prime}]))'$ is known using the worst-case bounds approach similar to Conley et al. (2012).}

We consider a sample-splitting ME estimator which uses one-half of a sample to estimate $\gamma(W) = (\gamma_1(W)', vec(\gamma_2(W))')'$. Then estimated parameters $\widehat{\gamma}(W)$ are then used to construct the final ME estimates using the other half of the samples. This construction parallels the split-sample IV estimator of Angrist and Krueger (1995), who uses the first-half of the sample to estimate the population-first stage $E[Z'Z]^{-1}E[Z'X]$ and use them in the IV estimates in the other half sample. 

Specifically, we consider the following step to construct the sample-split estimator for $\widehat{\theta}_{ME} (W)$;

\begin{enumerate}
\item Randomly divide the samples into $K=2$ independent subsample roughly equal size $n/2$, and let $\{Y_1,X_1\}, \{Y_2, X_2\}$ as data matrices for each subsample. 
\item Use all observations with $\{ Y_2,X_2\}$ to estimate the $\gamma(W) =(\gamma_1(W)', vec(\gamma_2(W))')'$ with the GMM estimator $\widehat{\theta}_{W} $ and let these leave-fold-out estimators as $\widehat{\gamma}_{-1} (W)$.
\item Solve the single-split ME estimator $\widehat{\theta}^{SS}(W)$ using observations $\{X_1,Y_1\}$ with the $\widehat{\gamma}_{-1}(W)$, and calculate the standard error using ``efficient" misspecification-robust variance formula $\widehat{V}_{ME}(W)$.
\item Repeat the above procedure, $s = 1, ..., S$, and set $\widehat{\theta}^{RSS}(W) = \frac{1}{S} \sum_{s=1}^{S} \widehat{\theta}_{s}^{SS}(W)$, $SD (\widehat{\theta}^{RSS}(W)) = \frac{1}{S-1} \sum_{s=1}^{S} (\widehat{\theta}_{s}^{SS}(W) - \widehat{\theta}^{RSS}(W))^2$. We also consider  the variance estimators considered in Chernozhukov et al. (2018) based on median, i.e., $ median \{ \widehat{V}_{ME,s}(W) + ((\widehat{\theta}_{s}^{SS}(W) - \widehat{\theta}_{median}^{SS}(W))(\widehat{\theta}_{s}^{SS}(W) - \widehat{\theta}_{median}^{SS}(W)))'  \}_{s=1}^{S}$.

\item In the linear model, when $W^* = \Sigma_{11,2}^{-1}$, then estimation of $\gamma_1(W)$ is not needed as the recentering part will be eliminated by the FOC. 
\end{enumerate}

Since $n_1 = n/2$ in the split-sample ME estimator, the asymptotic variance is at least twice the asymptotic variance of the oracle-ME. Repeated single sample-splitting procedure can be done to reduce the variance from the single sample. Furthermore, when we construct ME estimator based on $W = \Sigma_{11,2}^{-1}$, we don't need recentering part $\gamma_1(W)$ due to the FOC with $W^* = \Sigma_{11,2}^{-1}$. This helps reduce computational costs and estimation noises of the sample-splitting estimate. We do not consider the general symmetric sample-splitting estimator (i.e., swap the role of each subsample and then average) or cross-fitting estimator because the symmetry of averaging again has similar degeneracy problems. 

%Any symmetric sample-splitting and cross-fitting estimators have similar degeneracy issues as the nuisance parameter was used as a recentering in our setup.

\section{Semiparametric Efficiency Under Misspecification}\label{sec:semi_efficiency}

Based on the results from the previous sections, we derive the uniform (over $W\in \mathcal{W})$ asymptotic minimax bounds of the misspecified moment condition models.  We can generally view misspecified-GMM model as a semiparametric model such that we consider a family of probability distributions of the observed data $X$, indexed by the finite dimensional parameter of interest $\theta$ for some weighting matrix $W$. In practice, researchers can choose different weighting matrix $W \in \mathcal{W}$ based on the application they consider, where $\mathcal{W}$ denotes a class of $m \times m$ positive definite matrices.

However,  the notion of semiparametric efficiency and optimality need to be defined differently under (global) misspecification. As noted earlier, the fundamental difficulty here is that the pseudo-true value $\theta_W$ changes with $W \in \mathcal{W}$ in our misspecified-moment condition models.  The semiparametric efficiency framework typically  assumes the existence of a well-defined ``true" parameter, and the efficiency bound is computed at this target parameter.  For example, GMM estimator is semiparametrically efficient under correct specification (Chamberlain (1987)) for $\theta = \theta_0$ under the moment conditions $E[g (X_i, \theta)]  = 0$ with an ``efficient" weighting matrix $W = \Sigma_{11} (\theta_0)^{-1}$, the inverse of the variance matrix of the original moments.

Our uniform minimax bounds will be not only useful for a researcher who chooses $W$ based on target a specific economic parameter $\theta_W$, but also useful for a researcher who is completely agnostic to the choice of $W \in \mathcal{W}$. Many applied researchers may just pick one $W$ (e.g., TSLS); however, an applied researcher who runs GMM already considers choosing $W$ over a class of different weighting matrices $\mathcal{W}$ routinely; one-step $W = I$ or $(Z'Z)^{-1}$, two-step, and iterated until convergence. Under misspecification each iteration changes the pseudo-true value. So when results differ across one-step, two-step, and iterated GMM, that is direct evidence of sensitivity to $W$. 

We may compare two or more different weighting matrices, and may want to choose a ``better" $W$ and pseudo-true $\theta_W$ than the others, based on the  pseudo-distance measure such as $J$-statistics, or based on the goodness of fit measure or the information criteria (e.g., Rivers and Vuong (2002), Marmer and Otsu (2012)).  We can also consider  the optimal weights $W^{*} = \argmin_{W} V (W, \theta_{W})$, that minimizes the asymptotic variance of misspecified-GMM in \eqref{asymptotic_dist} or $W^{*} = \argmin_{W} V_M (\Lambda^*, \theta_{W})$ minimizes the asymptotic variance of misspecification-efficient GMM in \eqref{optimal_variance}.  However, as noted in Hall and Pelletier (2011), economic theory dictates an appropriate choice of the weighting matrix in some cases and the researcher may choose $W$ based on which $\theta_W$ they care about substantively, and in the absence of these economic considerations, the choice of $W$ and the relative ranking of over $\mathcal{W}$ can be arbitrary. %If we compare estimators that are consistent for different pseudo-true values $\theta_W$, and the choice of $W^*$ can become arbitrary, and thus may not be of interest, even if it is statistically well-defined.  Furthermore, the standard semiparametric efficiency bounds only based on the fixed target parameter would not directly apply. 

In this paper, we treat $\mathcal{W}$ as fixed, but in practice $\mathcal{W}$ can be chosen by the researchers. For example, researchers may focus only on the identity matrix and diagonal matrices due to computational costs, or may put some restrictions to have particular causal effects interpretations.\\

\noindent \textbf{Example: Linear IV (continued)}. Suppose we focus on a scalar $\theta$ ($p=1$), instruments $Z$ consists of $m$-single instruments $Z_{1}, \cdots Z_{m}$. Then,  the GMM pseudo-true value can be written as a linear combination of the single-instrument IV estimands $\theta_m =E[Z_m^{\prime} Y]/E[Z_m^{\prime} X] $
\[
\theta_W = \frac{E[X'Z] W E[Z'Y]}{E[X'Z] W E[Z'X]} = \sum_{j=1}^{m} w_j \theta_{j}
\]
where the weight $w_j$ for the $j$-th instrument is given by:
\[
w_j = \frac{\pi' W e_j e_j' \pi}{\pi' W \pi} =\frac{( \sum_{k=1}^{m} W_{jk} \pi_k) \pi_j}{{\pi' W \pi}}
\]
with the $j$-th unit vector $e_j$ and $\pi = E[Z^{\prime}X]$. While the weights always sum to 1, individual weights $w_j$ are not constrained to be positive. A weight $w_j$ can be negative if $\pi_j$ (the strength of $j$-th instruments $Z_j$) and $ \sum_{k=1}^{m} W_{kj} \pi_k$ (the weighted strength of full instrument vector $Z$) have opposite signs. When $W = E[Z'Z]^{-1}$,  $\theta_W$ is the TSLS estimand and Angrist and Imbens (1995), Koles\'{a}r (2013) characterizes $\theta_W$ as a weighted average of LATE under monotonicity conditions (see also Andrews (2017)). However, the TSLS  weights are not guaranteed to be positive.\footnote{For the same DGP we consider in the linear IV example in Section \ref{sec:AM}, the pseudo-true value can be decomposed into $\theta_W = w_1\theta_1 + w_2 \theta_2$, where $w_1 = \frac{1 + 2\rho}{5 + 4\rho}, w_2 = \frac{2(\rho + 2)}{5 + 4\rho} $, $\theta_1=0, \theta_2 = 1/2$. For the TSLS ($\rho = 0$), the weights are $w_1 = \frac{1}{5}$ and $w_2 = \frac{4}{5}$, which are positive. Notably, both weights are positive for $\rho > -1/2$. However, for $\rho < -1/2$, the weight $w_1$ becomes negative.}

In general, researchers can restrict the class of weighting matrix $\mathcal{W}^{+}$ with positive weights $w_j$ as follows; 
\[
\mathcal{W}^{+} = \{ W \in \mathcal{W}: \pi^{\prime} W e_j e_j' \pi \geq 0 \textnormal{ for all } j=1, \cdots, m, \pi = E[ZX]  \}. 
\]
The class $\mathcal{W}^{+}$ consists of all matrices $W$ that preserve the ``sign" of the instrument strength, i.e., such that $\text{sign}\left( (W \pi)_j \right) = \text{sign}(\pi_j)$.  Note that any diagonal positive definite matrix (e.g., identity, or $\text{diag}((E[ZZ'])^{-1})$) is always in $\mathcal{W}^{+}$. The TSLS weighting matrix $W = (E[ZZ'])^{-1}$ is in this class only under strict conditions if we have covariates (e.g., Blandhol et al. (2025), and Sloczy\'{n}ski (2024)). Restricting to the class $\mathcal{W}^{+}$ allows us to interpret  $\theta_W$ as a weighted average of the single-instrument estimands. Under the treatment effect heterogeneity, this condition guarantees  that $\theta_W$ lies strictly between the minimum and maximum of the single-instrument LATE estimands $\theta_j$. Instead of considering the ``right" causal target $\theta_{W^*}$, regardless of which convex combination that researchers  care about with multiple instruments, we provide here uniform (over $W\in \mathcal{W}^{+})$ asymptotic minimax bounds of the class of causal LATE estimands.

%Our approach in this paper is as follows. We first focus on fixed $W$ and consider semiparametric efficiency bounds for $\theta_W$.  Theorem \ref{thm:dist_AM} shows that the asymptotic variance of the misspecification-efficient estimator, $\widehat{\theta}_{ME} (W)$ is $(\Gamma(\theta_W)^{\prime} \Sigma (\theta_W)^{-1}\Gamma(\theta_W))^{-1}$. Based on the classical results of Chamberlain (1987), this shows that ME estimator is semi-parametrically efficient for $\theta_W$  in the sense that it achieves the lowest possible asymptotic variance (in the sense of H\'{a}jek's (1972)'s local asymptotic minimaxity), for any regular estimator that uses the information from the augmented moment restrictions, $E[\widetilde{\psi} (X_i, \theta_{W})] = 0$, and  no semiparametric estimator can have a smaller asymptotic variance than the ME estimators. We then analyze worst-case risk bounds by considering class of class of GMM estimators and pseudo-true values $\theta_W$ with different $W\in \mathcal{W}$. 
%based on the asymptotic variance of $\widehat{\theta}_{ME} (W)$, see Section \ref{sec:optimal_pseudo}.

%is the smallest asymptotic variance that any regular, asymptotically linear estimator for the underlying parameter can achieve. 

\subsection{Uniform Asymptotic Minimax Bounds}
\label{sec:uniform_bounds}

To formally state our results, we first define the class of distributions and the parameter spaces we consider.  We also provide more specific definitions of correct/misspecified models. 

Suppose that we observe an i.i.d. sample $\{ X_i \}_{i=1}^{n}$ drawn from the true (unknown) probability distribution $F_0$. 
Let $\mathcal{F}$ be the space of all probability distributions and let $\mathcal{F}^{C}$ as follows;
\begin{eqnarray}
\mathcal{F}^{C} &=&  \{ (F, \theta) \in \mathcal{F} \times \Theta:  (i)\phantom{a}  \theta \textnormal{ is in an open set } \Theta \subset \mathcal{R}^{p} \textnormal{ such that  } g(X, \beta) \textnormal{ and } \partial g(X, \beta)/\partial \beta^{\prime} \nonumber \\ 
 && \quad \textnormal{ are continuous   for  }(X, \beta) \in \mathcal{X} \times \Theta; (ii) \phantom{a} E_F g(X, \theta) = 0; \label{def:sets_correct} \\
 && \quad (iii) \phantom{a} E_F [g(X, \theta) g(X, \theta)^{\prime}] \textnormal{ is positive-definite}; (iv)\phantom{a}  rank E_F[\partial g(X, \theta)/\partial \theta^{\prime}] = p \} , \nonumber
\end{eqnarray}
i.e., $\mathcal{F}^{C}$ is the  set of probability distributions $F$, such that for some $\theta \in \Theta$, $( F, \theta)$ satisfies the regularity and the moment conditions in $\mathcal{F}^{C} $.  Correct specification assumes $(F_0, \theta_0) \in \mathcal{F}^{C}$ such that the true probability distribution $F_0$  with unique parameter value $\theta_0$ satisfies conditions above. Chamberlain (1987) shows that $\widehat{\theta}_{GMM}( \Sigma_{11} (\theta_0)^{-1})$ achieves semiparametric efficiency bounds in the sense of H\'{a}jek's (1972) local asymptotic minimaxity for the neighborhood  of $(F_0, \theta_0) \in \mathcal{F}^{C}$.

In this paper, we consider the misspecified moment condition models and the pseudo-true values associated with the choice of weighting matrix $W$. We define that the moment function $g(\cdot)$ is said to be misspecified if 
\[
(F_0, \theta) \notin \mathcal{F}^{C} \textnormal{ for all } \theta \in \Theta.
\]

Then, for any $W \in \mathcal{W}$, where $\mathcal{W}$ is a set of positive definite matrices, we can define the pseudo-true values  as a minimizer of goodness of fit measure of moment condition $g(\cdot)$ to the true data distributions $F_0$ using the weighted-norm associated with the weighting matrix $W$;\footnote{One can consider other pseudo-distance measure such as Kullback-Leibler (KL) divergence. The KL divergence from $F$ to $F_0$ is $d(F, F_0)= \int \log (dF_0/dF) dF_0$ if $F_0$ is absolutely continuous with respect to $F$, and $d(F, F_0) = \infty$, otherwise. Then, 
the pseudo distance from $\mathcal{F}^{C}$ to $F_0$ is defined by $d(\mathcal{F}^{C}, F_0) = \inf_{F \in \mathcal{F}^{C}} d(F, F_0) $. Vuong (1989), Kitamura (2000), Shi (2015) use this generic choice for model selection tests in moment moment equality/inequality models.} 
\[
 \theta_{W} = \argmin_{\theta} E_{F_0} g(X_i, \theta)'WE_{F_0} g(X_i, \theta). 
\]

To consider the uniform minimax bounds, we define the extended class of models $\mathcal{G}$ as the set of tuples, $\mathcal{G} = \{ (W, (F, \theta)) : W \in \mathcal{W}, (F, \theta) \in \mathcal{F}^{W} \}$, where $\mathcal{F}^{W}$ is the space of probability distributions that is consistent with $\theta_W$, i.e., the set of probability distributions $F$, such that for some $\theta \in \Theta$, $(F, \theta)$ satisfies the following regularity conditions; 
\begin{eqnarray}
\mathcal{F}^{W} &=&   \{ (F, \theta) \in \mathcal{F} \times \Theta:  (i) \phantom{a}  \theta \textnormal{ is in an open set } \Theta \subset \mathcal{R}^{p} \textnormal{ such that  } \psi(X, \beta) \textnormal{ and } \partial \psi(X, \beta)/\partial \beta^{\prime} \nonumber \\ 
 && \quad \textnormal{ are continuous   for  }(X, \beta) \in \mathcal{X} \times \Theta ;  (ii) \phantom{a} E_F \tilde{\psi}(X, \theta) = 0, \textnormal{ where }  \nonumber \\
 && \quad \tilde{\psi}(X, \theta) = \psi(X, \theta) - E_{F_0}[\psi(x, \theta_W) ],  \theta_W = \argmin_{\theta} E_{F_0} g(X_i, \theta)'WE_{F_0} g(X_i, \theta),  \label{def:sets_misspec}\\
 &&  \quad \textnormal{assuming it exists and unique};  (iii) \phantom{a} E_F [\tilde{\psi}(X, \theta) \tilde{\psi}(X, \theta)^{\prime}] \textnormal{ is positive-definite}; 
\nonumber \\
 && \quad  (iv) \phantom{a} rank E_F[\partial \psi(X, \theta)/\partial \theta^{\prime}] = p \}. \nonumber 
\end{eqnarray}

Under the same identification assumption (Assumption \ref{assump:AM}), we have  $(F_0, \theta_W) \in \mathcal{F}^{W}$. Condition $(i)$  requires that the original function $g(X, \theta)$ is twice continuously differentiable. Condition $(iv)$ requires that $\Gamma(\theta) = [G(\theta); F(\theta)]$ has rank $p$. This is a weaker condition than standard GMM identification (condition $(iv)$ in $\mathcal{F}^C$) because it allows identification to come from the curvature ($F(\theta)$) in addition to the first-order derivative ($G(\theta)$). $\mathcal{F}^{W} $ is defined through a recentering that depends on the pseudo-true value $\theta_W$ (and thus implicitly on $F_0$), and this is innocuous for the local asymptotic minimax analysis; it concerns the minimax risk over neighborhoods of $(F_0,  \theta_W)$, and within such neighborhood, the recentering is fixed. 
\\

 %The efficiency bound in Theorem 2 is derived by applying Chamberlain's (1987) multinomial perturbation argument to the moment condition  The key observation is that locally around F_0, any perturbation F_? = (1??)F_0 + ?H changes both the recentering constant and the pseudo-true value, but to first order, the recentering is fixed at E_{F_0}[?(X, ?_W)], and the relevant local parameter is ?_W(F_?). 

%\footnote{It may be worth important to distinguish local first order condition with the global uniqueness, as the latter is required for the efficiency bounds of consistent estimators.(NECESSARY for uniformity results over $W$ as well?). we consider different class of distributions f the neighborhoods overlap (i.e., a specific distribution $(F, \theta)$ belongs to both $\Gamma^{W_1}$ and $\Gamma^{W_2}$), the simple union merges them into one point. However, the target might differ: $\theta_{W_1} \neq \theta_{W_2}$. The estimator wouldn't know which target it is being penalized against.} 

The next Theorem shows the semiparametric efficiency bounds in the sense of H\'{a}jek's (1972) local asymptotic minimaxity  for the neighborhood  of $(W, (F_0, \theta_W)) \in \mathcal{G} = \{ (W, (F, \theta)) : W \in \mathcal{W}, (F, \theta) \in \mathcal{F}^{W} \}$. \\

\begin{theorem}\label{thm:semiparametric_efficiency}
Suppose that $(F_{0},\theta_W)$ satisfies Condition in $\mathcal{F}^W$ for all $W\in \mathcal{W}$. Let $\Delta^{W}$ be any neighborhood of $(F_{0},\theta_W)$, and let $\Gamma^{W}$ be the subset of $\Delta^W$ such that Condition in $\mathcal{F}^W$ is satisfied for all $(F,\theta) \in \Gamma^W$. Let $\mathcal{G}^{\Gamma} = \{ (W, (F, \theta)) : W \in \mathcal{W}, (F, \theta) \in \Gamma^W \}$. Let $\theta_{1, W}$ be the first component of $\theta_W$ and let $T_{n}(X_{1},\dots,X_{n})$ be any (measurable) estimator for $\theta_{1,W}$. Then for any loss function $l(\cdot) \in \mathcal{L}$, we have
\[
\liminf_{n\rightarrow\infty} \sup_{(W, (F,\theta))\in \mathcal{G}^{\Gamma}} E_{F}\left\{l\left[\sqrt{n}(T_{n}-\theta_{1,W})\right]\right\} \ge \ \int_{-\infty}^{\infty}l(\sigma u)d\Phi(u),
\]
where $\Phi(u)$ is a cumulative distribution function of the standard normal distribution, $\sigma^2 = \sup_W \sigma_{W}^2 <\infty$, $\sigma_{W}^2$ is the (1,1) element of $V_{ME}(W) = (\Gamma(\theta_W)^{\prime} \Sigma (\theta_W)^{-1}\Gamma(\theta_W))^{-1}$.\\
\end{theorem}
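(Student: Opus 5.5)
The plan is to reduce the uniform statement to a family of fixed-$W$ bounds and then take the supremum. For each fixed $W \in \mathcal{W}$, the model $\Gamma^W$ is a moment-condition model of exactly Chamberlain's (1987) type. The data are i.i.d. from $F$, the parameter $\theta$ satisfies the $m(p+1)$ just-stated restrictions $E_F \tilde{\psi}(X,\theta)=0$ with a fixed (known, nonrandom) recentering constant $E_{F_0}[\psi(X,\theta_W)]$, and conditions (i), (iii), (iv) of $\mathcal{F}^W$ are the regularity, nonsingular variance and rank conditions. Applying Chamberlain's local asymptotic minimax theorem to this moment model therefore gives, for any estimator $T_n$ and any neighborhood $\Delta^W$ of $(F_0,\theta_W)$,
\[
\liminf_{n\to\infty}\sup_{(F,\theta)\in\Gamma^W} E_F\, l\big[\sqrt{n}(T_n-\theta_1)\big]\ \ge\ \int l(\sigma_W u)\,d\Phi(u),
\]
where $\sigma_W^2$ is the $(1,1)$ element of $(\Gamma(\theta_W)'\Sigma(\theta_W)^{-1}\Gamma(\theta_W))^{-1}=V_{ME}(W)$.

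Two bookkeeping points arise in this identification. First, Chamberlain's bound uses the uncentered second moment $E_{F_0}[\tilde\psi\tilde\psi']$ at the true parameter. Since $E_{F_0}\tilde\psi(X,\theta_W)=0$, this equals $\Sigma(\theta_W)$ as defined in Theorem~\ref{thm:dist_AM}. Second, the Jacobian of $E\tilde\psi$ is $\Gamma(\theta_W)$, because the recentering does not depend on $\theta$.

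I would recall the structure of Chamberlain's argument briefly, since it transfers verbatim. Approximate $F_0$ by multinomial distributions on a finite partition of the support, chosen so that the moment restrictions are preserved up to vanishing error. Within the neighborhood, consider the resulting parametric submodels indexed by cell probabilities, subject to the restrictions. Apply H\'ajek's (1972) local asymptotic minimax theorem to each parametric submodel. Finally, show that the parametric information bounds converge to $V_{ME}(W)$ as the partition is refined. The neighborhoods in Section~\ref{sec:notation} are weak neighborhoods generated by finitely many integrable functions, so they contain these multinomial perturbations for large $n$. The local parameter is $\theta$ solving $E_F\tilde\psi(X,\theta)=0$, and near $F_0$ this coincides with $\theta_W$ to first order by the implicit function theorem, using rank condition (iv).

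For the uniform statement, the key observation is monotonicity. For every $W_0\in\mathcal{W}$, the set $\mathcal{G}^\Gamma$ contains $\{W_0\}\times\Gamma^{W_0}$, and on that slice the target is $\theta_{1,W_0}$. Hence
\[
\sup_{(W,(F,\theta))\in\mathcal{G}^\Gamma} E_F\, l\big[\sqrt{n}(T_n-\theta_{1,W})\big]\ \ge\ \sup_{(F,\theta)\in\Gamma^{W_0}} E_F\, l\big[\sqrt n(T_n-\theta_{1,W_0})\big].
\]
Taking $\liminf_n$ and applying the fixed-$W$ bound gives the lower bound $\int l(\sigma_{W_0}u)\,d\Phi(u)$ for every $W_0$. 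Taking the supremum over $W_0$ then finishes the argument. For this step, note that $\int l(s u)\,d\Phi(u)$ is nondecreasing in $s>0$ by property (ii) of $\mathcal{L}$, and it is continuous in $s$ by dominated convergence, which uses property (iii) for $s$ in compacts. Therefore $\sup_{W_0}\int l(\sigma_{W_0}u)\,d\Phi=\int l(\sigma u)\,d\Phi$ with $\sigma^2=\sup_W\sigma_W^2<\infty$; when the supremum is not attained, a sequence $W_k$ with $\sigma_{W_k}\uparrow\sigma$ suffices.

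The main obstacle I expect is the fixed-$W$ step. Chamberlain's theorem is stated for a known moment function, whereas here the recentering $E_{F_0}[\psi(X,\theta_W)]$ depends on the unknown $F_0$. I would handle this as the paper indicates after the definition of $\mathcal{F}^W$: the bound is local around $(F_0,\theta_W)$, so I would treat the recentering vector as a fixed constant $c_W$. The model $\{F: E_F[\psi(X,\theta)-c_W]=0\}$ is then a standard moment model, and the minimax risk over $\Gamma^W$ is exactly that model's risk. I would also need to check that the multinomial approximating submodels satisfy conditions (i)--(iv) near $F_0$ and that their information converges to $\Gamma'\Sigma^{-1}\Gamma$. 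For this I would follow Chamberlain's proof directly, using the continuity of $\psi$ and $\partial\psi/\partial\beta'$ and the positive definiteness of $\Sigma(\theta_W)$.
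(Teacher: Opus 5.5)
Your proposal is correct and follows essentially the same route as the paper. Both use the slice inclusion $\{W_0\}\times\Gamma^{W_0}\subset\mathcal{G}^{\Gamma}$ to reduce the uniform bound to fixed-$W$ bounds, obtain each of these from Chamberlain's (1987) multinomial local-minimax argument applied to the recentered moment $\tilde\psi$ (recentering held fixed), and then pass the supremum over $W$ inside via monotonicity of $l$ and a monotone/dominated convergence argument. Your extra bookkeeping (centered versus uncentered second moment, the Jacobian unaffected by recentering, the non-attained supremum) only spells out what the paper's proof leaves implicit.
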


Theorem \ref{thm:semiparametric_efficiency} states that the asymptotic variance of any semiparametric estimator under the model class  $\mathcal{G} = \{ (W, (F, \theta)) : W \in \mathcal{W}, (F, \theta) \in \mathcal{F}^{W} \}$, is no smaller than the bounds with the least favorable weighting matrix $W$ that maximises risk over the class $\mathcal{W}$.  This bound can be achieved by the worst-case variance of the ME estimator $\widehat{\theta}_{ME} (W), W \in \mathcal{W}$, which  can be feasibly approximated by the analytic formula or the bootstrap method. Since $V_{ME}$ doesn't depend on $W$ in the linear IV model, the ME estimator eliminates the question of which $W$ is most efficient among $W \in \mathcal{W}$, and the uniformity can be obtained with no  cost. 

One can straightforwardly show that the misspecified-GMM achieves different semiparametric efficiency bounds based on $V_{GMM}(W, \theta_{W})$ by replacing moment condition ($ii$) in $\mathcal{F}^C$ with the (local) first order condition $E_F[ \partial g(X, \theta)/\partial \theta]^{\prime}W E_F [g(X, \theta)] = 0$ (Mukhin (2019)) or using the augmented parameter space approach as in Imbens (1997) with similar regularity (smoothness) conditions. However, as shown in Corollary \ref{cor:dist_ME},  the standard misspecified-GMM Estimator, which uses the linear combination $\Lambda   = [G(\theta_W)^{\prime} W \quad g(\theta_W)^{\prime} W \otimes I_p]$ dictated by the FOC of the GMM objective, is inefficient relative to the bound we consider here, because it ignores the potential information contained in the variation of the Jacobian.  By using the optimal linear combinations of the augmented moments $\Lambda^* = \Gamma(\theta_W)^{\prime} \Sigma (\theta_W)^{-1}$, the moment vector $\widetilde{\psi}(X, \theta)$ provides an over-identified system for $\theta$, exploiting more information from the degree of misspecification and the identification strength. \\

\section{Monte Carlo Simulation}
\label{sec:simulation}

We compare the finite-sample performance of the standard GMM and the proposed misspecification-efficient estimator in a linear IV model. The data generating process is
\begin{equation}
Y_i = X_i \theta + Z_i'\gamma +\varepsilon_i, \qquad X_i = Z_i'\Pi + v_i,
\end{equation}
with $Z_i = (Z_{1i},Z_{2i})' \sim_{iid} N(0,I_2)$ and $(\varepsilon_i, v_i) \sim N(0,[1, 0.5; 0.5, 1])$. We set $\theta=1$, and vary the ``degree of misspecification" through the direct effect $\gamma=(0,\delta)'$ with $\delta \in \{0,0.5,1,2\}$; $\delta=0$ is the correctly specified benchmark, $\delta > 0$ introduces an exclusion violation through $Z_2$. The sample sizes are $n \in \{200,500, 1000\}$. We set the first-stage coefficient $\Pi=(\pi,2\pi)'$ so that the scaled concentration parameter ($\mu^2/m$)  is 50 or 10, corresponding to moderately strong and weak instrument setups for $n=200$. The pseudo-true value $\theta_W$ is $W$-dependent and moves away from $\theta=1$ as $\delta$ changes. We consider $W = I,  \widehat\Sigma_{11,2}^{-1}$, and we report the moderately strong instrument case here, and the weak instrument design results, together with an additional results under the conventional optimal weighting $W = \widehat\Sigma_{11}^{-1}$ are reported in Appendix C. For each estimator, we report the standard deviation (SD), empirical $95\%$ coverage (Cov), and median CI length (Len) over $2,000$ Monte Carlo replications, using $B=1,000$ bootstrap draws and $S=100$ sample-split replications. SD is normalized, relative to the SD of standard GMM with the correctly-specified benchmark $(n=200, \delta=0)$. We do not report the bias here as it is nearly zero for all estimators targeting the same $\theta_W$.

Tables \ref{tab:sim_I} and \ref{tab:sim_eff} report results for the one-step $(W = I)$, and the two-step GMM $(W = \widehat\Sigma_{11,2}^{-1})$, respectively. For each $(n,\delta)$, we report the following estimators and standard errors (SE) with Wald CI: (i) standard GMM estimator with the conventional SE; (ii) GMM with the misspecification-robust SE; (iii) the oracle ME estimator with efficient  $V_{ME}(W)$; (iv) Sample-splitting ME estimator, the median repeated sample-split estimates with the median standard deviation. We also considered the following bootstrap estimators; (v) HH (Hall and Horowitz, 1996) percentile bootstrap CI; and (vi) the DR (double-recentered) percentile bootstrap CI proposed in Section \ref{sec:bootstrap}.

As the degree of misspecification ($\delta$) increases, we observe several consistent patterns across DGPs; 1) the SD of GMM increases, and 2) the coverage of GMM with conventional SE undercovers, while the misspecification-robust SE restores coverage by correctly capturing the SD of GMM; 3) the oracle ME dominates in SD and CI length across all designs, while retaining coverage (coverage 94-95\% across all $\{ n, \delta \}$ in Tables \ref{tab:sim_I} and \ref{tab:sim_eff}), consistent with Corollary \ref{cor:dist_ME} and \ref{cor:optimal_matrix}. Note that the SD of standard GMM grows substantially with $\delta$, whereas the SD of oracle ME grows only modestly (e.g., SD of GMM $0.43 \rightarrow 1.38$, SD of ME $0.40 \rightarrow 0.79$, when $\delta=0 \rightarrow 2$, $n=1000$ in Table \ref{tab:sim_I}); the oracle ME estimator take into account increases in variance through optimal Jacobian recentering rather than letting the asymptotic variance inflate. The  efficiency gain of the oracle ME over GMM are $25.8\%$, $54.3\%$, $67.4\%$ for $W=I$, and $23.8\%$, $50.1\%$, $63.8\%$ for $W=\widehat\Sigma_{11,2}^{-1}$, when $\delta=0.5, 1, 2$, respectively. We also observe that the efficient weighting choice $W=\widehat\Sigma_{11,2}^{-1}$ reduces the SD of GMM. %consistent with Corollary \ref{cor:optimal_matrix}.  

Furthermore, among feasible procedures, coverage of the sample-splitting ME and DR is very close to the nominal level in most cases, while the HH bootstrap exhibits the coverage distortions.   While the sample-splitting ME provides similar coverage/lengths, the DR CI lengths shows meaningfully smaller CI compare to the CI with analytic GMM misspecification-robust SE (10-17\% smaller in Table \ref{tab:sim_eff} for $\delta \neq 0$). The misspecification-robust SE adds a Jacobian uncertainty term that inflates SE with the moment violations, and this SE can be large under mild or severe misspecification, which leads to over-coverage and larger CI lengths in Table \ref{tab:sim_eff}. %DR bootstrap mimics the sampling distribution directly in the bootstrap world. 

The simulation results confirms the efficiency of the oracle ME estimator, and overall suggest that the DR bootstrap and sample-split procedures provides valid procedures, and reasonable alternatives  to the standard GMM with analytic misspecification-robust standard errors.

\begin{table}[h]
\small
\setlength{\tabcolsep}{3pt}
\centering
\caption{Monte Carlo results under $W = I$}
\label{tab:sim_I}
\begin{tabular}{l *{12}{c}}
\toprule
 & \multicolumn{3}{c}{$\delta=0$} & \multicolumn{3}{c}{$\delta=0.5$} & \multicolumn{3}{c}{$\delta=1$} & \multicolumn{3}{c}{$\delta=2$} \\
\cmidrule(lr){2-4} \cmidrule(lr){5-7} \cmidrule(lr){8-10} \cmidrule(lr){11-13}
Estimator & SD & Cov & Len & SD & Cov & Len & SD & Cov & Len & SD & Cov & Len \\
\midrule
\multicolumn{13}{l}{ \qquad$n=200$} \\
GMM + Conv.\ SE        & 1.00 & 0.945 & 0.386 & 1.01 & 0.916 & 0.351 & 1.62 & 0.867 & 0.476 & 3.10 & 0.869 & 0.908 \\
GMM + Robust SE        & 1.00 & 0.946 & 0.390 & 1.01 & 0.953 & 0.400 & 1.62 & 0.940 & 0.612 & 3.10 & 0.951 & 1.203 \\
Oracle ME              & 0.93 & 0.944 & 0.367 & 0.87 & 0.949 & 0.349 & 1.08 & 0.945 & 0.422 & 1.78 & 0.942 & 0.695 \\
RSS ME median          & 1.04 & 0.974 & 0.466 & 1.04 & 0.976 & 0.471 & 1.67 & 0.963 & 0.710 & 3.18 & 0.970 & 1.378 \\
HH Bootstrap           & 1.00 & 0.940 & 0.395 & 0.90 & 0.910 & 0.359 & 1.25 & 0.866 & 0.488 & 2.41 & 0.875 & 0.930 \\
DR Bootstrap         & 1.03 & 0.943 & 0.404 & 1.04 & 0.951 & 0.412 & 1.62 & 0.940 & 0.629 & 3.19 & 0.953 & 1.246 \\
\midrule
\multicolumn{13}{l}{\qquad$n=500$} \\
GMM + Conv.\ SE        & 0.61 & 0.956 & 0.246 & 0.65 & 0.910 & 0.223 & 1.02 & 0.862 & 0.305 & 1.92 & 0.871 & 0.584 \\
GMM + Robust SE        & 0.61 & 0.956 & 0.247 & 0.65 & 0.945 & 0.256 & 1.02 & 0.940 & 0.392 & 1.92 & 0.952 & 0.764 \\
Oracle ME              & 0.58 & 0.945 & 0.232 & 0.56 & 0.945 & 0.221 & 0.69 & 0.944 & 0.267 & 1.09 & 0.952 & 0.440 \\
RSS ME median          & 0.63 & 0.961 & 0.265 & 0.66 & 0.955 & 0.273 & 1.04 & 0.948 & 0.413 & 1.95 & 0.956 & 0.804 \\
HH Bootstrap           & 0.62 & 0.953 & 0.249 & 0.56 & 0.909 & 0.226 & 0.77 & 0.859 & 0.307 & 1.49 & 0.870 & 0.590 \\
DR Bootstrap         & 0.63 & 0.955 & 0.249 & 0.65 & 0.943 & 0.257 & 1.00 & 0.938 & 0.397 & 1.96 & 0.952 & 0.775 \\
\midrule
\multicolumn{13}{l}{\qquad$n=1000$} \\
GMM + Conv.\ SE        & 0.43 & 0.955 & 0.175 & 0.45 & 0.912 & 0.159 & 0.71 & 0.863 & 0.217 & 1.38 & 0.859 & 0.417 \\
GMM + Robust SE        & 0.43 & 0.955 & 0.175 & 0.45 & 0.950 & 0.181 & 0.71 & 0.945 & 0.278 & 1.38 & 0.948 & 0.542 \\
Oracle ME              & 0.40 & 0.958 & 0.164 & 0.39 & 0.948 & 0.156 & 0.47 & 0.950 & 0.189 & 0.79 & 0.951 & 0.311 \\
RSS ME median          & 0.44 & 0.958 & 0.181 & 0.46 & 0.955 & 0.187 & 0.72 & 0.945 & 0.284 & 1.38 & 0.945 & 0.556 \\
HH Bootstrap           & 0.44 & 0.957 & 0.175 & 0.40 & 0.913 & 0.159 & 0.54 & 0.861 & 0.217 & 1.05 & 0.862 & 0.418 \\
DR Bootstrap         & 0.44 & 0.959 & 0.176 & 0.45 & 0.947 & 0.183 & 0.70 & 0.946 & 0.279 & 1.37 & 0.947 & 0.546 \\
\bottomrule
\end{tabular}
\begin{flushleft}
\scriptsize
\textit{Notes.} We consider (i) standard GMM estimator with the conventional SE; (ii) GMM with the misspecification-robust SE; (iii) the oracle ME estimator with efficient SE, $V_{ME}(W)$; (iv) Repeated sample-splitting ME estimator (RSS ME), the median repeated sample-split estimates with $\pm 1.96\,\mathrm{SD}$; (v) Hall and Horowitz (1996) (HH) percentile bootstrap; and (vi) the Double-Recentered percentile bootstrap  (DR). For GMM, Oracle ME, and RSS ME, ``SD" is the MC standard deviation of the point estimate across MC draws; for HH and DR bootstrap, ``SD" is the average bootstrap standard deviation across MC draws. `SD" is reported relative to the correctly-specified benchmark SD (GMM with $\delta = 0, W = I$), so values $<1$ ($>1$) indicate higher (lower) precision than this benchmark. ``Cov" is empirical $95\%$ coverage; ``Len" is average CI length. Based on $2{,}000$ Monte Carlo replications, $B=1{,}000$ bootstrap draws, $S=100$ sample-split replications.
\end{flushleft}
\end{table}

\begin{table}[h]
\small
\setlength{\tabcolsep}{3pt}
\centering
\caption{Monte Carlo results under $W = \widehat\Sigma_{11,2}^{-1}$}
\label{tab:sim_eff}
\begin{tabular}{l *{12}{c}}
\toprule
 & \multicolumn{3}{c}{$\delta=0$} & \multicolumn{3}{c}{$\delta=0.5$} & \multicolumn{3}{c}{$\delta=1$} & \multicolumn{3}{c}{$\delta=2$} \\
\cmidrule(lr){2-4} \cmidrule(lr){5-7} \cmidrule(lr){8-10} \cmidrule(lr){11-13}
Estimator & SD & Cov & Len & SD & Cov & Len & SD & Cov & Len & SD & Cov & Len \\
\midrule
\multicolumn{13}{l}{ \qquad $n=200$} \\
GMM + Conv.\ SE        & 1.00 & 0.943 & 0.384 & 0.94 & 0.940 & 0.347 & 1.48 & 0.882 & 0.460 & 2.95 & 0.855 & 0.841 \\
GMM + Robust SE        & 1.00 & 0.947 & 0.395 & 0.94 & 0.976 & 0.447 & 1.48 & 0.985 & 0.700 & 2.95 & 0.976 & 1.340 \\
Oracle ME              & 0.93 & 0.944 & 0.367 & 0.87 & 0.950 & 0.349 & 1.09 & 0.942 & 0.422 & 1.83 & 0.936 & 0.695 \\
RSS ME median          & 1.04 & 0.971 & 0.458 & 0.97 & 0.982 & 0.480 & 1.52 & 0.985 & 0.737 & 3.01 & 0.987 & 1.441 \\
HH Bootstrap           & 1.00 & 0.941 & 0.393 & 0.90 & 0.935 & 0.355 & 1.20 & 0.880 & 0.472 & 2.22 & 0.853 & 0.862 \\
DR Bootstrap         & 1.02 & 0.942 & 0.401 & 1.02 & 0.957 & 0.402 & 1.54 & 0.954 & 0.599 & 3.00 & 0.944 & 1.168 \\
\midrule
\multicolumn{13}{l}{\qquad $n=500$} \\
GMM + Conv.\ SE        & 0.61 & 0.953 & 0.245 & 0.61 & 0.933 & 0.223 & 0.94 & 0.876 & 0.295 & 1.84 & 0.856 & 0.541 \\
GMM + Robust SE        & 0.61 & 0.956 & 0.249 & 0.61 & 0.979 & 0.288 & 0.94 & 0.984 & 0.459 & 1.84 & 0.980 & 0.873 \\
Oracle ME              & 0.58 & 0.945 & 0.232 & 0.56 & 0.945 & 0.221 & 0.68 & 0.943 & 0.267 & 1.13 & 0.946 & 0.440 \\
RSS ME median          & 0.63 & 0.958 & 0.264 & 0.61 & 0.975 & 0.280 & 0.95 & 0.977 & 0.439 & 1.88 & 0.978 & 0.858 \\
HH Bootstrap           & 0.62 & 0.955 & 0.248 & 0.56 & 0.933 & 0.225 & 0.75 & 0.877 & 0.297 & 1.38 & 0.854 & 0.545 \\
DR Bootstrap         & 0.63 & 0.956 & 0.250 & 0.64 & 0.957 & 0.253 & 0.95 & 0.950 & 0.378 & 1.85 & 0.947 & 0.731 \\
\midrule
\multicolumn{13}{l}{\qquad $n=1000$} \\
GMM + Conv.\ SE        & 0.43 & 0.953 & 0.174 & 0.42 & 0.932 & 0.158 & 0.66 & 0.883 & 0.210 & 1.31 & 0.855 & 0.385 \\
GMM + Robust SE        & 0.43 & 0.954 & 0.176 & 0.42 & 0.983 & 0.207 & 0.66 & 0.987 & 0.328 & 1.31 & 0.980 & 0.624 \\
Oracle ME              & 0.40 & 0.958 & 0.164 & 0.39 & 0.947 & 0.156 & 0.48 & 0.951 & 0.189 & 0.81 & 0.943 & 0.311 \\
RSS ME median          & 0.44 & 0.957 & 0.180 & 0.43 & 0.971 & 0.195 & 0.67 & 0.973 & 0.305 & 1.32 & 0.970 & 0.600 \\
HH Bootstrap           & 0.44 & 0.954 & 0.175 & 0.40 & 0.933 & 0.159 & 0.53 & 0.881 & 0.211 & 0.97 & 0.852 & 0.385 \\
DR Bootstrap         & 0.44 & 0.955 & 0.176 & 0.45 & 0.962 & 0.179 & 0.67 & 0.949 & 0.267 & 1.30 & 0.946 & 0.517 \\
\bottomrule
\end{tabular}
\end{table}

\section{Empirical Examples}
\label{sec:empirical}

We illustrate our proposed methods with two main empirical applications: returns to schooling example of Card (1995) and Angrist and Krueger (1991). We also consider the dynamic panel data model of income and democracy in Acemoglu et al. (2008) in Appendix D. 

%Reporting this bound shows researchers the smallest asymptotic variance attainable within the ME class of estimators - the ``efficiency frontier" given a choice of $W$. By the invariance argument following Corollary \ref{cor:optimal_matrix}, this efficiency frontier is same across $W$ in the linear model. 

%Following empirical examples illustrate that reporting the ME efficiency bound alongside conventional and robust Wald SEs gives applied researchers a transparent measure of efficiency gain from the oracle-ME (in other words, efficiency loss from the standard misspecified-GMM inference) with an additional information from Jacobian/first-stage projection coefficients. It also shows that ME-GMM, Sample-splitting and DRHH bootstraps deliver feasible inference close to this bound with valid coverage under non-negligible misspecification.

\subsection{Card (1995)}\label{sec:card}

We replicate the well-known college-proximity IV analysis by Card (1995) who uses the National Longitudinal Survey of Young Men for 1976. The sample size is $n=3010$. The outcome is log wage, the regressors include education and some demographic/geographic variables. The instruments include two binary indicators for living near a four-year public and private college, respectively. 

Table \ref{tab:emp_card} compares several estimators across four specifications, and three weighting choices: $W=I$, $W=(Z'Z)^{-1}$ (2SLS), and $W=\widehat\Sigma_{11,2}^{-1}$ (efficient).\footnote{In both of applications we consider here, we also considered conventional optimal weight $W=\widehat\Sigma_{11}^{-1}$ for each specification. Results are almost identical and thus  we do not report in the paper.} The first column (1) corresponds to the baseline specification with experience, experience squared,  an indicator for Black, indicators for residence in an SMSA in 1976, and residence in the South in 1976. Column (3) treat experience and its square as endogenous, adding age and age squared to the instrument sets. These specifications correspond to Table 3 in Card (1995). In Columns (2) and (4), we additionally control for more covariates including cubic and quartic experience, and  all pairwise interactions between baseline covariates, motivated by the ``rich covariates" condition of Blandhol et al. (2025), required for the 2SLS estimand to admit a weakly causal estimand. 

Table \ref{tab:emp_card} shows that GMM point estimates are very similar across specifications and $W$ (0.164-0.175 in four columns with three $W$). The conventional and misspecification-robust SEs are  very similar in this example (within 1.5\%). The Hansen $J$ test does not reject in any specifications (p-values $\ge 0.35$), however, misspecification can still exist even when $J$ test reject. The $J$-test is not informative to decide whether ME bound is informative, and the moment-Jacobian covariance $\Sigma_{12}$ is more relevant in this direction. Although, the ME efficiency bound is essentially similar across specifications (0.0346-0.0366), it is 15-20\% smaller than the conventional/misspecification-robust SE. The ME-GMM bootstrap SD tracks the bound closely, and the DR percentile CI is comparable to the Wald CI with misspecification-robust SEs across specifications. 

Similar to Mogstad and Torgovitsky (2024), who consider similar specifications in Card applications, we find that a Ramsey (1969) RESET test rejects the null of rich covariates in specifications (1) and (2). However, when treating experience as endogenous, we do not reject the RESET test (p-value = 0.88) in Columns in (3) and (4) at 1\% significance level, which indicates that the first-stage nonlinearity is captured by the dummy interactions terms. Since the point estimates are similar across all specification, the same caveat emphasized by Mogstad and Torgovitsky (2024) still applies: two estimates can be very similar even when one corresponds to a non-negatively weighted causal estimand and the other does not.

\begin{table}[hp]
\footnotesize
\centering
\caption{Returns to schooling: Card (1995)}
\label{tab:emp_card}
\begin{tabular}{l cc cc}
\toprule
 & \multicolumn{2}{c}{ $exp/exp^2$ as exogenous} & \multicolumn{2}{c}{$exp/exp^2$ as endogenous} \\
\cmidrule(lr){2-3} \cmidrule(lr){4-5}
 & (1) & (2) & (3) & (4) \\
\midrule
\multicolumn{5}{l}{\textit{Panel A: $W = I$ (identity)}} \\
GMM point                & 0.1763 & 0.1754 & 0.1674 & 0.1689 \\
\quad Conv.\ SE          & (0.0448) & (0.0415) & (0.0430) & (0.0425) \\
\quad Misspec-robust SE  & (0.0445) & (0.0413) & (0.0427) & (0.0423) \\
\quad ME efficiency bound & (0.0366) & (0.0348) & (0.0347) & (0.0346) \\[2pt]
RSS ME median  & 0.1831 & 0.1801 & 0.1743  &0.1829  \\
\quad  & (0.0648) & (0.0622) & (0.0637) & (0.0636) \\
ME-GMM Boot SD           & (0.0375) & (0.0356) & (0.0372) & (0.0363) \\
%\quad $95\%$ CI          & [0.102,0.251] & [0.105,0.250] & [0.097,0.232] & [0.099,0.247] \\[2pt]
%DR Boot SD             & (0.0519) & (0.0432) & (0.1041) & (0.0510) \\
DR Boot $95\%$ CI          & [0.096,0.289] & [0.103,0.283] & [0.094,0.294] & [0.097,0.290] \\
\midrule
\multicolumn{5}{l}{\textit{Panel B: $W = (Z'Z/n)^{-1}$ (2SLS)}} \\
GMM point                & 0.1611 & 0.1642 & 0.1597 & 0.1644 \\
\quad Conv.\ SE          & (0.0405) & (0.0384) & (0.0408) & (0.0411) \\
\quad Misspec-robust SE  & (0.0413) & (0.0390) & (0.0416) & (0.0415) \\
\quad ME efficiency bound & (0.0366) & (0.0348) & (0.0347) & (0.0346) \\[2pt]
RSS ME median  & 0.1748  & 0.1744  & 0.1654 & 0.1725 \\
\quad & (0.0674) & (0.0621) & (0.0629) & (0.0638) \\
ME-GMM Boot SD           & (0.0382) & (0.0360) & (0.0379) & (0.0362) \\
%\quad $95\%$ CI          & [0.078,0.233] & [0.090,0.234] & [0.086,0.236] & [0.093,0.235] \\[2pt]
%DR Boot SD             & (0.0468) & (0.0436) & (0.0501) & (0.0575) \\
DR Boot $95\%$ CI          & [0.090,0.259] & [0.097,0.257] & [0.090,0.279] & [0.091,0.286] \\
\midrule
\multicolumn{5}{l}{\textit{Panel C: $W = \widehat\Sigma_{11,2}^{-1}$ (efficient)}} \\
GMM point                & 0.1606 & 0.1640 & 0.1594 & 0.1643 \\
\quad Conv.\ SE          & (0.0404) & (0.0384) & (0.0408) & (0.0411) \\
\quad Misspec-robust SE  & (0.0413) & (0.0390) & (0.0417) & (0.0416) \\
\quad ME efficiency bound & (0.0366) & (0.0348) & (0.0347) & (0.0346) \\[2pt]
RSS ME median  & 0.1653 & 0.1757 & 0.1719 & 0.1682  \\
\quad  & (0.0645) & (0.0583) & (0.0635) & (0.0608) \\
ME-GMM Boot SD           & (0.0391) & (0.0358) & (0.0373) & (0.0371) \\
%\quad $95\%$ CI          & [0.083,0.239] & [0.093,0.234] & [0.085,0.235] & [0.088,0.242] \\[2pt]
%DR Boot SD             & (0.0456) & (0.0432) & (0.0506) & (0.0467) \\
DR Boot $95\%$ CI          & [0.089,0.268] & [0.093,0.262] & [0.089,0.274] & [0.096,0.280] \\
\midrule
\multicolumn{5}{l}{\textit{Panel D: Diagnostics}} \\
First-stage $F$    & 13.5 & 15.6 & 8.65 & 8.88 \\
 $J$-test $p$-value     & 0.351 & 0.419 & 0.461 & 0.609 \\
RESET $F$ ($p$-value)    & 25.98 (0.000) & 7.48 (0.001) & 3.96 (0.019) & 0.13 (0.883) \\
%$n$                      & 3{,}010 & 3{,}010 & 3{,}010 & 3{,}010 \\
\# excluded IVs   & 2 & 2 & 4 & 4 \\
\# exogenous controls    & 6 & 17 & 4 & 7 \\
\bottomrule
\end{tabular}
\begin{flushleft}
\scriptsize
\textit{Notes.} We report returns to education coefficients with standard errors or bootstrap SD (in parentheses). Column (1) consider baseline controls include indicators for Black, residence in an SMSA in 1976, residence in the South in 1976, and experience and experience squared. Column (2) additionally controls cubic and quartic of experience, and all interactions between covariates. Columns (3) and (4) treat experience and its square as endogenous, adding age and age squared to the instrument sets. Column (3) includes baseline specifications and Column (4) adds pairwise interactions. Specifications correspond to Card (1995, Table 3). $n=3010$. Bootstrap size: $B=2,000$. RSS ME median estimates are the median across $S=100$ replications, with standard errors for the median similarly computed as in Chernozhukov et al. (2018). DR bootstrap percentile 95\% CI in Section \ref{sec:MRHH} is reported in brackets. \end{flushleft}
\end{table}

\subsection{Angrist and Krueger (1991)}\label{sec:app_AK}

In this section, we consider Angrist and Krueger (1991) quarter-of-birth instruments example. We use the US Census 1980 extract of men born 1930-1939 ($n=329, 509$). The outcome is the log wage, and endogenous variable is education with exogenous controls (year-of-birth, race, SMSA, marital status, region, state). The instruments are quarter-of-birth (QOB) with the interactions. In Table \ref{tab:emp_ak}, Column (1) uses the baseline QOB dummies $(m=3)$; (2) uses QOB$\times$YOB interactions ($m=30$), (3) adds QOB$\times$State interactions ($m=180$) which may exhibits many and weak instruments issues as noted in Bound et al. (1995).  These specifications correspond to Angrist and Krueger (1991, Table V Columns 2 and 4 and Table VII Column 8). For each case, we consider the basic controls and the flexible specification that adds all interactions between controls. 

The results in Table \ref{tab:emp_ak} parallel Table \ref{tab:emp_card} in the Card (1995) example. The misspecification-robust SE only marginally exceeds the conventional SE, and the gap is larger for many instrument specifications.   The ME-GMM bootstrap SD tracks the ME efficiency bound closely, in line with the simulation evidence.  The ME efficiency bound is significantly smaller than the conventional/misspecification-robust SE when $W=I$. When we use 2SLS ($(Z'Z)^{-1}$) and efficient weights ($\widehat\Sigma_{11,2}^{-1}$), the ME efficiency bounds are almost similar to the conventional SE, but still 10-20\% smaller than the misspecification-robust SE as predicted by our theory. 

Adding interaction terms barely changes the GMM point estimates. However, GMM point estimate under $W=I$ is systematically higher ($0.091$--$0.113$) than under 2SLS or efficient weighting ($0.081$--$0.099$). Furthermore, DR bootstrap CI is close to the Wald CIs under 2SLS and efficient weighting, but it is noticeably wider when $W = I$. In all specifications, we reject a Ramsey (1969) RESET test. ME-GMM inference still applies to the pseudo-true value $\theta_W$, but the rich covariates condition of Blandhol et al. (2025) fails, so causal interpretations of all of 2SLS/GMM estimates in Table \ref{tab:emp_ak} may be fragile in this application. 

\begin{table}[hp]
\footnotesize
\centering
\caption{Returns to schooling: Angrist and Krueger (1991)}
\label{tab:emp_ak}
\begin{tabular}{l cc cc cc}
\toprule
 & \multicolumn{2}{c}{(1)} & \multicolumn{2}{c}{(2)} & \multicolumn{2}{c}{(3)} \\
\cmidrule(lr){2-3} \cmidrule(lr){4-5} \cmidrule(lr){6-7}
 & basic & flexible & basic & flexible & basic & flexible \\
\midrule
\multicolumn{7}{l}{\textit{Panel A: $W = I$ (identity)}} \\
GMM point                & 0.1127 & 0.1116 & 0.0919 & 0.0904 & 0.0906 & 0.0881 \\
\quad Conv.\ SE          & (0.0250) & (0.0252) & (0.0193) & (0.0194) & (0.0148) & (0.0149) \\
\quad Misspec-robust SE  & (0.0252) & (0.0254) & (0.0210) & (0.0211) & (0.0162) & (0.0163) \\
\quad ME efficiency bound & (0.0204) & (0.0207) & (0.0164) & (0.0166) & (0.0095) & (0.0096) \\[2pt]
RSS ME median            & 0.1102 & 0.1099 & 0.0828 & 0.0804 & 0.0972 & 0.0890 \\
\quad                    & (0.0342) & (0.0366) & (0.0282) & (0.0337) & (0.0243) & (0.0264) \\
ME-GMM Boot SD           & (0.0204) & (0.0202) & (0.0151) & (0.0154) & (0.0080) & (0.0080) \\
DR Boot $95\%$ CI      & [0.063,0.166] & [0.062,0.168] & [0.051,0.135] & [0.050,0.134] & [0.058,0.124] & [0.057,0.121] \\
\midrule
\multicolumn{7}{l}{\textit{Panel B: $W = (Z'Z/n)^{-1}$ (2SLS)}} \\
GMM point                & 0.0990 & 0.0981 & 0.0806 & 0.0792 & 0.0831 & 0.0809 \\
\quad Conv.\ SE          & (0.0207) & (0.0210) & (0.0165) & (0.0166) & (0.0098) & (0.0098) \\
\quad Misspec-robust SE  & (0.0210) & (0.0213) & (0.0178) & (0.0179) & (0.0115) & (0.0115) \\
\quad ME efficiency bound & (0.0204) & (0.0207) & (0.0164) & (0.0166) & (0.0095) & (0.0096) \\[2pt]
RSS ME median            & 0.0998 & 0.1012 & 0.0734 & 0.0787 & 0.0774 & 0.0740 \\
\quad                    & (0.0332) & (0.0343) & (0.0303) & (0.0293) & (0.0255) & (0.0202) \\
ME-GMM Boot SD           & (0.0202) & (0.0210) & (0.0151) & (0.0151) & (0.0081) & (0.0079) \\
DR Boot $95\%$ CI      & [0.058,0.142] & [0.058,0.144] & [0.045,0.117] & [0.042,0.114] & [0.061,0.106] & [0.058,0.103] \\
\midrule
\multicolumn{7}{l}{\textit{Panel C: $W = \widehat\Sigma_{11,2}^{-1}$ (efficient)}} \\
GMM point                & 0.0991 & 0.0982 & 0.0821 & 0.0807 & 0.0836 & 0.0814 \\
\quad Conv.\ SE          & (0.0207) & (0.0210) & (0.0165) & (0.0166) & (0.0096) & (0.0096) \\
\quad Misspec-robust SE  & (0.0210) & (0.0213) & (0.0178) & (0.0179) & (0.0116) & (0.0117) \\
\quad ME efficiency bound & (0.0204) & (0.0207) & (0.0164) & (0.0166) & (0.0095) & (0.0096) \\[2pt]
RSS ME median            & 0.0998 & 0.0979 & 0.0781 & 0.0707 & 0.0795 & 0.0754 \\
\quad                    & (0.0322) & (0.0348) & (0.0288) & (0.0300) & (0.0267) & (0.0215) \\
ME-GMM Boot SD           & (0.0201) & (0.0205) & (0.0149) & (0.0150) & (0.0081) & (0.0081) \\
DR Boot $95\%$ CI      & [0.060,0.141] & [0.056,0.141] & [0.044,0.117] & [0.044,0.116] & [0.061,0.106] & [0.059,0.103] \\
\midrule
\multicolumn{7}{l}{\textit{Panel D: Diagnostics}} \\
First-stage $F$ (ed)     & \multicolumn{2}{c}{30.5} & \multicolumn{2}{c}{4.75} & \multicolumn{2}{c}{2.43} \\
Hansen $J$ $p$-value     & 0.310 & 0.348 & 0.821 & 0.838 & 0.774 & 0.808 \\
RESET $F$ ($p$-value)    & 251.9 ($<$.001) & 72.0 ($<$.001) & 248.3 ($<$.001) & 70.5 ($<$.001) & 68.4 ($<$.001) & 66.1 ($<$.001) \\
\bottomrule
\end{tabular}
\begin{flushleft}
\scriptsize
\textit{Notes.} We report returns to education coefficients with standard errors or bootstrap SD (in parentheses). Controls include year-of-birth, race, SMSA, marital status, region, and state dummies. Column (1) uses the baseline QOB dummies $(m=3)$; (2) uses QOB$\times$YOB interactions ($m=30$), (3) adds QOB$\times$State interactions ($m=180$). ``Basic" includes baseline specifications and ``flexible" adds all pairwise dummy interactions. Specifications correspond to Angrist and Krueger (1991, Table V Columns 2 and 4 and Table VII Column 8). $n=329,509$. Bootstrap size: $B=2,000$. RSS ME median estimates are the median across $S=100$ replications, with standard errors for the median similarly computed as in Chernozhukov et al. (2018). DR bootstrap percentile 95\% CI in Section \ref{sec:MRHH} is reported in brackets.
\end{flushleft}
\end{table}

\section{Conclusion}
\label{sec:conclusion}

This paper develops the theory of efficient GMM estimation under moment misspecification. The key insight is that the influence function of the standard misspecified-GMM estimator depends both on the original moment conditions and their Jacobian. By optimally weighting this augmented system, we propose a misspecification-efficient estimator that achieves a smaller asymptotic variance than standard GMM for the same pseudo-true value. 
 We also establish semiparametric efficiency bounds uniform over a class of weighting matrices, and propose a bootstrap implementation based on double recentering of both moment and Jacobian conditions.

Several directions remain for future work. First, extending the theory to dependent data with HAC-type weighting matrices would significantly broaden the applicability of the ME estimator to time-series settings. Second, the relationship between the ME estimator and generalized empirical likelihood (GEL) estimators under misspecification deserves further  investigation. Under misspecification, these estimators have different pseudo-true values and different asymptotic distributions, so generally difficult to compare in a unified framework.  We note however, that the continuously updating estimator (CUE) implicitly updates the weighting matrix as a function of $\theta$, which may share connections with the augmented-moment approach. Third, extending the framework to accommodate weak identification, where the Jacobian $G(\theta_W)$ is near-singular, would be important given the connection to the literature (e.g., Lee and Liao (2018) and Kleibergen (2005)) that already exploits Jacobian information. Finally, developing practical guidance on the choice of the weighting matrix class $\mathcal{W}$ would provide applied researchers with concrete recommendations for implementation.

\appendix
\section*{References}

\begin{description}

\item[] Abadir, K.M., and Magnus, J.R. (2005). Matrix Algebra. \emph{Cambridge University Press}.

\item[] Acemoglu, D., Johnson, S., Robinson, J. A., and Yared, P. (2008). Income and Democracy. \emph{American Economic Review}, 98(3), 808-842.

\item[] Andrews, D.W.K., and Cheng, X. (2012) Estimation and inference with weak, semi-strong and strong identification. \emph{Econometrica}, 80, 2153-2211.

\item[] Andrews, D. W. K., and Kwon, S. (2024). Misspecified Moment Inequality Models: Inference and Diagnostics, \emph{The Review of Economic Studies}, 91(1), 45-76.

\item[] Andrews, I. (2017). On the Structure of IV Estimands. \emph{Journal of Econometrics},  211 (1), 294-307.

\item[]Andrews, I., Barnhard, H., and Carlson, J. (2024). True and pseudo-true parameters. \emph{Working Paper}.

\item[] Andrews, I., Barahona, N., Gentzkow, M., Rambachan, A., and Shapiro, J. M. (2025). Structural estimation under misspecification: Theory and implications for practice. \emph{Quarterly Journal of Economics}, 140(3):1801-1855.

\item[]Andrews, I., Chen, J., and Techhio, O. (2025). The Purpose of an Estimator is What it Does. \emph{Econometric Society 2025 World Congress Volume}.

\item[]  Andrews, I., Gentzkow, M., and Shapiro, J. M. (2017). Measuring the sensitivity 
of parameter estimates to estimation moments. \emph{Quarterly Journal of Economics}, 
132(4), 1553-1592.

\item[] Angrist, J.D., and Imbens, G.W. (1995). Two-stage least squares estimation of average causal effects in models with variable treatment intensity, \emph{Journal of the American Statistical Association}, 90, 431-442.

\item[] Angrist, J. D., and Krueger, A. B. (1995). Split-sample instrumental variables estimates of the return to schooling. \emph{Journal of Business and Economic Statistics}, 13(2), 225-235.

\item[]  Arellano, M., and Bond, S. (1991). Some tests of specification for panel data: Monte Carlo evidence and an application to employment equations.  \emph{Review of Economic Studies}, 58(2), 277-297.

\item[]Armstrong, T. B., and Koles\'{a}r M. (2021). Sensitivity analysis using approximate moment condition models. \emph{Quantitative Economics}, 12(3), 705-740.

\item[] Blandhol, C., Bonney, J., Mogstad, M., and Torgovitsky, A. (2025). When is TSLS Actually LATE?. \emph{Working Paper}

\item[]  Bonhomme, S., and Weidner, M. (2022). Minimizing sensitivity to model 
misspecification. \emph{Quantitative Economics}, 13(3), 907-954.

\item[] Bound, J., Jaeger, D. A., and Baker, R. M. (1995). Problems with instrumental variables estimation when the correlation between the instruments and the endogenous explanatory variable is weak. \emph{Journal of the American Statistical Association}, 90 (430), 443-450.

\item[] Brown, B. W., and Newey, W. K. (2002). Generalized method of moments, efficient bootstrapping, and improved inference. \emph{Journal of Business and Economic Statistics}, 20(1), 63-69.

\item[] Cervellati, M., Jung, F., Sunde, U., and Vischer, T. (2014). Income and democracy: Comment. \emph{American Economic Review}, 104(2), 707-719.

\item[]Chamberlain, G. (1987). Asymptotic efficiency in estimation with conditional moment restrictions. \emph{Journal of Econometrics}, 34(3), 305-334.

\item[] Chernozhukov, V., Chetverikov, D., Demirer, M., Duflo, E., Hansen, C., Newey, W., and Robins, J. (2018). Double/debiased machine learning for treatment and structural parameters. \emph{The Econometrics Journal}, 21, 1-68.

\item[]  Christensen, T., and Connault, B. (2023). Counterfactual sensitivity and 
robustness. \emph{Econometrica}, 91(1), 263-298.

\item[]  Conley, T.G., Hansen, C.B., and Rossi, P.E. (2012). Plausibly exogenous, \emph{The Review of Economics and Statistics}, 94(1), 260-272.

\item[] Dovonon, P and Goncalves, S. (2017). Bootstrapping the GMM overidentification test under first-order underidentification. \emph{Journal of Econometrics}, 201, 43-71.

\item[] Dovonon, P., Renault, E. (2013). Testing for common conditionally heteroskedastic factors. \emph{Econometrica}, 81, 2561-2586.

\item[] Gin\'{e}, E., and Zinn, J. (1990). Bootstrapping general empirical measures, \emph{The Annals of Probability}, 18(2), 851-869.

\item[]Godambe, V. P. (1960). An optimum property of regular maximum likelihood estimation. \emph{Annals of Mathematical Statistics}, 31(4), 1208-1211.

\item[] Gospodinov, N., Kan, R., and Robotti, C. (2014). Misspecification-robust inference in linear asset-pricing models with possibly nonlinear pricing kernels. \emph{Review of Financial Studies}, 27(1), 2139-2170.

\item[]Hahn, J. (1996). A note on bootstrapping generalized method of moments estimators. \emph{Econometric Theory}, 12(1), 187-197.

\item[] H\'{a}jek, J. (1972). Local asymptotic minimax and admissibility in estimation. In Proceedings of the Sixth Berkeley Symposium on Mathematical Statistics and Probability, Vol. 1, \emph{University of California Press}, 175-194.

\item[]Hall, A. R., and Inoue, A. (2003). The large sample behavior of the generalized method of moments estimator in misspecified models. \emph{Journal of Econometrics}, 114(2), 361-394.

\item[]Hall, A. R., and Pelletier, D. (2011). Nonnested testing in models estimated via generalized method of moments. \emph{Econometric Theory}, 27(2), 443-456.

\item[]Hall, P., and Horowitz, J. L. (1996). Bootstrap critical values for tests based on generalized-method-of-moments estimators. \emph{Econometrica}, 64(4), 891-916.

\item[] Hansen, B.E, and Lee, S. (2021). Inference for Iterated GMM Under Misspecification. \emph{Econometrica}, 89, 1419-1447

\item[]Hansen, L. P. (1982). Large sample properties of generalized method of moments estimators. \emph{Econometrica}, 50(4), 1029-1054.

\item[]Hansen, L. P., and Jagannathan, R. (1997). Assessing specification errors in stochastic discount factor models. \emph{Journal of Finance}, 52(2), 557-590.

\item[]Hwang, J., Kang, B. and Lee, S. (2022). A Doubly Corrected Robust Variance Estimator for Linear GMM. \emph{Journal of Econometrics}, 229(2), 276-298.

\item[]Imbens, G. W. (1997). One-step estimators for over-identified generalized method of moments models. \emph{Review of Economic Studies}, 64(3), 359-383.

%\item[]Imbens, G. W., and Angrist, J. D. (1994). Identification and estimation of local average treatment effects.  \emph{Econometrica}, 62(2), 467-475.

\item[] Kan, R, and Robotti, C. (2009). Model Comparison Using the Hansen-Jagannathan Distance. \emph{The Review of Financial Studies},  22(9), 3449-3490.

\item[] Kitamura, Y. (2000). Comparing misspecified dynamic econometric models using nonparametric likelihood, \emph{Working Paper}, University of Pennsylvania.

\item[] Kleibergen, F. (2005). Testing Parameters in GMM Without Assuming that They Are Identified. \emph{Econometrica}, 73(4), 1103-1123.

\item[] Kleibergen, F., and Zhan, Z. (2025). Double robust inference for continuous updating GMM, \emph{Quantitative Economics}, 16(1), 295-327.

\item[]Koles\'{a}r, M. (2013). Estimation in an Instrumental Variables Model With Treatment Effect Heterogeneity, \emph{Working Paper}.

\item[]Lee, S. (2014). Asymptotic refinements of a misspecification-robust bootstrap for generalized method of moments estimators \emph{Journal of Econometrics}, 178(3), 398-413.

\item[] Lee, J.H, and Liao, Z. (2018). On standard Inference for GMM with Local Identification Failure of Known Forms. \emph{Econometric Theory}, 34, 790-814.

\item[] Marmer, V., and Otsu, T. (2012). Optimal comparison of misspecified moment restriction models under a chosen measure of fit. \emph{Journal of Econometrics}, 170(2), 538-550.

\item[]Mogstad, M., and Torgovitsky, A. (2024). Instrumental Variables with Unobserved Heterogeneity in Treatment Effects. In \emph{Handbook of Labor Economics}, Volume 5, forthcoming.

\item[]Mogstad, M., Torgovitsky, A., and Walters, C. (2021). The Causal Interpretation of Two-Stage Least Squares with Multiple Instrumental Variables. \emph{American Economic Review}, 111(11), 3663-3698.

\item[] Mukhin, Y. (2019). Sensitivity of Regular Estimators. \emph{Working Paper}

\item[]M\"{u}ller, U. K. (2013). Risk of Bayesian inference in misspecified models, and the sandwich covariance matrix. \emph{Econometrica}, 81(5), 1805-1849.

\item[]Newey, W. K., and McFadden, D. (1994). Large sample estimation and hypothesis testing. \emph{In Handbook of Econometrics}, Vol. 4, 2111-2245.

\item[] Ramsey, J. B. (1969). Tests for specification errors in classical linear least-squares regression analysis. \emph{Journal of the Royal Statistical Society: Series B}, 31(2), 350-371.

\item[]Rivers, D., and Vuong, Q. (2002). Model selection tests for nonlinear dynamic models. \emph{The Econometrics Journal}, 5(1), 1-39.

\item[]Shi, X. (2015). Model selection tests for moment inequality models, \emph{Journal of Econometrics}, 187(1), 1-17.

\item[]Sloczy\'{n}ski, T. (2024). When Should We (Not) Interpret Linear IV Estimands as LATE?, \emph{Working Paper}.

\item[] Stock, J.H., and Wright, J.H. (2000). GMM with weak identification, \emph{Econometrica}, 68, 1055-1096.

\item[] Vuong, Q.H. (1989). Likelihood ratio tests for model selection and non-nested hypotheses, \emph{Econometrica}, 57(2), 307-333.

\item[]White, H. (1982). Maximum likelihood estimation of misspecified models. \emph{Econometrica}, 50(1), 1-25.

\item[] Windmeijer, F. (2005). A finite sample correction for the variance of linear efficient two-step GMM estimators, \emph{Journal of Econometrics}, 126(1), 25-51.

\item[] Windmeijer, F., and Santos Silva, J.M.C. (1997). Endogeneity in count data models: An application to demand for health care, \emph{Journal of Applied Econometrics}, 12(3), 281-294. 

\end{description}

\section*{Appendix A: Proofs}

\noindent
\textbf{Proof of Theorem \ref{thm:dist_AM}}

Let $G(X_{i}, \theta)=\partial g(X_i, \theta)/\partial\theta^{\prime}$, $G_{n}(\theta)=n^{-1}\sum_{i=1}^{n}G(X_{i}, \theta)$, $G(\theta)=E[G_{n} (\theta)]$. Also let, $ F(X_i, \theta) = \frac{\partial}{\partial \theta^{\prime}} vec (G(X_{i}, \theta)^{\prime}), F_n (\theta)= n^{-1}\sum_{i=1}^{n}F(X_{i}, \theta)$ and $F (\theta)= E[F_n(\theta)]$.  Using the first-order Taylor expansion from the FOC of the GMM estimator, 
\[
G_n(\widehat{\theta}_{GMM} (W))^{\prime} Wg_n(\widehat{\theta}_{GMM} (W)) = 0,\] 
we have 
\begin{equation*}
0 = G_n(\theta_{W} )^{\prime} W g_n(\theta_{W} ) + H_{n} (\widetilde{\theta}) (\widehat{\theta}_{GMM} (W) - \theta_{W} )
\end{equation*}
where
\begin{equation*}
 H_{n}(\theta) = G_n(\theta)^{\prime} W G_n(\theta) +  (g_n(\theta)^{\prime} W \otimes I_{p}) F_n(\theta)
\end{equation*}
and $\widetilde{\theta}$ lies between $\widehat{\theta}_{GMM} (W)$ and $\theta_{W}$. By rearranging,  we have
\begin{eqnarray*}
\widehat{\theta}_{GMM} (W) - \theta_{W}  &=&- H_{n} (\widetilde{\theta})^{-1}G_n(\theta_{W} )^{\prime} W g_n(\theta_{W} ) \\
& =& -H_{n} (\widetilde{\theta})^{-1}  [ G_n(\theta_{W} )^{\prime} W (g_n(\theta_{W} ) - g(\theta_{W} ))  +  (G_n(\theta_{W} ) -  G(\theta_{W} ))^{\prime} W g(\theta_{W} )]\\
& =& -H_{n} (\widetilde{\theta})^{-1}  [ G_n(\theta_{W} )^{\prime} W (g_n(\theta_{W} ) - g(\theta_{W} ))  +  (g(\theta_{W} )^{\prime} W \otimes I_p) vec ((G_n(\theta_{W} ) -  G(\theta_{W} ))^{\prime})] 
\end{eqnarray*}
where we use the FOC of the population GMM,  $G(\theta_{W} )^{\prime} W g(\theta_{W} ) = 0$, in the second equation, and use the properties of the Kronecker product, $vec (AB) = (B^{\prime} \otimes I) vec(A)$.

Under Assumption \ref{assump:AM}, which is the standard regularity conditions for the GMM estimators, we have $\widehat{\theta}_{gmm} (W) \overset{p}{\rightarrow}\theta_{W} , G_n(\theta_{W} ) \overset{p}{\rightarrow} G(\theta_{W} ), H_{n} (\widetilde{\theta}) \overset{p}{\rightarrow} H(\theta_{W} )$, where $H (\theta) = G(\theta)^{\prime} W G(\theta) +  (g(\theta)^{\prime} W \otimes I_{p}) F(\theta) = A(W, \theta) \Gamma (\theta)$. Thus, 
\begin{eqnarray}
\sqrt{n} (\widehat{\theta}_{GMM} (W) - \theta_{W} ) &=& - H(\theta_{W} )^{-1} \frac{1}{\sqrt{n}} \sum_{i=1}^{n} \{  G(\theta_{W} )^{\prime} W [g(X_i, \theta_{W} ) - E g(X_i, \theta_{W} )]  \nonumber \\
& & + (g(\theta_{W} )^{\prime} W \otimes I_p) vec[G(X_i, \theta_{W} )^{\prime} - EG(X_i, \theta_{W} )^{\prime}] \}  + o_p(1) \nonumber \\
&=& - (A (W, \theta_{W} ) \Gamma (\theta_{W} ))^{-1} A (W, \theta_{W} )\frac{1}{\sqrt{n}} \sum_{i=1}^{n} \psi_{i}(X_i, \theta_{W} ) - E[\psi_{i}(X_i, \theta_{W} )] + o_p(1).\nonumber
\end{eqnarray}
Then, the  asymptotic distribution of the GMM estimators in \eqref{asymptotic_dist}  follows by CLT and CMT under Assumption \ref{assump:AM}.

We next focus on the $\widetilde{\theta} (\Lambda)$. A consistency result for  $\widetilde{\theta} (\Lambda)  \overset{p}{\rightarrow}\theta_{W}$ holds under Assumption \ref{assump:AM}. Then, by the Taylor-expansion of \eqref{eqn_mestimation_sample} around $\theta_{W}$, we have

\[
0 = \Lambda \widetilde{\psi}_n(\theta_{W}) + \Lambda \Gamma_{n}(\widetilde{\theta}) (\widetilde{\theta} (\Lambda) - \theta_{W})
\]
where 
 \[
 \Gamma_{n}(\theta) =  \frac{1}{n} \sum_{i=1}^{n} \frac{\partial}{\partial \theta^{\prime}} \psi(X_i, \theta) =  \begin{bmatrix}
G_n(\theta) \\
F_n(\theta)
\end{bmatrix} \\
 \]
 and $\widetilde{\theta}$ lies between $\widehat{\theta} (\Lambda)$ and $\theta_{W}$. Then, under Assumption \ref{assump:AM}, 
  \[
\widetilde{\theta} (\Lambda) - \theta_{W} = - (\Lambda \Gamma (\theta_{W}))^{-1} \Lambda  [ \psi_n(\theta_{W}) - E  [\psi (X_i, \theta_{W})] ] + o_p(1)
 \]
where $\psi_n (\theta) = \frac{1}{n} \sum_{i=1}^{n} \psi(X_i, \theta)$, and the asymptotic distribution in Theorem \ref{thm:dist_AM} holds by CLT. \qed \\

\noindent
\textbf{Proof of Corollary \ref{cor:dist_ME}}

With the choice of $\Lambda = \Lambda^{*}_{W}$ in \eqref{optimal_weighting}, the asymptotic distribution of ME estimator in \eqref{optimal_variance} follows by Theorem  \ref{thm:dist_AM} as the asymptotic variance of AM estimator reduces to 
\begin{eqnarray*}
V_{AM} (\Lambda^{*}_{W}, \theta_W) &=& (\Gamma(\theta_W)^{\prime} \Sigma (\theta_W)^{-1} \Gamma (\theta_W))^{-1} \Gamma(\theta_W)^{\prime} \Sigma (\theta_W)^{-1} \Sigma (\theta_{W}) \Sigma (\theta_W)^{-1}   \Gamma(\theta_W) \\
& & \quad ( (\Gamma(\theta_W)^{\prime} \Sigma (\theta_W)^{-1} \Gamma (\theta_W))^{\prime})^{-1}\\
&=& (\Gamma(\theta_W)^{\prime} \Sigma (\theta_W)^{-1}\Gamma(\theta_W))^{-1}.
\end{eqnarray*}
We now show that $V_{AM} (\Lambda, \theta_W) -(\Gamma(\theta_W)^{\prime} \Sigma (\theta_W)^{-1}\Gamma(\theta_W))^{-1}$ is positive semi-definite. We first note that the following projection matrix is positive semidefinite, 
\[
I - \Sigma(\theta_W)^{-1/2} \Gamma(\theta_W) (\Gamma(\theta_W)^{\prime}\Sigma(\theta_W)^{-1}  \Gamma(\theta_W))^{-1}\Gamma(\theta_W)^{\prime} \Sigma(\theta_W)^{-1/2} \geq 0. 
\]
We also note that if a $L \times L$ square matrix $G$ is positive semidefinite, then  $HGH^{\prime}$ is also positive semidefinite for any $K \times L$ matrix $H$. So, with $H = (\Lambda  \Gamma(\theta_W))^{-1} \Lambda  \Sigma(\theta_W)^{1/2}$, we have

\begin{eqnarray*}
&(\Lambda  \Gamma(\theta_W))^{-1} \Lambda  \Sigma(\theta_W)^{1/2} \big( I - \Sigma(\theta_W)^{-1/2} \Gamma(\theta_W) (\Gamma(\theta_W)^{\prime}\Sigma(\theta_W)^{-1} \Gamma(\theta_W))^{-1}\Gamma(\theta_W)^{\prime} \Sigma(\theta_W)^{-1/2} \big)  \\
&  \Sigma(\theta_W)^{1/2} \Lambda^{\prime} ( \Gamma(\theta_W)^{\prime} \Lambda^{\prime})^{-1}  \geq 0.
\end{eqnarray*}
This implies, 
\[V_{AM} (\Lambda, \theta_W) = (\Lambda \Gamma (\theta_W))^{-1} \Lambda \Sigma (\theta_{W} ) \Lambda ^{\prime} ( (\Lambda \Gamma (\theta_W))^{\prime})^{-1} - (\Gamma(\theta_W)^{\prime} \Sigma (\theta_W)^{-1}\Gamma(\theta_W))^{-1}\geq 0.
\]

For the last part of Corollary,  we have
\begin{eqnarray*}
\Sigma (\theta_W)^{-1} &=&\begin{bmatrix} \Sigma_{11, 2}^{-1}& - \Sigma_{11}^{-1} \Sigma_{12} \Sigma_{22,1}^{-1}  \\ - \Sigma_{22, 1}^{-1} \Sigma_{21} \Sigma_{11}^{-1} & \Sigma_{22, 1}^{-1}\end{bmatrix}
\end{eqnarray*}
by the inverting partitioned matrices, where $\Sigma_{11, 2} = (\Sigma_{11} - \Sigma_{12}\Sigma_{22}^{-1} \Sigma_{21})$ and $\Sigma_{22,1} = (\Sigma_{22} - \Sigma_{21}\Sigma_{11}^{-1} \Sigma_{12})$, and $
\Sigma (\theta_W) = \begin{bmatrix} \underbrace{\Sigma_{11}}_{m \times m} & \underbrace{\Sigma_{12}}_{m \times mp} \\ \underbrace{\Sigma_{21}}_{mp \times m} & \underbrace{\Sigma_{22}}_{mp \times mp} \end{bmatrix}.$

Then, 
\begin{equation*}
\Lambda^{*}_{W}  = \Gamma(\theta_W)^{\prime} \Sigma (\theta_W)^{-1}  = [G(\theta_W)^{\prime} \Sigma_{11,2}^{-1} - F(\theta_W)^{\prime} \Sigma_{22, 1}^{-1} \Sigma_{21} \Sigma_{11}^{-1}  \quad - G(\theta_W)^{\prime} \Sigma_{11}^{-1} \Sigma_{12} \Sigma_{22,1}^{-1} + F(\theta_W)^{\prime} \Sigma_{22, 1}^{-1}], 
\end{equation*}
and 
\begin{eqnarray}
 \Gamma(\theta_W)^{\prime} \Sigma (\theta_W)^{-1}   \Gamma(\theta_W) &=& G(\theta_W)^{\prime} \Sigma_{11, 2}^{-1}  G(\theta_W) - F(\theta_W)^{\prime} \Sigma_{22, 1}^{-1} \Sigma_{21} \Sigma_{11}^{-1}G(\theta_W)\nonumber  \\ 
 &&- G(\theta_W)^{\prime} \Sigma_{11}^{-1} \Sigma_{12} \Sigma_{22,1}^{-1} F(\theta_W)^{\prime} + F(\theta_W)^{\prime}\Sigma_{22, 1}^{-1} F(\theta_W)\label{eqn:cor11}
\end{eqnarray}

By the following Woodbury matrix identity for nonsingular matrix $A, C$, and $A-BCB'$, 
\[
(A-BCB')^{-1} = A^{-1}  - A^{-1} B (B'A^{-1}B - C^{-1})^{-1} B'A^{-1},
\]

we have
\[
\Sigma_{11, 2}^{-1} = \Sigma_{11}^{-1} - \Sigma_{11}^{-1}\Sigma_{12} (\Sigma_{21} \Sigma_{11}^{-1} \Sigma_{12} -  \Sigma_{22})^{-1}  \Sigma_{21} \Sigma_{11}^{-1} = \Sigma_{11}^{-1} + \Sigma_{11}^{-1}\Sigma_{12} \Sigma_{22, 1}^{-1}   \Sigma_{21} \Sigma_{11}^{-1}  
\]
where the last equality uses the definition of $\Sigma_{11, 2}$. Thus, \eqref{eqn:cor11} can be written as
\begin{eqnarray*}
 G(\theta_W)^{\prime} \Sigma_{11}^{-1} G(\theta_W) + (F(\theta_W) - \Sigma_{21} \Sigma_{11}^{-1} G(\theta_W))^{\prime} \Sigma_{22, 1}^{-1}  (F(\theta_W) - \Sigma_{21} \Sigma_{11}^{-1} G(\theta_W)).
\end{eqnarray*}

 This completes the proof. 
\qed\\

\noindent
\textbf{Proof of Corollary \ref{cor:optimal_matrix}}

When the model is linear, so that the Jacobian $G(\theta) = G$ does not depend on $\theta$, $F(\theta) = 0$,
\begin{equation*}
\Lambda^{*}_{W}  =  [G^{\prime} \Sigma_{11,2}^{-1}  \quad - G^{\prime} \Sigma_{11, 2}^{-1} \Sigma_{12} \Sigma_{22}^{-1} ], 
\end{equation*}
by Corollary \ref{cor:dist_ME} and $\Sigma_{11, 2}^{-1} \Sigma_{12} \Sigma_{22}^{-1} =  \Sigma_{11}^{-1} \Sigma_{12} \Sigma_{22,1}^{-1}$. Thus the equation \eqref{eqn:cor11} becomes
\begin{eqnarray*}
 \Gamma(\theta_W)^{\prime} \Sigma (\theta_W)^{-1}   \Gamma(\theta_W) &=& G^{\prime} \Sigma_{11, 2}^{-1}  G.
\end{eqnarray*}

Next, we show that $\Sigma_{11,2}(\theta_W)$ doesn't depend on $\theta_W$. 
Fix an arbitrary $\theta_0\in\mathbb{R}^p$ and let $\delta = \theta - \theta_0$.  Since the model is linear, $g(X_i,\theta) = g(X_i,\theta_0) + G(X_i)\delta$, we have
\[
\psi(X_i, \theta)  = \psi_i(\theta) 
= \begin{pmatrix} g(X_i,\theta_0) + G(X_i)\delta \\ vec(G(X_i)') \end{pmatrix}
= \begin{pmatrix} I_m & C \\ 0 & I_{mp} \end{pmatrix} \psi_i(\theta_0)
\]
where $C = (\delta' \otimes I_m)\,K_{p,m}$ and $K_{p,m}$ is the $mp\times mp$ commutation matrix satisfies $vec(G(X_i)) = K_{p,m}\,vec(G(X_i)')$ (Abadir and Magnus (2005)). $L(\theta) = \bigl(\begin{smallmatrix} I_m & C \\ 0 & I_{mp} \end{smallmatrix}\bigr)$ is block lower-triangular (with identity diagonal blocks) and depends on $\theta$ only through $C$.

Since $\psi_i(\theta) = L(\theta) \psi_i(\theta_0)$, we have $\Sigma(\theta) = L(\theta) \Sigma(\theta_0) L(\theta)'$, and expanding this gives 
\begin{align*}
\Sigma_{11}(\theta) &= \Sigma_{11}(\theta_0) + C\Sigma_{21}(\theta_0) + \Sigma_{12}(\theta_0)C' + C\Sigma_{22}C',\\
\Sigma_{12}(\theta) &= \Sigma_{12}(\theta_0) + C\Sigma_{22}, \\
\Sigma_{22}(\theta) &= \Sigma_{22}, \\
\Sigma_{12}(\theta)\Sigma_{22}^{-1}\Sigma_{21}(\theta) &= \Sigma_{12}(\theta_0)\Sigma_{22}^{-1}\Sigma_{21}(\theta_0) + \Sigma_{12}(\theta_0)C' + C\Sigma_{21}(\theta_0) + C\Sigma_{22}C'.
\end{align*}
Thus, 
\[
\Sigma_{11, 2}(\theta) = \Sigma_{11}(\theta_0) - \Sigma_{12}(\theta_0)\Sigma_{22}^{-1}\Sigma_{21}(\theta_0) . \qedhere
\]
and all $\theta$ dependent terms cancel out. Thus, $\Sigma_{11\cdot 2}(\theta_W)$ is therefore independent of $W$. This completes the proof. \qed \\

\noindent
\textbf{Proof of Theorem \ref{thm:bootME}}

It is straightforward to show that ME-GMM estimator is same as the augmented M-estimator $\widetilde{\theta} (\Lambda)$ with $\Lambda = \Gamma(\theta_W)^{\prime} \Delta $. Then, the first results in Theorem \ref{thm:bootME} follows by Theorem \ref{thm:dist_AM}  with Assumption  \ref{assump:AM}.(5) holds by the assumptions given in the proof. 

The bootstrap ME-GMM estimator $\widehat{\theta}_{ME}^{*}$  in \eqref{def:ME-GMM_bootstrap} satisfies the first-order condition;

\[
\Gamma_n^{*} (\widehat{\theta}_{ME}^{*})^{\prime} \Delta^* \widetilde{\psi}_n^* (\widehat{\theta}_{ME}^{*}) =  0
\]
where 
\begin{equation*}
\widetilde{\psi}_n^* (\theta) = \frac{1}{n} \sum_{i=1}^{n} \widetilde{\psi}^{*}(X_i^{*}, \theta)  = \begin{pmatrix}
g_n^*( \theta) - g_n(\widehat{\theta}_W)\\
 vec(G_n^{*}(\theta)^{\prime}) - vec (G_n(\widehat{\theta}_W)^{\prime})^{\prime}
\end{pmatrix}, 
\end{equation*}

and $\Gamma_n^* (\theta) = \frac{1}{n} \sum_{i=1}^{n} \frac{\partial \widetilde{\psi}^{*}(X_i^{*}, \theta)}{\partial \theta'} = \frac{1}{n} \sum_{i=1}^{n} \frac{\partial \psi(X_i^*, \theta) }{\partial \theta^{\prime}}$. Note that 

\begin{eqnarray*}
\Delta^{*} &=& \Biggl(E^{*} [ \psi(X_i^{*}, \widehat{\theta}_W) \psi(X_i^{*}, \widehat{\theta}_W)^{\prime}]  -E^{*} [ \psi(X_i^{*}, \widehat{\theta}_W)] E [ \psi(X_i^{*}, \widehat{\theta}_W)]^{\prime}\Biggr)^{-1}\\
&=&\Biggl(\frac{1}{n} \sum_{i=1}^{n} \psi_{i}(X_i,\widehat{\theta}_W) \psi_{i}(X_i, \widehat{\theta}_W)^{\prime}  - (\frac{1}{n} \sum_{i=1}^{n} \psi_{i}(X_i, \widehat{\theta}_W))(\frac{1}{n} \sum_{i=1}^{n} \psi_{i}(X_i, \widehat{\theta}_W))^{\prime}   \Biggr)^{-1}
\end{eqnarray*}
depends only on the original sample quantities $\widehat{\theta}_W =  \widehat{\theta}_{GMM} (W)$.

Expanding $\widetilde{\psi}_n^* (\widehat{\theta}_{ME}^{*}) $ around  $\widehat{\theta}_W$, multiplying through by $\sqrt{n}$, and solving gives
\begin{equation}
\sqrt{n} (\widehat{\theta}_{ME}^{*} - \widehat{\theta}_W )=  (\Gamma_n^{*} (\widehat{\theta}_{ME}^{*})^{\prime} \Delta^*  \Gamma_n^{*} (\tilde{\theta}^*))^{-1}  
\Gamma_n^{*} (\widehat{\theta}_{ME}^{*})^{\prime} \Delta^* \sqrt{n} \widetilde{\psi}_n^* (\widehat{\theta}_{W}) \label{eq:proof_taylor}
\end{equation}
for some $\tilde{\theta}^*$ between $\widehat{\theta}_{ME}^{*}$ and $\widehat{\theta}_W $. Under Assumption \ref{assump:boot}, $\tilde{\theta}^* \overset{p^{*}}{\rightarrow} \theta_W$, and thus $\Gamma_n^{*} (\widehat{\theta}_{ME}^{*}) \overset{p^{*}}{\rightarrow} \Gamma(\theta_W)$ by the bootstrap WLLN (Assumption \ref{assump:boot}.2), and WLLN by Assumption \ref{assump:AM}. Furthermore, we have $\Delta^* = \Sigma_n(\widehat{\theta}_W)  \overset{p^{*}}{\rightarrow} \Sigma(\theta_W)$. So that, 
\[
(\Gamma_n^{*} (\widehat{\theta}_{ME}^{*})^{\prime} \Delta^*  \Gamma_n^{*} (\tilde{\theta}^*))^{-1}  
\Gamma_n^{*} (\widehat{\theta}_{ME}^{*})^{\prime} \Delta^* \overset{p^{*}} {\rightarrow} (\Gamma(\theta_W)^{\prime} \Sigma(\theta_W)^{-1}\Gamma(\theta_W))^{-1} \Gamma(\theta_W)^{\prime} \Sigma(\theta_W)^{-1}
\]

Next, we consider $\sqrt{n} \widetilde{\psi}_n^* (\widehat{\theta}_{W}) $ in \eqref{eq:proof_taylor}, 
\begin{align}
\sqrt{n} \widetilde{\psi}_n^* (\widehat{\theta}_{W})&= \sqrt{n} \begin{pmatrix}
g_n^*( \widehat{\theta}_{W}) - g_n(\widehat{\theta}_W)\\
 vec(G_n^{*}(\widehat{\theta}_{W})^{\prime}) - vec (G_n(\widehat{\theta}_W)^{\prime})^{\prime}
\end{pmatrix}. 
\end{align}
Note that each of the two terms has conditional mean zero given $\{X_i\}_{i=1}^n$, since $E^*[g_n^*(\theta)] = g_n(\theta)$, $E^*[G_n^*(\theta)] =G_n(\theta)$. Then, by the triangular array CLT under Assumption \ref{assump:boot} gives $\sqrt{n} \widetilde{\psi}_n^* (\widehat{\theta}_{W}) \overset{d^{*}}{\to} N(0, \Sigma(\theta_W))$. The conclusion then follows by the Slutzky theorem, 
\begin{eqnarray*}
\sqrt{n} (\widehat{\theta}_{ME}^{*}(W) - \widehat{\theta}_W )& \overset{d^{*}}{\to}& (\Gamma(\theta_W)^{\prime} \Sigma(\theta_W)^{-1}\Gamma(\theta_W))^{-1} \Gamma(\theta_W)^{\prime} \Sigma(\theta_W)^{-1} N(0, \Sigma(\theta_W)) \\
&=&N(0, (\Gamma(\theta_W)^{\prime} \Sigma(\theta_W)^{-1}\Gamma(\theta_W))^{-1} ).\qed 
\end{eqnarray*}

\noindent
\textbf{Proof of Corollary \ref{cor:bootDRHH}}
The DR estimator $\widehat{\theta}_{DR}^{*}$  in \eqref{eq:drhh} satisfies the following; 
\[
A_n (W, \widehat{\theta}_W) \widetilde{\psi}_n^* (\widehat{\theta}_{DR}^{*}) =  0
\]
with $A_n (W, \widehat{\theta}_W)$ instead of  $\Gamma_n^{*} (\widehat{\theta}_{ME}^{*})^{\prime} \Delta^*$ as in the  $\widehat{\theta}_{ME}^{*}$ case. Then, it can be similarly shown as in the proof of Theorem \ref{thm:bootME},
that the DR estimator satisfies;
\begin{eqnarray*}
\sqrt{n} (\widehat{\theta}_{DR}^{*} - \widehat{\theta}_W )& = & - (A (W, \theta_{W} ) \Gamma (\theta_{W} ))^{-1} A (W, \theta_{W} )\sqrt{n} \widetilde{\psi}_n^* (\widehat{\theta}_{W})  + o_p^{*}(1) \\
& \overset{d^{*}}{\to} &N(0, V_{GMM}(W,\theta_W))
\end{eqnarray*}
which is the same asymptotic distribution as the GMM estimator and the asymptotic variance $V_{GMM}(W,\theta)$ in \eqref{asymptotic_dist}.\\ 
\qed

\noindent
\textbf{Proof of Theorem \ref{thm:semiparametric_efficiency}}

By construction,  for any finite $n$, the following holds
\begin{eqnarray}
\sup_{(W, (F,\theta))\in \mathcal{G}^{\Gamma}} E_{F}\left\{l\left[\sqrt{n}(T_{n}-\theta_{1,W})\right]\right\} &=& \sup_{W\in \mathcal{W}} \sup_{(F,\theta) \in \mathcal{F}^W} E_{F}\left\{l\left[\sqrt{n}(T_{n}-\theta_{1,W})\right]\right\} \nonumber \\
&&\ge \sup_{(F,\theta) \in \mathcal{F}^{\widetilde{W}}} E_{F}\left\{l\left[\sqrt{n}(T_{n}-\theta_{1,\widetilde{W}})\right]\right\}  \label{proof:risk1}
\end{eqnarray}
for any arbitrary element $\widetilde{W} \in \mathcal{W}$. 

Since the inequality \eqref{proof:risk1} holds for all $n$, we have that
\begin{equation}
 \liminf_{n\rightarrow\infty}  \sup_{(W, (F,\theta))\in \mathcal{G}^{\Gamma}} E_{F}\left\{l\left[\sqrt{n}(T_{n}-\theta_{1,W})\right]\right\} \ge \liminf_{n\rightarrow\infty}  \sup_{(F,\theta) \in \mathcal{F}^{\widetilde{W}}} E_{F}\left\{l\left[\sqrt{n}(T_{n}-\theta_{1,\widetilde{W}})\right]\right\} \label{proof:risk2}
\end{equation}

Next, for the specific $\widetilde{W}$, we can similarly follows the approach in Chamberlain (1987), i.e., we can construct the multinomial distribution which is arbitrarily close to  $(F_0, \theta_W)$ that satisfies the conditions in $\mathcal{F}^W$ and apply the local asymptotic minimax bounds for the multinomial distribution (Lemma 2 in Chamberlain (1987)). To be more specific, we can construct the multinomial distributions $F$ with a finite support and probabilities $p = (p_1, \cdots, p_r)$ on the support points, which satisfies the moment condition $\sum_{j=1,\cdots r} p_j \phi(X_j, \theta) = E_{F_0}[\psi(X_i, \theta_W)]$ and then compute the local asymptotic minimax bound on that parametric submodel. %The key observation is that locally around $F_0$, any perturbation $F_{\epsilon}$ changes both the recentering constant and the pseudo-true value, but to first order, the recentering is fixed at E_{F_0}[?(X, ?_W)], and the relevant local parameter is ?_W(F_?)

By Theorem 2 of Chamberlain (1987), with the moment function $\widetilde{\psi}(\cdot)$ and the pairs $(F, \theta)$ that satisfy the regularity conditions the space of probability distributions $\mathcal{F}^{\widetilde{W}}$, the bounds in \eqref{proof:risk2} is bounded below by the following;
\[
 \int_{-\infty}^{\infty}l(\sigma_{\widetilde{W}}u)d\Phi(u),
\]
which is the expected loss of $\mathcal{N} (0, \sigma_{\widetilde{W}}^2)$, where $\sigma_{\widetilde{W}}^2$ is the (1,1) element of
\begin{eqnarray*}
&& \left( \left[ E_{F_{0}}\frac{\partial\widetilde{\psi}(X,\theta_{\widetilde{W}})}{\partial\theta^{\prime}} \right]^{\prime} \left[ E_{F_{0}}\widetilde{\psi}(X,\theta_{\widetilde{W}})\widetilde{\psi}^{\prime}(X,\theta_{\widetilde{W}}) \right]^{-1} \left[ E_{F_{0}}\frac{\partial\widetilde{\psi}(X,\theta_{\widetilde{W}})}{\partial\theta^{\prime}} \right] \right)^{-1} \\
&=&(\Gamma(\theta_{\widetilde{W}})^{\prime} \Sigma (\theta_{\widetilde{W}})^{-1}\Gamma(\theta_{\widetilde{W}}))^{-1}.
\end{eqnarray*}

Since this bound holds for any choice of $\widetilde{W} \in \mathcal{W}$, we have 
\[
\liminf_{n\rightarrow\infty} \sup_{(W, (F,\theta))\in \mathcal{G}^{\Gamma}} E_{F}\left\{l\left[\sqrt{n}(T_{n}-\theta_{1,W})\right]\right\} \ge \sup_W \int_{-\infty}^{\infty}l(\sigma_{W}u)d\Phi(u) =  \int_{-\infty}^{\infty}l(\sup_W  \sigma_{W}u)d\Phi(u)
\]
where the last equality holds by the symmetry, monotonicity of the $l(\cdot)$, and $0<\sigma_W \le \sup_W \sigma_W <\infty$, and by the Monotone Convergence Theorem. \qed \\

\section*{Appendix B: Worst-Case Variance Bounds over Misspecification and Identification Strength}
\label{sec:worstcase_variance}

In this section, we consider the inference problem about $\theta_W$ without assuming $\gamma (W) = (\gamma_1(W)', vec(\gamma_2(W)'))'$ is known. We assume only that the support of $\gamma$ is known and consider confidence interval for $\theta_W$ by taking a union of valid confidence intervals specific to each $\gamma$. This approach is closely related to Conley, Hansen, and Rossi (2012), where they construct the union of confidence intervals over the nuisance parameter (``violation of the exclusion restrictions") which is related to $\gamma_1$ we consider here. In our framework, the nuisance parameter $\gamma$ not only include the degree of misspecification $(\gamma_1)$, but also contains the identification
strength $(\gamma_2)$. Interestingly, we can consider the trade-off relationship in the construction of the parameter spaces from the F.O.C of the overidentified GMM conditions; $\gamma_2'W\gamma_1 = 0$, if $\gamma$ is equal to true value $\gamma(W)$.

Define the parameter space
\begin{equation}\label{eq:Gamma_space}
\overline{\Gamma}_W = \bigl\{\gamma = (\gamma_1', vec(\gamma_2)')' \in \mathbb{R}^{m + mp} :\; 0 \leq ||\gamma_1|| \leq \overline{\gamma}_1, \;\; \underline{\gamma}_2 \leq || \gamma_2 || \leq \overline{\gamma}_2,\;\;  \underline{\gamma}_2>0, \;\; \gamma_2'W\gamma_1 = 0 \bigr\}. 
\end{equation}
\noindent
$\overline{\gamma}_1$ denotes the worst-case degree of misspecification and $ ||\gamma_1|| = 0$ corresponds to the correct-specification. $\underline{\gamma}_2>0$ rules out weak identification. We  do not consider data-dependent choice of $\overline{\gamma}_1, \underline{\gamma}_2$ in this paper, but practically we can consider  $\overline{\gamma}_1$ as J-statistics (which estimate $|| \gamma_1||^2$ with $W$), and some pre-test identification strength measure for $ \underline{\gamma}_2$ such as effective F-statistics in the linear IV setup. The FOC condition requires $\gamma_1$ to lie in the null space of $\gamma_2'W$, and thus the direction of $\gamma_1$ is restricted. As we discussed earlier, when $W = \Sigma_{11,2}^{-1}$ in the linear model, the recentering part $\gamma_1(W)$ will be eliminated by the FOC $G'\Sigma_{11,2}^{-1} \gamma_1(W) = 0 $ from Corollary \ref{cor:optimal_matrix}. Since $\gamma_1(W)$ is not needed for $\widehat{\theta}_{ME}(W)$, the parameter space $\overline{\Gamma}_W $ only requires restrictions on $\gamma_2$. This reduces not only computational costs, but also the conservativeness of the CI considered here. 

%The FOC determines the feasible set of $\gamma_1$ vectors, and $V_{ME}$ is computed at the corresponding pseudo-true value.  The worst-case ME bound optimizes over both $(\gamma_2, \delta) \in \Gamma$ and the direction~$\xi$.

For any $\gamma \in \overline{\Gamma}_W $, we can define the ME-estimator defined in \eqref{def:ME-GMM}  as follows;
\begin{equation}\label{def:ME-gamma}
\widehat{\theta}_{ME} (W; \gamma) = \argmin_{\theta}  \biggl(\psi_n(\theta) - \gamma \biggr)^{\prime}  \widehat{\Sigma} (\widehat{\theta}_W)^{-1} \biggl(\psi_n(\theta) - \gamma \biggr) 
\end{equation}
where $\psi_n (\theta) = \frac{1}{n} \sum_{i=1}^{n} \psi(X_i, \theta), \psi(X_i, \theta) = (g(X_i,\theta)', vec(G(X_i,\theta))')'$, and $\widehat{\Sigma} (\widehat{\theta}_W)$ is a consistent estimator of $\Sigma(\theta_W)$ with the preliminary GMM estimator $\widehat{\theta}_W$. Although, we can consider the pseudo-true value $\theta_W(\gamma)$ as  a minimizer of the population objective function of \eqref{def:ME-gamma}, but for simplicity, we assume that that for all $\gamma \in \overline{\Gamma}$, there exists unique value of $\theta = \theta_W(\gamma)$ such that $E[\psi(X_i, \theta_W(\gamma))] = \gamma$. Then, it will follow under the conventional regularity conditions that
\begin{equation}
\sqrt{n} (\widehat{\theta}_{ME} (W; \gamma) - \theta_W(\gamma)) \overset{d}{\longrightarrow} N(0, V_{ME}(W;\gamma))
\end{equation}
where $V_{ME}(W;\gamma) = \big(\Gamma(\theta_W(\gamma))'\Sigma(\theta_W)^{-1}\Gamma(\theta_W(\gamma))\big)^{-1} \Gamma(\theta_W(\gamma))'\Sigma(\theta_W)^{-1} \Sigma_{\gamma} \Sigma(\theta_W)^{-1} \Gamma(\theta_W(\gamma))\\\big(\Gamma(\theta_W(\gamma))'\Sigma(\theta_W)^{-1}\Gamma(\theta_W(\gamma))\big)^{-1} $, and $\Sigma_{\gamma} = E [ \psi_{i}(X_i, \theta_W(\gamma)) \psi_{i}(X_i, \theta_W(\gamma))^{\prime}]  -\gamma \gamma^{\prime}$. Note that at the true values $\gamma = \gamma(W)$, we have $\theta_W(\gamma(W)) = \theta_W$ since $E[\psi(X_i, \theta_W(\gamma))] = \gamma(W)$, and thus $\Sigma_{\gamma} = \Sigma(\theta_W)$, $V_{ME}(W,\gamma) =V_{ME}(W) $ when evaluated at the $\gamma(W)$. Theorem \ref{thm:semiparametric_efficiency} can be extended to the worst-case minimax bounds $\sup_{\gamma \in \overline{\Gamma}} V_{ME}(W;\gamma)$ uniformly over $\gamma \in \overline \Gamma$. In the linear model, $V_{ME}(W) = (\gamma_2'\Sigma_{11,2}^{-1}\gamma_2)^{-1}$ and it depends only on $\gamma_2$, and the bounds for $V_{ME}(W)$ can be easily calculated with the $||\gamma_2||^2$ and the Rayleigh quotient for $\Sigma_{11,2}^{-1}$,  which reduces the dimension significantly.

For simplicity, we suppose that $\theta$ is a scalar ($p=1$), in the following construction. We can construct the union of confidence interval as follows
 \[
CI_{1-\alpha} = \bigcup_{\gamma \in \overline{\Gamma}_W} CI_{1-\alpha}(\gamma)
\]
where $CI_{1-\alpha}(\gamma) = \widehat{\theta}_{ME} (W; \gamma) \pm z_{1-\alpha/2} \sqrt{\widehat{V}_{ME}(W;\gamma)/n}$ and $z_{1-\alpha/2}$ is the $(1-\alpha/2)$ quantile of the standard normal distribution. Reporting $CI_{1-\alpha}(\gamma)$ as a function of $(\gamma_1, \gamma_2)$ allow researchers to do transparent sensitivity analysis for assumptions about misspecification and identification strength. Since for each $\gamma \in \overline\Gamma_W$, $CI_{1-\alpha}(\gamma)$ covers $\theta_W(\gamma)$ with probability $1-\alpha$ asymptotically, and $\gamma(W) \in \overline\Gamma$, we have a valid coverage for $\theta_W$,
\[
\lim_{n\rightarrow \infty} P(\theta_W \in CI_{1-\alpha} ) \geq 1-\alpha.
\]

Conley et al. (2012) considers similar type of CI based on TSLS estimates $\widehat{\theta}(\gamma)$ and the conventional TSLS standard errors, with $\gamma$  denotes the direct effect of instruments. The key differences with our approach is that we use the misspecification-efficient estimator $\widehat{\theta}_{ME}$ with the misspecification-efficient variance $V_{ME}(W)$ which was shown to be more efficient than the standard misspecification-robust GMM variances. We use the  augmented moment system and exploiting additional information from the Jacobian with the potential efficiency gains from the moment and Jacobian correlations.

\section*{Appendix C: Additional Simulations}\label{sec:app_weak_sim}

In this Appendix, we report the additional simulation results for the same simulation design in Section  \ref{sec:simulation} with the conventional optimal weighting $W = \widehat\Sigma_{11}^{-1}$ and the moderately weak instrument setup. 

Table \ref{tab:sim_s11} reports the simulation results with the correct-specification optimal weighting matrix  $W = \widehat\Sigma_{11}^{-1}$. Comparing Tables \ref{tab:sim_eff} (with the efficient ME weight $W = \widehat\Sigma_{11,2}^{-1}$) and \ref{tab:sim_s11}, we found that SD of the GMM estimator and the lengths of CI with misspecification-robust standard errors under $W=\widehat\Sigma_{11,2}^{-1}$ is smaller than those of $W=\widehat\Sigma_{11}^{-1}$, especially when the degree of misspecification is large ($\delta \in \{1, 2\}$), but they are nearly identical when $\delta \in \{0, 0.5\}$. We note that the population pseudo-true values are meaningfully different only when the misspecification is large, i.e., $\delta =2$. Even when $W = \widehat\Sigma_{11}^{-1}$ is used,  the oracle ME bound is identical, the RSS-ME median CI lengths, and DR CI lengths are very similar with these two weights across all $(n, \delta)$.

\begin{table}[h]
\small
\setlength{\tabcolsep}{3pt}
\centering
\caption{Monte Carlo results under $W = \widehat\Sigma_{11}^{-1}$}
\label{tab:sim_s11}
\begin{tabular}{l *{12}{c}}
\toprule
 & \multicolumn{3}{c}{$\delta=0$} & \multicolumn{3}{c}{$\delta=0.5$} & \multicolumn{3}{c}{$\delta=1$} & \multicolumn{3}{c}{$\delta=2$} \\
\cmidrule(lr){2-4} \cmidrule(lr){5-7} \cmidrule(lr){8-10} \cmidrule(lr){11-13}
Estimator & SD & Cov & Len & SD & Cov & Len & SD & Cov & Len & SD & Cov & Len \\
\midrule
\multicolumn{13}{l}{ \qquad $n=200$} \\
GMM + Conv.\ SE        & 1.00 & 0.942 & 0.384 & 0.98 & 0.929 & 0.347 & 1.57 & 0.867 & 0.471 & 2.99 & 0.870 & 0.902 \\
GMM + Robust SE        & 1.00 & 0.947 & 0.394 & 0.98 & 0.974 & 0.457 & 1.57 & 0.983 & 0.740 & 2.99 & 0.982 & 1.461 \\
Oracle ME              & 0.93 & 0.944 & 0.367 & 0.87 & 0.949 & 0.349 & 1.08 & 0.945 & 0.422 & 1.78 & 0.943 & 0.695 \\
RSS ME median          & 1.04 & 0.968 & 0.455 & 1.01 & 0.975 & 0.477 & 1.60 & 0.978 & 0.725 & 3.07 & 0.981 & 1.401 \\
HH Bootstrap           & 1.00 & 0.942 & 0.393 & 0.90 & 0.926 & 0.357 & 1.24 & 0.872 & 0.483 & 2.39 & 0.875 & 0.925 \\
DR Bootstrap         & 1.02 & 0.943 & 0.400 & 1.02 & 0.951 & 0.404 & 1.55 & 0.941 & 0.604 & 3.02 & 0.947 & 1.181 \\
\midrule
\multicolumn{13}{l}{\qquad $n=500$} \\
GMM + Conv.\ SE        & 0.62 & 0.955 & 0.245 & 0.63 & 0.919 & 0.223 & 0.99 & 0.875 & 0.303 & 1.85 & 0.881 & 0.585 \\
GMM + Robust SE        & 0.62 & 0.957 & 0.249 & 0.63 & 0.978 & 0.296 & 0.99 & 0.984 & 0.486 & 1.85 & 0.988 & 0.952 \\
Oracle ME              & 0.58 & 0.945 & 0.232 & 0.56 & 0.945 & 0.221 & 0.69 & 0.945 & 0.267 & 1.09 & 0.952 & 0.440 \\
RSS ME median          & 0.63 & 0.962 & 0.263 & 0.64 & 0.967 & 0.277 & 1.00 & 0.967 & 0.429 & 1.86 & 0.965 & 0.827 \\
HH Bootstrap           & 0.62 & 0.954 & 0.247 & 0.56 & 0.918 & 0.224 & 0.77 & 0.874 & 0.305 & 1.49 & 0.880 & 0.591 \\
DR Bootstrap         & 0.63 & 0.955 & 0.249 & 0.64 & 0.957 & 0.254 & 0.96 & 0.939 & 0.382 & 1.87 & 0.946 & 0.742 \\
\midrule
\multicolumn{13}{l}{\qquad $n=1000$} \\
GMM + Conv.\ SE        & 0.43 & 0.953 & 0.174 & 0.44 & 0.922 & 0.158 & 0.69 & 0.877 & 0.216 & 1.34 & 0.874 & 0.415 \\
GMM + Robust SE        & 0.43 & 0.955 & 0.175 & 0.44 & 0.983 & 0.211 & 0.69 & 0.986 & 0.347 & 1.34 & 0.983 & 0.679 \\
Oracle ME              & 0.40 & 0.958 & 0.164 & 0.39 & 0.948 & 0.156 & 0.47 & 0.950 & 0.189 & 0.79 & 0.951 & 0.311 \\
RSS ME median          & 0.44 & 0.957 & 0.180 & 0.45 & 0.964 & 0.191 & 0.70 & 0.961 & 0.296 & 1.35 & 0.956 & 0.568 \\
HH Bootstrap           & 0.44 & 0.953 & 0.175 & 0.40 & 0.921 & 0.159 & 0.54 & 0.881 & 0.218 & 1.04 & 0.873 & 0.418 \\
DR Bootstrap         & 0.44 & 0.955 & 0.176 & 0.45 & 0.953 & 0.180 & 0.67 & 0.942 & 0.270 & 1.31 & 0.939 & 0.522 \\
\bottomrule
\end{tabular}
\begin{flushleft}
\scriptsize
\textit{Notes.} See Table~\ref{tab:sim_I} for further notes. SD is normalized relative to the $W=I$, $\delta=0$, $n=200$ benchmark.
\end{flushleft}
\end{table}

Tables \ref{tab:sim_I_weak} and \ref{tab:sim_eff_weak} report simulation results under the moderately weak instrument setup. In particular, we set the first-stage coefficient so that the scaled concentration parameter ($\mu^2/m$) is 10 for $n=200$. We note that our theory does not cover the weak instruments setup. We report the median bootstrap standard deviation as a robust measure in this setup.

The results are qualitatively similar to the strong instrument cases; conventional SE/HH bootstrap undercover as $\delta$ increases, the misspecification-robust SE Wald CI/Sample-splitting methods, and DR percentile CI coverage are close to nominal coverage. The oracle ME has smaller SD/lengths, while the coverage for weak instruments slightly undercover $n=200$, although the coverage improves when $n$ increases. Using the efficient $W = \widehat\Sigma_{11,2}$ performs slightly worse (larger SD, lower coverage) in small samples $n=200,500$ because of the estimation noise for $\widehat\Sigma_{11,2}$ under weak instruments, but the gap decreases when $n=1000$.

\begin{table}[h]
\small
\setlength{\tabcolsep}{3pt}
\centering
\caption{Monte Carlo results under weak instrument setup ($W = I$)}
\label{tab:sim_I_weak}
\begin{tabular}{l *{12}{c}}
\toprule
 & \multicolumn{3}{c}{$\delta=0$} & \multicolumn{3}{c}{$\delta=0.5$} & \multicolumn{3}{c}{$\delta=1$} & \multicolumn{3}{c}{$\delta=2$} \\
\cmidrule(lr){2-4} \cmidrule(lr){5-7} \cmidrule(lr){8-10} \cmidrule(lr){11-13}
Estimator & SD & Cov & Len & SD & Cov & Len & SD & Cov & Len & SD & Cov & Len \\
\midrule
\multicolumn{13}{l}{\qquad $n=200$} \\
GMM + Conv.\ SE        & 1.00 & 0.946 & 0.847 & 1.51 & 0.898 & 1.055 & 2.92 & 0.859 & 2.036 & 6.28 & 0.844 & 4.173 \\
GMM + Robust SE        & 1.00 & 0.950 & 0.873 & 1.51 & 0.949 & 1.275 & 2.92 & 0.917 & 2.485 & 6.28 & 0.918 & 5.135 \\
Oracle ME              & 0.89 & 0.920 & 0.781 & 0.93 & 0.936 & 0.846 & 1.34 & 0.927 & 1.169 & 2.47 & 0.931 & 2.061 \\
RSS ME median          & 1.08 & 0.989 & 1.393 & 1.53 & 0.980 & 1.854 & 2.99 & 0.965 & 3.398 & 6.24 & 0.958 & 6.950 \\
HH Bootstrap           & 0.95 & 0.951 & 0.914 & 1.20 & 0.914 & 1.140 & 2.36 & 0.895 & 2.200 & 4.81 & 0.898 & 4.524 \\
DR Bootstrap         & 1.03 & 0.950 & 0.989 & 1.55 & 0.950 & 1.466 & 3.15 & 0.949 & 2.930 & 6.50 & 0.949 & 6.087 \\
\midrule
\multicolumn{13}{l}{\qquad $n=500$} \\
GMM + Conv.\ SE        & 0.59 & 0.958 & 0.541 & 0.94 & 0.891 & 0.690 & 1.74 & 0.893 & 1.354 & 3.63 & 0.892 & 2.865 \\
GMM + Robust SE        & 0.59 & 0.961 & 0.548 & 0.94 & 0.949 & 0.821 & 1.74 & 0.942 & 1.623 & 3.63 & 0.945 & 3.377 \\
Oracle ME              & 0.54 & 0.938 & 0.494 & 0.57 & 0.945 & 0.535 & 0.78 & 0.948 & 0.739 & 1.40 & 0.936 & 1.303 \\
RSS ME median          & 0.62 & 0.981 & 0.650 & 0.97 & 0.966 & 0.940 & 1.75 & 0.956 & 1.799 & 3.63 & 0.956 & 3.728 \\
HH Bootstrap           & 0.58 & 0.958 & 0.560 & 0.74 & 0.896 & 0.713 & 1.46 & 0.904 & 1.411 & 3.09 & 0.910 & 2.972 \\
DR Bootstrap         & 0.59 & 0.960 & 0.576 & 0.89 & 0.942 & 0.867 & 1.79 & 0.951 & 1.738 & 3.74 & 0.957 & 3.602 \\
\midrule
\multicolumn{13}{l}{\qquad $n=1000$} \\
GMM + Conv.\ SE        & 0.41 & 0.951 & 0.388 & 0.62 & 0.898 & 0.490 & 1.24 & 0.899 & 0.966 & 2.54 & 0.902 & 2.041 \\
GMM + Robust SE        & 0.41 & 0.952 & 0.390 & 0.62 & 0.948 & 0.581 & 1.24 & 0.936 & 1.157 & 2.54 & 0.952 & 2.404 \\
Oracle ME              & 0.36 & 0.946 & 0.349 & 0.39 & 0.956 & 0.378 & 0.55 & 0.948 & 0.523 & 0.96 & 0.941 & 0.922 \\
RSS ME median          & 0.42 & 0.965 & 0.424 & 0.63 & 0.959 & 0.616 & 1.24 & 0.948 & 1.207 & 2.54 & 0.959 & 2.498 \\
HH Bootstrap           & 0.41 & 0.950 & 0.395 & 0.51 & 0.898 & 0.496 & 1.01 & 0.894 & 0.983 & 2.15 & 0.911 & 2.079 \\
DR Bootstrap         & 0.41 & 0.952 & 0.401 & 0.61 & 0.952 & 0.596 & 1.24 & 0.942 & 1.198 & 2.57 & 0.955 & 2.495 \\
\bottomrule
\end{tabular}
\begin{flushleft}
\scriptsize
\textit{Notes.} We consider (i) standard GMM estimator with the conventional SE; (ii) GMM with the misspecification-robust SE; (iii) the oracle ME estimator with efficient SE, $V_{ME}(W)$; (iv) Repeated sample-splitting ME estimator (RSS ME), the median repeated sample-split estimates with $\pm 1.96\,\mathrm{SD}$; (v) Hall and Horowitz (1996) (HH) percentile bootstrap; and (vi) the Double-Recentered percentile bootstrap  (DR). For GMM, Oracle ME, and RSS ME, ``SD" is the MC standard deviation of the point estimate across MC draws; for HH and DR bootstrap, ``SD" is the median bootstrap standard deviation across MC draws. `SD" is reported relative to the correctly-specified benchmark SD (GMM with $\delta = 0, W = I$), so values $<1$ ($>1$) indicate higher (lower) precision than this benchmark. ``Cov" is empirical $95\%$ coverage; ``Len" is average CI length. Based on $2{,}000$ Monte Carlo replications, $B=1{,}000$ bootstrap draws, $S=100$ sample-split replications.
\end{flushleft}
\end{table}

\begin{table}[h]
\small
\setlength{\tabcolsep}{3pt}
\centering
\caption{Monte Carlo results under weak instrument setup ($W = \widehat\Sigma_{11,2}^{-1}$)}
\label{tab:sim_eff_weak}
\begin{tabular}{l *{12}{c}}
\toprule
 & \multicolumn{3}{c}{$\delta=0$} & \multicolumn{3}{c}{$\delta=0.5$} & \multicolumn{3}{c}{$\delta=1$} & \multicolumn{3}{c}{$\delta=2$} \\
\cmidrule(lr){2-4} \cmidrule(lr){5-7} \cmidrule(lr){8-10} \cmidrule(lr){11-13}
Estimator & SD & Cov & Len & SD & Cov & Len & SD & Cov & Len & SD & Cov & Len \\
\midrule
\multicolumn{13}{l}{\qquad $n=200$} \\
GMM + Conv.\ SE        & 1.00 & 0.943 & 0.841 & 1.47 & 0.895 & 1.026 & 2.91 & 0.852 & 1.879 & 6.23 & 0.818 & 3.686 \\
GMM + Robust SE        & 1.00 & 0.951 & 0.883 & 1.47 & 0.959 & 1.360 & 2.91 & 0.953 & 2.671 & 6.23 & 0.940 & 5.559 \\
Oracle ME              & 0.89 & 0.920 & 0.781 & 0.94 & 0.941 & 0.846 & 1.37 & 0.919 & 1.169 & 2.59 & 0.913 & 2.061 \\
RSS ME median          & 1.09 & 0.987 & 1.382 & 1.46 & 0.982 & 1.885 & 2.88 & 0.973 & 3.593 & 6.02 & 0.974 & 7.333 \\
HH Bootstrap           & 0.94 & 0.950 & 0.910 & 1.16 & 0.919 & 1.113 & 2.16 & 0.890 & 2.032 & 4.20 & 0.865 & 3.970 \\
DR Bootstrap         & 1.04 & 0.953 & 0.990 & 1.54 & 0.948 & 1.457 & 3.14 & 0.942 & 2.938 & 6.78 & 0.930 & 6.260 \\
\midrule
\multicolumn{13}{l}{\qquad $n=500$} \\
GMM + Conv.\ SE        & 0.59 & 0.955 & 0.540 & 0.92 & 0.897 & 0.675 & 1.77 & 0.872 & 1.250 & 3.88 & 0.841 & 2.535 \\
GMM + Robust SE        & 0.59 & 0.961 & 0.553 & 0.92 & 0.965 & 0.889 & 1.77 & 0.959 & 1.792 & 3.88 & 0.949 & 3.718 \\
Oracle ME              & 0.54 & 0.938 & 0.494 & 0.57 & 0.946 & 0.535 & 0.80 & 0.936 & 0.739 & 1.50 & 0.915 & 1.303 \\
RSS ME median          & 0.61 & 0.981 & 0.652 & 0.94 & 0.978 & 0.970 & 1.76 & 0.971 & 1.911 & 3.81 & 0.969 & 3.990 \\
HH Bootstrap           & 0.57 & 0.958 & 0.559 & 0.72 & 0.895 & 0.695 & 1.34 & 0.882 & 1.301 & 2.70 & 0.846 & 2.615 \\
DR Bootstrap         & 0.59 & 0.962 & 0.574 & 0.89 & 0.941 & 0.861 & 1.83 & 0.950 & 1.771 & 3.93 & 0.942 & 3.804 \\
\midrule
\multicolumn{13}{l}{\qquad $n=1000$} \\
GMM + Conv.\ SE        & 0.41 & 0.950 & 0.387 & 0.61 & 0.900 & 0.479 & 1.28 & 0.861 & 0.889 & 2.72 & 0.844 & 1.814 \\
GMM + Robust SE        & 0.41 & 0.954 & 0.391 & 0.61 & 0.966 & 0.631 & 1.28 & 0.961 & 1.284 & 2.72 & 0.958 & 2.682 \\
Oracle ME              & 0.36 & 0.946 & 0.349 & 0.39 & 0.954 & 0.378 & 0.56 & 0.939 & 0.523 & 1.03 & 0.925 & 0.922 \\
RSS ME median          & 0.42 & 0.962 & 0.424 & 0.62 & 0.970 & 0.644 & 1.28 & 0.963 & 1.291 & 2.72 & 0.959 & 2.745 \\
HH Bootstrap           & 0.41 & 0.952 & 0.394 & 0.50 & 0.897 & 0.487 & 0.93 & 0.863 & 0.904 & 1.90 & 0.842 & 1.837 \\
DR Bootstrap         & 0.41 & 0.951 & 0.401 & 0.61 & 0.950 & 0.598 & 1.26 & 0.944 & 1.224 & 2.72 & 0.947 & 2.635 \\
\bottomrule
\end{tabular}
\begin{flushleft}
\scriptsize
%\textit{Notes.} See Table \ref{tab:sim_I_weak} for further notes.
\end{flushleft}
\end{table}

\section*{Appendix D: Dynamic Panel Data Model}
\label{sec:empirical_dpd}

We consider the following dynamic panel regression of Acemoglu, Johnson, Robinson, and Yared (2008, AJRY hereafter), 
\begin{equation}\label{eq:empirica:dpd}
 y_{it} = \alpha  y_{i t-1} + \gamma x_{i t-1} + \mu_t + \delta_i + u_{i t},
 \end{equation}
where $y_{it}$ is a measure of democracy, $x_{it}$ is log income per capita, $i=1,\dots,N$ denotes countries and $t=1,\dots,T$ denotes the time periods. AJRY data set includes $N=127$ countries over 1960-2000 at both 5-year and 10-year frequencies. Following Hansen and Lee (2021, HL hereafter), we also consider augmented specification with the additional interaction terms $ x_{i t-1} \times  c_i$, where $c_i$ is a country-specific dummy for ``historically strong institutions" allowing income effect vary across groups as in Cervellati, Jung, Sunde, and Vischer (2014, CJSV hereafter). Same as CJSV, and HL, we consider three different measure of institutions quality: (1) the level of constraints on the executive in 1900; (2) whether the country became independent before 1900; (3) whether the colony was subject to the rule of a late colonial power. 

We consider the first-difference GMM estimator of Arellano and Bond (1991) across specifications and  weighting matrices. When using $W= W_{AB}$ (2SLS-type) in the second column in Tables \ref{tab:emp_ajry} and \ref{tab:emp_cjsv}, we exactly replicate Hansen and Lee (2021), Table III (columns 1 and 3) and IV (Columns 1, 3, and 5) with the conventional SE (HL's label is ``Arellano-Bond SE"), and misspecification-robust SE. We also consider $W=I$ and the conventional two-step efficient weight $W=\widehat\Sigma_{11}^{-1}$. Same as HL, we report the cluster-robust version of SEs and consider the cluster (or pairs) bootstrap for the bootstrap SD and percentile CI. Due to the small number of clusters and rank issues, we only report the i.i.d version of the ME efficiency bounds $V_{ME}=(G'\Sigma_{11,2}^{-1}G)^{-1}$, so they may not be directly comparable to cluster-robust SE. We note that our current theory does not cover the cluster-robust inference, and we leave it for a future research. 

Tables \ref{tab:emp_ajry} and \ref{tab:emp_cjsv} show some interesting findings. First, point estimates are $W$-sensitive, but mostly settle at the two-step in many cases.  For AJRY 5-year, $\widehat\beta_{\mathrm{Inc}}$ moves from $-0.445$ ($W=I$) to $-0.129$ ($W=\mathrm{AB}$) to $-0.012$ ($W=\widehat\Sigma_{11}^{-1}$), and $\widehat\alpha_{\mathrm{Dem}}$ from $-0.10$ to $0.49$ to $0.53$; the same patterns  for the AJRY 10-year and CJSV panels.  This is consistent with HL's Figure 1, although not explicitly mentioned, where one-step GMM with $W=I$ is a poor choice in terms of point estimates sensitivity, whereas one-step $W=\mathrm{AB}$ is much closer to the iterated/two-step GMM estimators. 

Second, the gap between the conventional/misspecification-robust SE, and the ME efficiency bounds can be significantly different across $W$. $J$-test rejects in almost all specifications, and 
the misspecification-robust SE exceeds the conventional SE by $10$--$25\%$ across AJRY and CJSV under $W=\mathrm{AB}$. However, under $W =I$, the gap is much bigger (e.g., AJRY 5-year Inc$_{t-1}$ has misspec/conv $=0.392/0.210=1.87$), and under the two-step efficient weight the two SEs are very similar. The ME efficiency bound is much smaller than misspec-robust SE when $W\in\{I,\mathrm{AB}\}$, but is similar or below when $W=\widehat\Sigma_{11}^{-1}$ and the ME-GMM bootstrap SD tracks the bound to within roughly $20\%$ at $W\in\{I,\mathrm{AB}\}$. 

We also found there is no strong evidence of either an income effect or institutional heterogeneity in the income-democracy relationship. HL found that there is no strong evidence of the income heterogeneity  based on the iterated GMM with the misspecification-robust SE. When, $W=\mathrm{AB}$ or $\widehat{\Sigma}_{11}^{-1}$, the DR percentile CIs tell a similar story: four out of six confidence intervals for the interaction coefficient excludes zero, but three only marginally significant, except $[0.17, 0.046]$ (No Late Colonial, $W = \widehat{\Sigma}_{11}^{-1}$). 

\begin{table}[p]
\footnotesize
\centering
\caption{Dynamic panel: AJRY income--democracy, 5-year and 10-year}
\label{tab:emp_ajry}
\begin{tabular}{l l ccc}
\toprule
 & & \multicolumn{3}{c}{Weight matrix} \\
\cmidrule(lr){3-5}
Coefficient & Quantity & $I$ (one-step) & AB (one-step) & $\widehat\Sigma_{11}^{-1}$ (two-step) \\
\midrule
\multicolumn{5}{l}{\textit{AJRY 5-year ($N=127$, $m=55$)}} \\
$\alpha$ (Dem$_{t-1}$) & GMM point    & $-0.099$ & $0.489$ & $0.528$ \\
            & Conv. SE & $(0.111)$ & $(0.085)$ & $(0.047)$ \\
            & Misspec. SE   & $(0.211)$ & $(0.095)$ & $(0.037)$ \\
            & ME efficiency bound & $(0.048)$ & $(0.048)$ & $(0.048)$ \\
            & ME-GMM Boot SD & $(0.060)$ & $(0.048)$ & $(0.048)$ \\
            & DR 95\% CI & [-0.46,0.38] & [0.28,0.69] & [0.43,0.64] \\
$\gamma$ (Income$_{t-1})$ & GMM point    & $-0.445$ & $-0.129$ & $-0.012$ \\
            & Conv. SE & $(0.210)$ & $(0.076)$ & $(0.047)$ \\
            & Misspec. SE   & $(0.392)$ & $(0.088)$ & $(0.037)$ \\
            & ME efficiency bound & $(0.091)$ & $(0.091)$ & $(0.091)$ \\
            & ME-GMM Boot SD & $(0.111)$ & $(0.086)$ & $(0.095)$ \\
            & DR 95\% CI & [-1.34,0.33] & [-0.35,0.05] & [-0.12,0.10] \\
\multicolumn{2}{l}{$J$ $p$-value} & 0.002 & 0.006 & 0.042 \\
\midrule
\multicolumn{5}{l}{\textit{AJRY 10-year ($N=118$, $m=15$)}} \\
$\alpha$ (Dem$_{t-1}$) & GMM point    & $-0.164$ & $0.226$ & $0.274$ \\
            & Conv.  SE & $(0.196)$ & $(0.123)$ & $(0.111)$ \\
            & Misspec. SE   & $(0.428)$ & $(0.125)$ & $(0.099)$ \\
            & ME efficiency bound & $(0.087)$ & $(0.087)$ & $(0.087)$ \\
            & ME-GMM Boot SD & $(0.258)$ & $(0.129)$ & $(0.135)$ \\
            & DR 95\% CI & [-1.16,0.84] & [-0.03,0.53] & [0.01,0.54] \\
$\gamma$ (Income$_{t-1}$) & GMM point    & $-0.832$ & $-0.318$ & $-0.274$ \\
            & Conv.  SE & $(0.425)$ & $(0.180)$ & $(0.169)$ \\
            & Misspec. SE   & $(0.773)$ & $(0.183)$ & $(0.179)$ \\
            & ME efficiency bound & $(0.166)$ & $(0.166)$ & $(0.166)$ \\
            & ME-GMM Boot SD & $(0.605)$ & $(0.262)$ & $(0.280)$ \\
            & DR 95\% CI & [-3.27,1.14] & [-0.85,0.04] & [-0.85,0.08] \\
\multicolumn{2}{l}{$J$ $p$-value} & 0.004 & 0.020 & 0.078 \\
\bottomrule
\end{tabular}
\begin{flushleft}
\scriptsize
\textit{Notes.}
We report point estimates of $\alpha$ (Democracy$_{i t-1}$) and $\gamma$ (Income$_{i t-1}$) in the dynamic panel model of AJRY (Acemoglu, Johnson, Robinson, and Yared, 2008) (equation \eqref{eq:empirica:dpd}). We consider first-differenced dynamic panel GMM estimator with standard errors and bootstrap SD (in parentheses). We report the cluster-robust version of SEs and consider the cluster (or pairs) bootstrap for the bootstrap SD and percentile CI with $B=1{,}000$  bootstrap draws. ``Conv. SE'' is the conventional (Arellano--Bond SE in HL). ``Misspec. SE'' is the misspecification-robust GMM SEs. ``ME efficiency bound'' is the oracle ME standard error $\widehat V_{ME}(W)^{1/2}/\sqrt{n}$ with $V_{ME}(W)=(G'\Sigma_{11,2}^{-1}G)^{-1}$ as in Corollary~\ref{cor:optimal_matrix}).  ``ME-GMM Boot SD'' is the SD of the bootstrap ME-GMM estimator, and we use MAD-based robust scale $1.4826\times\mathrm{median}|b-\mathrm{med}(b)|$. ``DR 95\% CI'' is the double-recentered percentile interval and  is reported in bracket. 
\end{flushleft}
\end{table}

\begin{table}[p]
\footnotesize
\centering
\caption{Dynamic panel: CJSV 5-year income--democracy with institutional interactions}
\label{tab:emp_cjsv}
\begin{tabular}{l l ccc}
\toprule
 & & \multicolumn{3}{c}{Weight matrix} \\
\cmidrule(lr){3-5}
Coefficient &  & $I$ (one-step) & AB (one-step) & $\widehat\Sigma_{11}^{-1}$ (two-step) \\
\midrule
\multicolumn{5}{l}{\textit{Constraints ($N=79$, $m=56$)}} \\
$\gamma$ (Inc$_{t-1}$)  & GMM point    & $-0.564$ & $-0.417$ & $-0.400$ \\
            & Conv. SE & $(0.316)$ & $(0.194)$ & $(0.093)$ \\
            & Misspec. SE   & $(0.897)$ & $(0.221)$ & $(0.117)$ \\
            & ME efficiency bound & $(0.091)$ & $(0.091)$ & $(0.091)$ \\
            & ME-GMM Boot SD & $(0.131)$ & $(0.124)$ & $(0.127)$ \\
            & DR 95\% CI & [-3.22,1.14] & [-1.07,0.00] & [-0.80,-0.03] \\
$\beta$ (Inc$_{t-1} \times c_i$) & GMM point    & $0.453$ & $0.345$ & $0.294$ \\
            & Conv. SE & $(0.196)$ & $(0.162)$ & $(0.066)$ \\
            & Misspec. SE   & $(0.353)$ & $(0.168)$ & $(0.060)$ \\
            & ME efficiency bound & $(0.151)$ & $(0.151)$ & $(0.151)$ \\
            & ME-GMM Boot SD & $(0.260)$ & $(0.226)$ & $(0.242)$ \\
            & DR 95\% CI & [-0.37,1.63] & [0.03,0.82] & [-0.08,0.49] \\
\multicolumn{2}{l}{$J$ $p$-value} & 0.026 & 0.032 & 0.030 \\
\midrule
\multicolumn{5}{l}{\textit{Independence ($N=99$, $m=56$)}} \\
$\gamma$ (Inc$_{t-1}$)  & GMM point    & $-0.309$ & $-0.270$ & $-0.085$ \\
            & Conv. SE & $(0.206)$ & $(0.113)$ & $(0.067)$ \\
            & Misspec/ SE   & $(0.447)$ & $(0.134)$ & $(0.070)$ \\
            & ME efficiency bound & $(0.082)$ & $(0.082)$ & $(0.082)$ \\
            & ME-GMM Boot SD & $(0.090)$ & $(0.096)$ & $(0.113)$ \\
            & DR 95\% CI & [-1.45,0.67] & [-0.58,-0.02] & [-0.28,0.11] \\
$\beta$ (Inc$_{t-1} \times c_i$) & GMM point    & $0.257$ & $0.224$ & $0.173$ \\
            & Conv. SE & $(0.147)$ & $(0.121)$ & $(0.054)$ \\
            & Misspec. SE   & $(0.145)$ & $(0.125)$ & $(0.047)$ \\
            & ME efficiency bound & $(0.136)$ & $(0.136)$ & $(0.136)$ \\
            & ME-GMM Boot SD & $(0.211)$ & $(0.201)$ & $(0.156)$ \\
            & DR 95\% CI & [-0.12,0.73] & [-0.06,0.55] & [0.05,0.33] \\
\multicolumn{2}{l}{$J$ $p$-value} & 0.017 & 0.029 & 0.037 \\
\midrule
\multicolumn{5}{l}{\textit{No Late Colonial ($N=100$, $m=56$)}} \\
$\gamma$ (Inc$_{t-1}$)  & GMM point    & $-0.346$ & $-0.303$ & $-0.151$ \\
            & Conv. SE & $(0.200)$ & $(0.110)$ & $(0.071)$ \\
            & Misspec. SE   & $(0.417)$ & $(0.121)$ & $(0.056)$ \\
            & ME efficiency bound & $(0.081)$ & $(0.081)$ & $(0.081)$ \\
            & ME-GMM Boot SD & $(0.092)$ & $(0.090)$ & $(0.085)$ \\
            & DR 95\% CI & [-1.43,0.39] & [-0.62,-0.07] & [-0.32,0.03] \\
$\beta$ (Inc$_{t-1} \times c_i$)  & GMM point    & $0.367$ & $0.318$ & $0.299$ \\
            & Conv. SE & $(0.168)$ & $(0.122)$ & $(0.061)$ \\
            & Misspec. SE   & $(0.176)$ & $(0.130)$ & $(0.055)$ \\
            & ME efficiency bound & $(0.129)$ & $(0.129)$ & $(0.129)$ \\
            & ME-GMM Boot SD & $(0.197)$ & $(0.205)$ & $(0.154)$ \\
            & DR 95\% CI & [0.02,0.97] & [0.04,0.63] & [0.17,0.46] \\
\multicolumn{2}{l}{$J$ $p$-value} & 0.031 & 0.090 & 0.148 \\
\bottomrule
\end{tabular}
\begin{flushleft}
\scriptsize
\textit{Notes.} Same model as in Table \ref{tab:emp_ajry} (equation \eqref{eq:empirica:dpd}), augmented with the additional interaction terms $ x_{i t-1} \times  c_i$, where $c_i \in \{Constraints, Independence, No Late Colonial \}$ is a country-specific dummy for ``historically strong institutions" as in CJSV (Cervellati, Jung, Sunde, and Vischer, 2014).
\end{flushleft}
\end{table}

\end{document}